\newcommand{\alink}[4]
{\draw[decoration={markings,mark=at position 0.6 with {\arrow[scale=1.5,>=stealth]{>}}},postaction={decorate}] (#1) -- node[#3,pos=.5]{$#4$}(#2)}
\newcommand{\link}[2]
{\draw[decoration={markings,mark=at position 0.6 with {\arrow[scale=1.5,>=stealth]{>}}},postaction={decorate}] (#1) --(#2)}
\numberwithin{paragraph}{subsection}
\newcommand{\C}{{\mathbb C}}
\newcommand{\Q}{{\mathbb Q}}
\newcommand{\N}{{\mathbb N}}
\newcommand{\R}{{\mathbb R}}
\newcommand{\Z}{{\mathbb Z}}
\newcommand{\ch}{\mathrm{ch}}
\newcommand{\sh}{\mathrm{sh}}
\newcommand{\cA}{{\mathcal A}}
\newcommand{\cB}{{\mathcal B}}
\newcommand{\cE}{{\mathcal E}}
\newcommand{\cF}{{\mathcal F}}
\newcommand{\cI}{{\mathcal I}}
\newcommand{\cH}{{\mathcal H}}
\newcommand{\cM}{{\mathcal M}}
\newcommand{\cO}{{\mathcal O}}
\newcommand{\cP}{{\mathcal P}}
\newcommand{\cT}{{\mathcal T}}
\newcommand{\cV}{{\mathcal V}}
\newcommand{\cD}{{\mathcal D}}
\newcommand{\cC}{{\mathcal C}}
\newcommand{\cS}{{\mathcal S}}
\newcommand{\cU}{{\mathcal U}}
\newcommand{\SU}{\mathrm{SU}}
\newcommand{\U}{\mathrm{U}}
\newcommand{\be}{\begin{equation}}
\newcommand{\ee}{\end{equation}}
\newcommand{\beq}{\begin{eqnarray}}
\newcommand{\eeq}{\end{eqnarray}}
\newcommand{\bes}{\begin{eqnarray}}
\newcommand{\ees}{\end{eqnarray}}
\newcommand{\mat} [2] {\left ( \begin{array}{#1}#2\end{array} \right ) }
\newcommand{\su}{{\mathfrak{su}}}
\def\arcosh{\mathrm{arcosh}}
\newcommand{\la}{\langle}
\newcommand{\ra}{\rangle}
\newcommand{\tr}{{\mathrm{Tr}}}
\newcommand{\f}{\frac}
\newcommand{\tl}{\widetilde}
\def\nn{\nonumber}
\def\pp{\partial}
\def\tG{\widetilde{G}}
\def\vphi{\varphi}
\def\eps{\epsilon}
\newcommand{\id}{\mathbb{I}}
\def\act{\triangleright}
\def\vu{\vec{u}}
\def\vsigma{\vec{\sigma}}
\def\vJ{\vec{J}}
\def\om{\omega}
\def\rd{\mathrm{d}}
\newcommand{\PR}{{\text{PR}}}
\def\tl{\tilde{l}}
\def\vcC{\vec{\cC}}
\def\hh{h^{(h)}}
\def\hv{h^{(v)}}
\newtheorem{theo}{Theorem}[section]
\newtheorem{lemma}[theo]{Lemma}
\newtheorem{prop}[theo]{Proposition}
\theoremstyle{definition}
\theoremstyle{remark}
\tikzset{->-/.style={decoration={
			markings,
			mark=at position #1 with {\arrow{>}}},postaction={decorate}}}	
\begin{document}

\title{Non-Perturbative 3D Quantum Gravity: \vspace*{1mm}\\ Quantum Boundary States \& Exact Partition Function}

\author{{\bf Christophe Goeller}}\email{christophe.goeller@ens-lyon.fr}
\affiliation{Laboratoire de Physique, ENS Lyon, CNRS-UMR 5672, 46 all\'ee d'Italie, Lyon 69007, France}
\affiliation{Perimeter Institute, 31 Caroline St North, Waterloo ON, Canada N2L 2Y5}
\author{{\bf Etera R. Livine}}\email{etera.livine@ens-lyon.fr}
\affiliation{Laboratoire de Physique, ENS Lyon, CNRS-UMR 5672, 46 all\'ee d'Italie, Lyon 69007, France}
\affiliation{Perimeter Institute, 31 Caroline St North, Waterloo ON, Canada N2L 2Y5}
\author{{\bf Aldo Riello}}\email{ariello@perimeterinstitute.ca}
\affiliation{Perimeter Institute, 31 Caroline St North, Waterloo ON, Canada N2L 2Y5}

\date{\today}

\begin{abstract}

We push forward the investigation of holographic dualities in 3d quantum gravity formulated as a topological quantum field theory, studying the correspondence between boundary and bulk structures. Working with the Ponzano-Regge topological state-sum model defining an exact discretization of 3d quantum gravity, we analyze how the partition function for a solid twisted torus depends on the  boundary quantum state. This configuration is relevant to the AdS$_{3}$/CFT$_{2}$ correspondence. We introduce boundary spin network states with coherent superposition of spins on a square lattice on the boundary surface.
This allows for the first exact analytical calculation of Ponzano-Regge amplitudes with  extended 2D boundary (beyond the single tetrahedron). We get a regularized finite truncation of the BMS character formula obtained from the one-loop perturbative quantization of 3d gravity. This hints towards the existence of an underlying symmetry and the integrability of the theory for finite boundary at the quantum level for  coherent boundary spin network  states.

\end{abstract}

\maketitle
\tableofcontents

\section*{Introduction}

The holographic principle has become the main point of convergence for the various approaches to quantum gravity.
%
The key insight is that  the dynamics of the quantum geometry in a  space-time region can be entirely encoded in the theory induced on the boundary and thereby be faithfully represented by boundary observables.
This thread has  developed from black-hole thermodynamics and the  discovery of the area-entropy law.
In the last two decades these ideas have been substantially deepened on the one hand by the study of the AdS/CFT correspondence from a string theory perspective, and on the other by the study of quantum gravity as an almost-topological quantum field theory from a loop quantum gravity perspective.
A crucial aspect of the holographic principle is that it interlaces the quantum dynamics of gravity with the renormalization flow of quantum geometries and the physics of gravitational edge modes. This materializes into holographic dualities between bulk and boundary theories, which could ultimately provide a non-perturbative definition of quantum gravity. 

Here we  push  further the investigations of  holographic dualities in non-perturbative 3d quantum gravity as initiated in \cite{Short,PRholo1,PRholo2}. Gravity in three space-time dimensions is indeed a topological field theory \cite{Witten:1988hc,Witten:1989sx}, for which gauge-invariant bulk observables should be entirely determined by the space-time topology and appropriate boundary  conditions. This makes it the perfect arena to explore possible realizations of the holographic principle in quantum gravity and the related bulk-boundary dualities \cite{Carlip:2005zn}. 

Formulated as a topological quantum field theory (TQFT), 3D quantum gravity can be exactly discretized and quantized. This is realized by the Turaev-Viro topological invariant \cite{Turaev:1992hq}, whose relation to the Reshetikhin-Turaev invariant \cite{Reshetikhin:1990pr,Reshetikhin:1991tc} reflects the relation between 3D gravity and Chern-Simons theories in the continuum \cite{Witten:1988hc,Witten:1989sx}.
Formulated as a state-sum model, the Turaev-Viro theory reduces to the Ponzano-Regge state-sum model \cite{PR,Regge:2000wu,Freidel:2004vi,Barrett:2008wh,Freidel:2005bb} in the vanishing cosmological constant case  \cite{Ooguri:1991ni,Freidel:2005bb}, which can thus be understood as 3D quantum gravity at $\Lambda =0$.  The interested reader will find more details between the Ponzano-Regge state-sum and topological invariants of 3-manifolds in  \cite{Barrett:2008wh, Freidel:2004nb}.

The Ponzano-Regge model is an intrinsically discrete approach to 3D quantum gravity. We define the partition function and amplitudes on 3D triangulations, or more generally 3D cellular complexes. The topological invariance means that the partition function does not depend on the precise bulk discretization, but solely on its topology and boundary conditions.
The boundary states are defined on the 2d boundary cellular complex. They are defined, both for Ponzano-Regge and Turaev-Viro, as spin networks living on the graph dual to the 2d boundary triangulation, i.e. we dress the links and nodes of the boundary graph with respectively $\SU(2)$ representations and group invariants. This framework allows to study 3D quantum gravity from a quasi-local point of view with arbitrary quantum boundaries, without restricting to   (semi-)classical boundary conditions or asymptotic boundary states at infinity.

Following this line of research developed in \cite{Short,PRholo1,PRholo2,Riello:2018anu}, we propose here to use a class of coherent boundary states, defined as quantum superpositions of spins---thus lengths---admitting a critical regime with a manifest scale invariance in the semi-classical limit.
Introduced in \cite{Freidel:2012ji,Bonzom:2012bn, Bonzom:2015ova,Girelli:2017dbk}, we show that these coherent spin network wave-packets allow for an exact evaluation of the Ponzano-Regge amplitudes with quantum boundary. Applying this to the case of a solid twisted torus, relevant to the AdS${}_{3}$/CFT${}_{2}$ correspondence and the BTZ black hole, we recover a regularized version of the  character formula for the Bondi-Metzner-Sachs (BMS) group  formally defined as a modular form in terms of the Dedekind $\eta$-function. This regularized BMS character formula establishes a clear bridge between the Ponzano-Regge framework and the other approaches to computing the 3D quantum gravity partition function, either from perturbative renormalization of 3D gravity or from the vanishing cosmological constant limit of the CFT formulas based on the Virasoro group  \cite{Barnich:2015mui,Oblak:2015sea}.
%
%
At the end of the day, the surprising simplification of the final result for coherent boundary states points towards the existence of powerful discrete symmetries on the boundary of 3D quantum gravity, which should allow to control the continuum limit of the Ponzano-Regge model.

\medskip

In the first section, we will give a quick review of the Ponzano-Regge state-sum and its boundary spin network states. In particular, we will recall that the resulting partition function does not depend on the 3D bulk triangulation but only depends on the boundary state. The second section will introduce coherent spin network wave-packets, as quantum superpositions of spins on the boundary. We will clarify their geometrical interpretation and discuss their scale invariance and critical regime in the semi-classical limit. We will present the main result of the paper in the third section. We will focus on the solid twisted torus and provide the exact analytical computation of the 3D quantum gravity amplitude for anisotropic coherent boundary spin networks defined on a square lattice on the boundary torus. We will show that we obtain a totally regularized version of the BMS character formula, which leads back to the standard BMS formula in the asymptotic limit for the lattice size.
We will conclude with a discussion of possible extensions of our methods and of their relation to the existence of discrete symmetries of the Ponzano-Regge boundary theory and its potential integrability.

\section{Ponzano-Regge state-sum with boundary state}

In 1968, Ponzano and Regge introduced a model to describe 3D quantum gravity without cosmological constant based on the fact that the algebraic structures of the representation theory of the $\SU(2)$ Lie group can be interpreted as quantum version of elements of 3D geometry and that, as a consequence, the 6j spin recoupling symbols reproduced the Regge action for a  tetrahedron in the large spin limit \cite{PR} (see \cite{Barrett:2008wh} for a modern presentation).
It was later understood by Rovelli in the 90's that the Ponzano-Regge state-sum realizes a discrete path integral for 3D gravity, interpreted as a sum over histories of spin network states for 3D loop quantum gravity  \cite{Rovelli:1993kc}.
From this perspective, the Ponzano-Regge state-sum is a quantized version of Regge calculus for discretized 3D gravity \cite{Regge:2000wu}.
It is possible to derive the model directly as a path integral for 3D gravity as a topological field theory (of the BF type) \cite{Freidel:2004vi,Freidel:2005bb} or as a canonical quantization of 3D gravity  \cite{Noui:2004iy,Noui:2004iz,Bonzom:2011hm,Bonzom:2011nv}.
The model was also generalized to the case of a non vanishing cosmological constant through a $q$-deformation of the $\SU(2)$ Lie group by Turaev and Viro \cite{Turaev:1992hq,Mizoguchi:1991hk,Freidel:1998ua,Bonzom:2014bua}, to a Lorentzian signature \cite{Davids:1998bp,Freidel:2000uq,Davids:2000kz}, and to the four-dimensional case either as a standard $BF$ gauge theory \cite{Ooguri:1992eb,Crane:1993if,Crane:1994ji} or as a higher-gauge generalization thereof \cite{Baratin:2006gy,Baratin:2014era,Asante:2019lki}.

In this section, we review the basics of the Ponzano-Regge partition function for a 3D cellular complex and discuss 2D boundary states and the resulting quantum amplitudes.

\subsection{Quantizing 3d Gravity as Discretized BF Theory}

Classical 3D gravity in its first order formulation is defined in terms of a triad 1-form $e^{a}=e^{a}_{\mu}dx^{\mu}$ valued in the Lie algebra $\su(2)$ and a $\su(2)$-connection $\omega$.
%
%
Its action for a closed 3-manifold $\cM$, and a vanishing cosmological constant $\Lambda=0$, reads:
\be
S_{3D}[e,\omega]
=
\int   e \wedge F[\omega] 
\equiv
\int_{\cM} d^{3}x \; \tr( e \wedge F[\omega] )
\,,
\ee
where $F[\omega]=\text{d}\omega + \omega \wedge \omega$ is the $\su(2)$ valued 2-form curvature  of the connection.
This is the action for a topological field theory of the BF type, where the frame field $e$ plays the role of the B field.
The theory has no local degree of freedom in the bulk and its path integral is a topological invariant related to the Ray-Singer torsion \cite{Blau:1989dh,Blau:1989bq} and the Reshetikhin-Turaev invariant \cite{Freidel:2004nb},
\be
\label{ZBF}
Z[\cM] = \int \cD e \cD \omega \; e^{-i S_{3D}[e,\omega]}
=
\int \cD e \cD \omega \; e^{-i \int e \wedge F[\omega]}
\ee
One can integrate\footnotemark{} over the frame field $e$, which plays the role of a Lagrange multiplier enforcing the flatness of the connection $F[\omega]=0$. Formally, this leads to:
\begin{equation}
	Z[\cM] = \int \cD\mu[ \omega] \, \delta(F[\omega]).
	\label{eq:BF_integrated_e}
\end{equation}
\footnotetext{%
	Instead of integrating over the frame field $e$, one could instead integrate over the connection $\om$. Solving for the connection in terms of $e$ gives the Levi-Civita connection. Integrating over $\omega$ amounts to plugging this solution back in the action, yielding the usual Einstein-Hilbert action for 3D gravity in terms of the metric ${}^{3D}g_{\mu\nu}=e^{a}_{\mu}e^{b}_{\nu}\delta_{ab}$.
}
In order to cleanly perform this integration, one needs to suitably gauge-fix the action and path integral. Introducing the ghosts and BRST formalism leads to the measure $\mu[ \omega]$ over the moduli space of flat connection given by the Ray-Singer torsion \cite{Blau:1989dh,Blau:1989bq}. The Ponzano-Regge model is the discrete equivalent of this computation and can be written in terms of the  Reidemeister torsion for a twisted cohomology, which is shown to be equal to the Ray-Singer torsion \cite{Barrett:2008wh}.

\medskip

To give a precise meaning to the partition function \eqref{ZBF}, the natural method is to discretize it. There are two ways to proceed.
Historically, one focuses on the frame field $e$ and the associated metric. Working on a 3D cellular complex, we associate algebraic structures from the representation theory of the $\SU(2)$ Lie group to geometrical elements. For instance, an irreducible representation is associated to each edge of the cellular complex such that its spin gives the edge length in Planck unit. Then invariant tensors intertwining between representations are associated to each face and finally an amplitude constructed from the representations and intertwiners is associated to every 3D cell and ultimately to the whole 3D cellular complex.

The second path is to focus on the connection $\om$ and to treat 3D gravity as a gauge theory, to be discretized as a lattice gauge theory. Working on the dual 3D cellular complex, the connection is discretized into $\SU(2)$ group elements running along every dual edge (thus going through every face). The curvature is discretized into the $\SU(2)$ holonomies going along the boundary of every dual face (thus going around every edge, intuitively measuring the deficit angle around the edge \`a la Regge). The discrete path integral is then defined as the integral over the discretized connection of the product of $\delta$-distributions constraining  all these $\SU(2)$ holonomies to be  the identity, i.e. constraining the discrete connection to be  flat.

It can be shown that these two methods are ultimately equivalent and yield the same Ponzano-Regge amplitudes. 
In both approaches, the key ingredient  to fix every possible ambiguities consists in requiring  that the model be topologically invariant, i.e. that the resulting amplitudes do not depend on the details of the cellular complex but only on its overall topology. In the case that the 3D cellular complex is simplicial, i.e. a 3D triangulation, this amounts to ensuring that the model be invariant under Pachner moves. This strong criteria is enough to ensure the uniqueness of the model at the end of the day.

\medskip

Let us explain in more details these two ways to construct the Ponzano-Regge amplitudes.
We work on an 3D oriented cellular complex $\Delta$, which can be constructed as a cellular decomposition of a 3D oriented compact manifold $\cM$. Let us label the four layers of cells as $v$, $e$, $f$ and $\sigma$, for the 0-cells (vertices), 1-cells (edges), 2-cells (faces) and 3-cells, respectively. The dual cellular complex is again an oriented 3D complex, which we call $\Delta^*$, and is made of dual vertices $v^{*}$, dual to the 3-cells, dual edges $e^{*}$, dual to the faces, 2D plaquettes $p$, dual to the edges, and 3D bubbles $b$, dual to the original vertices. We summarize these notations in the following table:
\begin{center}
	\begin{tabular}{|c|c|c|c|}
		\hline
		~~dimension (in $\Delta$)~~ & ~~co-dimension (in $\Delta^{*})~~$  & ~~~~~~$\Delta$~~~~~~ & ~~~~~$\Delta^{*}$~~~~~ \\ \hline
		0& 3  & $v$ & $b$ \\ \hline
		1& 2  & $e$ & $p$ \\ \hline
		2& 1  & $f$ & $e^{*}$ \\ \hline
		3& 0  & $\sigma$ & $v^{*}$ \\
		\hline
	\end{tabular}
\end{center}

\medskip

{\bf Ponzano-Regge as a State-Sum:}
The first method to define the Ponzano-Regge model is through a quantization of the geometrical elements of the 3D cellular complex $\Delta$, associating to each cell an element from the representation theory of the $\SU(2)$ Lie group depending on the dimension of the considered cell.
To each edge $e$ is attached a $\SU(2)$ irreducible representation, i.e. a spin $j_{e}\in\f\N2$ defining the $\SU(2)$ irreducible representation based on a Hilbert space $\cV^{j_{e}}$ with dimension $\dim \cV^{j_{e}}=2j_{e}+1$. States in the Hilbert space $\cV^{j}$ can be interpreted as  quantum vectors of norm $j$. The standard basis states $|j,m\ra$ are labeled by the magnetic moment $m$ running by integer steps from $-j$ to $+j$. Another interesting basis is the overcomplete basis of coherent states, $g\,|j,j\ra$ for $g\in\SU(2)$, which minimize the uncertainty for the $\su(2)$ generators  and can be interpreted as semi-classical 3D vectors (see e.g. \cite{Dupuis:2010iq}).

Then, each face $f$ of the cellular complex $\Delta$ forms a polygon, whose boundary is a sequence of edges $e\in\pp f$. We associate to the face the space of intertwiners, i.e. $\SU(2)$-invariant states or singlets, between the spins attached to its boundary edges,
\be
\cH_{f}[\{j_{e}\}]
:=
\textrm{Inv}_{\SU(2)}\left[
\bigotimes_{e\in\pp f}\cV^{{j_{e}}}
\right]
\,,
\qquad
\dim \cH_{f}
=
\int_{\SU(2)} \rd g\,
\prod_{e\in\pp f}\chi_{j_{e}}(g)
\,,
\ee
where the characters, $\chi_{j}(g)=\tr\,D^{j}(g)$, are the traces of the Wigner matrices representing the group elements $g\in\SU(2)$ in the corresponding representation of spin $j$.
%
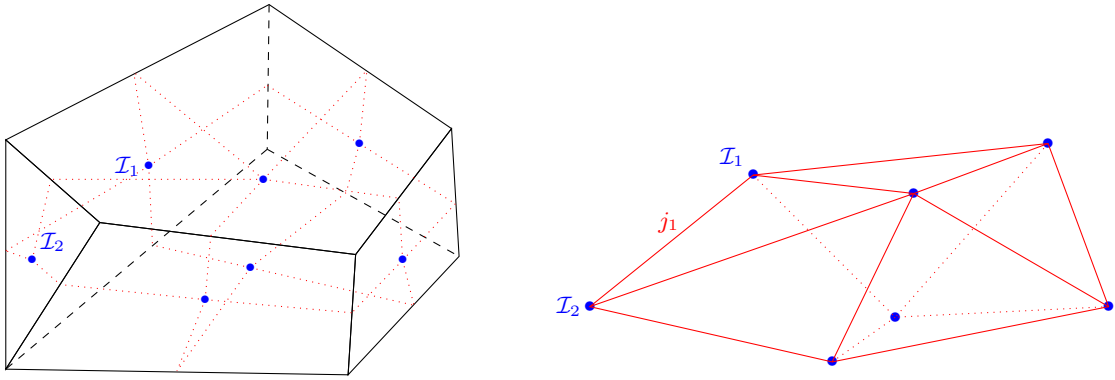
\begin{figure}[h!]
	\begin{subfigure}[t]{.4\linewidth}
		\begin{tikzpicture}[scale=2.5]
		\centering
		\coordinate (A1) at (0.31,1.66);
		\coordinate (A2) at (-1.09,0.94);
		\coordinate (A3) at (-0.59,0.5);
		\coordinate (A4) at (0.77,0.33);
		\coordinate (A5) at (1.28,1);
		\coordinate (A6) at (-1.09,-0.28);
		\coordinate (A7) at (0.73,-0.31);
		\coordinate (A8) at (1.32,0.32);
		\coordinate (A9) at (0.3,0.89);
		
		\coordinate (C1) at (-0.33,0.8);
		\coordinate (C2) at (-0.95,0.3);
		\coordinate (C3) at (-0.03,0.09);
		\coordinate (C4) at (1.02,0.3);
		\coordinate (C5) at (0.79,0.92);
		\coordinate (C6) at (0.28,0.73);
		\coordinate (C7) at (0.21,0.26);
		
		\coordinate (1E1) at (0.3,1.23);
		\coordinate (1E2) at (-1.09,0.35);
		\coordinate (1E3) at (-0.41,1.3);
		\coordinate (1E4) at (-0.31,0.38);
		\coordinate (2E1) at (-0.85,0.73);
		\coordinate (2E2) at (-0.8,0.17);
		\coordinate (3E1) at (0.0,0.43);
		\coordinate (3E2) at (-0.18,-0.29);
		\coordinate (3E3) at (0.75,0.02);
		\coordinate (4E1) at (1,0.63);
		\coordinate (4E2) at (1.08,0.06);
		\coordinate (4E3) at (1.3,0.6);
		\coordinate (5E1) at (0.72,0.66);
		\coordinate (5E2) at (0.83,1.31);
		
		\draw (A1)--(A2)--(A3)--(A4)--(A5)--cycle;
		\draw (A2)--(A3)--(A6)--cycle;
		\draw (A6)--(A3)--(A4)--(A7)--cycle;
		\draw (A4)--(A5)--(A8)--(A7)--cycle;
		\draw[dashed] (A1)--(A9); \draw[dashed] (A9)--(A8); \draw[dashed] (A9)--(A6);
		
		\draw[red,dotted] (C1)--(1E1); \draw[red,dotted] (C1)--(1E2); \draw[red,dotted] (C1)--(1E3); \draw[red,dotted] (C1)--(1E4);   
		\draw[red,dotted] (C2)--(2E1); \draw[red,dotted] (C2)--(2E2); \draw[red,dotted] (C2)--(1E2);
		\draw[red,dotted] (C3)--(3E1); \draw[red,dotted] (C3)--(3E2); \draw[red,dotted] (C3)--(3E3); \draw[red,dotted] (C3)--(2E2);
		\draw[red,dotted] (C4)--(4E1); \draw[red,dotted] (C4)--(4E2); \draw[red,dotted] (C4)--(4E3); \draw[red,dotted] (C4)--(3E3);
		\draw[red,dotted] (C5)--(5E1); \draw[red,dotted] (C5)--(5E2); \draw[red,dotted] (C5)--(1E1); \draw[red,dotted] (C5)--(4E3);
		\draw[red,dotted] (C6)--(1E3); \draw[red,dotted] (C6)--(2E1); \draw[red,dotted] (C6)--(3E1); \draw[red,dotted] (C6)--(4E1); \draw[red,dotted] (C6)--(5E2);
		\draw[red,dotted] (C7)--(1E4); \draw[red,dotted] (C7)--(3E2); \draw[red,dotted] (C7)--(4E2); \draw[red,dotted] (C7)--(5E1);
		
		\draw[blue] (C1) node[scale=0.8]{$\bullet$}; \draw[blue] (C1) node[left]{$\cI_1$};
		\draw[blue] (C2) node[scale=0.8]{$\bullet$}; \draw[blue] (C2) node[above right]{$\cI_2$};
		\draw[blue] (C3) node[scale=0.8]{$\bullet$};
		\draw[blue] (C4) node[scale=0.8]{$\bullet$};
		\draw[blue] (C5) node[scale=0.8]{$\bullet$};
		\draw[blue] (C6) node[scale=0.8]{$\bullet$};
		\draw[blue] (C7) node[scale=0.8]{$\bullet$};
		\end{tikzpicture}
		\caption{Boundary graph to the boundary of a 3-cell $\pp\sigma$, represented in black. The blue dots are the duals of the boundary faces, carrying intertwiners, while the red dotted line are the dual of the edges, carrying spins.}
		
	\end{subfigure}
	\hspace{4mm}
	\begin{subfigure}[t]{.4\linewidth}
		\centering
		\begin{tikzpicture}[scale=3.5]
		\coordinate (C1) at (-0.33,0.8);
		\coordinate (C2) at (-0.95,0.3);
		\coordinate (C3) at (-0.03,0.09);
		\coordinate (C4) at (1.02,0.3);
		\coordinate (C5) at (0.79,0.92);
		\coordinate (C6) at (0.28,0.73);
		\coordinate (C7) at (0.21,0.26);
		
		\draw[blue] (C1) node{$\bullet$}; \draw[blue] (C1) node[above left]{$\cI_1$};
		\draw[blue] (C2) node{$\bullet$}; \draw[blue] (C2) node[left]{$\cI_2$};
		\draw[blue] (C3) node{$\bullet$};
		\draw[blue] (C4) node{$\bullet$};
		\draw[blue] (C5) node{$\bullet$};
		\draw[blue] (C6) node{$\bullet$};
		\draw[blue] (C7) node{$\bullet$};
		
		\draw[red] (-0.65,0.55) node[above]{$j_1$};
		\draw[red] (C1)--(C2)--(C3)--(C4)--(C5)--(C1);
		\draw[red] (C1)--(C6); \draw[red] (C2)--(C6); \draw[red] (C3)--(C6); \draw[red] (C4)--(C6); \draw[red] (C5)--(C6);
		\draw[red,dotted] (C1)--(C7); \draw[red,dotted] (C3)--(C7); \draw[red,dotted] (C4)--(C7); \draw[red,dotted] (C5)--(C7);
		\end{tikzpicture}
		\caption{The boundary spin network of a 3-cell, dressed with spins $j_l$ (on the red links) and intertwiners $\cI_{f}$ (on the blue dots). From the spin network viewpoint, the links between the intertwiners are straight.}
	\end{subfigure}
	
	\caption{Representation of the boundary graph for a 3-cell (a) and the associated spin network (b).}
	\label{fig:boundaryspinnet}
\end{figure}

The next step is attributing an amplitude to each 3-cell $\sigma$. The boundary of the 3-cell forms a polyhedron, made of the faces $f\in\pp\sigma$ glued together by their shared edges $e\in\pp\sigma$. Each edge $e$ carries its spin $j_{e}$, while each face $f$ carries an intertwiner state $\cI_{f}\in\cH_{f}[\{j_{e}\}_{e\in\pp f}]$. It is customary to formalize this in terms of the  topological dual of the 3-cell boundary $\pp \sigma$, called the boundary graph, on which lives the boundary spin network as illustrated in figure \ref{fig:boundaryspinnet}.
The nodes of this boundary spin network correspond to the boundary faces and are dressed with the intertwiner states $\cI_{f}$, while the links of the boundary spin network correspond to the boundary edges and are dressed with the spins $j_{e}$. We then define the evaluation of the boundary spin network of the 3-cell as the contraction\footnotemark{} of the intertwiner states along the boundary graph,
\footnotetext{%
A way to avoid using the structure maps and clarify the role of orientations on the boundary $\pp\sigma$ is to orient the intertwiners themselves. On the oriented boundary spin network dual to $\pp\sigma$, the nodes intertwine in a $\SU(2)$-invariant way between the incoming links and the outgoing links:
\be
\cI_{n}\,:\,\, \bigotimes_{l,t(l)=n}\cV^{{j_{l}}}\rightarrow\bigotimes_{l,s(l)=n}\cV^{{j_{l}}}
\,,
\nn
\ee
where $s(l)$, resp. $t(l)$, denotes the source, resp. target, of the link $l$ on the boundary graph. It is straightforward to contract such oriented intertwiners states together by tracing over the spin spaces:
\be
\cE\big{[}j_{l},\cI_{n}\big{]}
=
\tr_{\{\cV^{j_{l}}\}}\bigotimes_{n}\cI_{n}
=
\sum_{\{m^{s,t}_{l}\}}
\prod_{n}\left\la \{j_{l},m^{s}_{l}\}_{s(l)=n}|\cI_{n}|\{j_{l},m^{t}_{l}\}_{t(l)=n}\right\ra
\,.
\nn
\ee
}
i.e. the trace over the spin Hilbert spaces $\cV^{(j_{e})}$ with the structure maps $\eps_{j}$ switching the orientation\footnotemark{} along the links as $\eps_{j}\,|j,m\ra=(-1)^{j-m}\,|j,-m\ra$:
\be
\cE_{\pp\sigma}\big{[}j_{e\in\pp\sigma},\cI_{f\in\pp\sigma}\big{]}
\,=\,
\tr_{\bigotimes_{e}\cV^{j_{e}}}\,
\left[
\bigotimes_{e\in\pp\sigma}\eps_{j_{e}}
\otimes
\bigotimes_{f\in\pp\sigma}\cI_{f}
\right]
\,.
\ee
\footnotetext{%
Despite the structure maps, the orientation of the boundary graph is still relevant. Indeed, the structure maps squares to minus the identity, $\eps_{j}^{2}=-\id$ on $\cV^{j}$. This leads to a sign ambiguity $(-1)^{2j}$ in the definition of the boundary spin network evaluation. Once the 3-cell amplitudes are glued back together to define the overall amplitude for the whole 3D cellular complex, this sign ambiguity disappears as long as the 3-cell boundaries are all consistently oriented, inherited from the edge orientations and a clockwise planar orientation around every 3-cell boundary.
}
Let us summarize the hierarchy of the correspondence between geometric elements and algebraic objects in the following table\footnotemark:
\begin{center}
	\begin{tabular}{|c|c|rl|}
		\hline
		~~dimension (in $\Delta$)~~ &   ~~~~~~$\Delta$~~~~~~ & \multicolumn{2}{c|}{$\SU(2)$ representation theory} \\ \hline
		0&  $v$ & & \\ \hline
		1&$e$ & spin\,\,& $j_{e}\in\N/2$\\ \hline
		2& $f$ & intertwiner\,\, &$\cI_{f}\in\textrm{Inv}_{\SU(2)}[\otimes_{e\in\pp f} j_{e}]$~ \\ \hline
		3&  $\sigma$ &  ~~ spin network evaluation\,\, &$\cE_{\pp\sigma}[\{j_{e},\cI_{f}\}_{e,f\in\pp\sigma}]$~ \\
		\hline
	\end{tabular}
\end{center}
\footnotetext{
It can seem awkward that no algebraic structure is associated to the vertices of the 3D cellular complex. The situation becomes clearer in the lattice gauge theory picture, where the Bianchi identities live at the vertices and require gauge-fixing \cite{Freidel:2004vi}. Moreover, when introducing defects and generalizing the Ponzano-Regge model to an extended topological theory, they will naturally be topological defects associated to 2D, 1D and 0D structures, respectively interpretable as boundaries, particle world-lines and particle interactions \cite{Freidel:2004vi,Freidel:2005bb}.
}

Finally, the 3-cells are assembled together into the full 3D cellular decomposition. The Ponzano-Regge  amplitude for a closed cellular complex $\Delta$ is defined as a state-sum by summing the product of the 3-cell amplitudes over the spins and taking the trace over the intertwiner spaces:
\be
Z^{\PR}[\Delta]
\,=\,
\sum_{\{j_{e}\}_{e\in\Delta}}\prod_{e} (-1)^{2_{j_e}}(2{j_e}+1)
\prod_{f}(-1)^{\sum_{e\in \pp f}j_{e}}
\,\tr\,\bigotimes_{\sigma\in\Delta}\cE_{\pp\sigma}\big{[}\{j_{e},\cI_{f}\}_{e,f \in\pp\sigma}\big{]}
\,,
\ee
where the trace is taken by summing over an orthonormal basis of the intertwiner spaces associated to the faces for each global assignment of spins $\{j_{e}\}_{e\in\Delta}$.

When the 3D cellular complex is simplicial, i.e. when $\Delta$ is a triangulation, then this definition leads back to the original Ponzano-Regge partition function in terms of the $\{6j\}$-symbol for spin recoupling from the theory of $\SU(2)$ representations. Indeed, all the faces $f$ are now triangles and the corresponding 3-valent intertwiners are unique once the spins are given: they are given by the Clebsh-Gordan coefficients, or equivalently Wigner's 3j-symbols. 3-cells are all tetrahedra, each made of four triangles. A tetrahedron is dressed with 6 spins $j_{e}$ attached to its edges and interpreted as their lengths in Planck unit, $\ell_{e}=j_{e}\,l_{Planck}$. The corresponding amplitude for the tetrahedron is then obtained by contracting the corresponding four Clebsh-Gordan coefficients into a 6j-symbol. The Ponzano-Regge state-sum for the triangulation finally reads as:
\be
Z_{\PR}[\Delta]
=
\sum_{\{j_{e}\}_{e\in\Delta}}
\prod_{e} (-1)^{2_{j_e}}d_{j_e}
\prod_{f}(-1)^{\sum\limits_{e\in \pp f} j_{e}}
\prod_{\sigma} \Big\lbrace 6 j\Big\rbrace_{\sigma}
\,.
\ee
Due to the Biedenharn-Elliot identity (or pentagonal identity) satisfied by the 6j-symbols, this state-sum is invariant under Pachner moves (see e.g.\cite{Barrett:1997,Barrett:2008wh,Bonzom:2009zd})---switching 3 tetrahedra for 2 tetrahedra keeping the same boundary, or switching a single tetrahedron for 4 tetrahedra by adding an extra vertex---and thus depends only on the overall topology of $\Delta$. 

\medskip

{\bf Ponzano-Regge as Lattice Gauge Theory:}
The second method focuses on the connection $\om$ and discretizes 3D gravity as a lattice gauge theory.
The connection is discretized as a set of $\SU(2)$ group elements living on the dual edges in the dual cellular decomposition $\Delta^{*}$.
A group element $g_{e^{*}}$ on the oriented dual edge $e^{*}$ can be equivalently seen as a group element $g_{f}$ going through the face $f$ dual to $e^{*}$. This group element is interpreted as the $\SU(2)$ change of frame between the two 3-cells sharing the face $f$, as illustrated in figure \ref{fig:discrete connection} .
We then switch the flatness constraint $F[\om]=0$ for the requirement that the holonomies of the discrete connection around every dual face, or plaquette, in $\Delta^{*}$ be equal to the holonomy. More precisely, for each edge $e\in\Delta$ and its dual plaquette $p\in\Delta^{*}$, we define the holonomy, as illustrated in figure \ref{fig:discrete connection}:
\be
G_{e\in\Delta}=\overrightarrow{\prod_{f\ni e}} g_{f}
=
\overrightarrow{\prod_{e^{*}\in \pp p}} g_{e^{*}}=G_{p\in\Delta^{*}}
\,.
\ee
\begin{figure}[h!]
\begin{subfigure}[t]{.38\linewidth}
	\centering
		\begin{tikzpicture}[scale=2.5]
			\coordinate (A1) at (-0.17,1.13);
			\coordinate (A2) at (-0.57,0.27);
			\coordinate (A3) at (-0.43,-0.35);
			\coordinate (A4) at (0.21,-0.23);
			\coordinate (A5) at (0.28,0.59);
			\coordinate (C1) at (-0.84,0.75);
			\coordinate (C2) at (0.77,-0.11);
			\coordinate (C) at (-0.12,0.36);
			\coordinate (Cbis) at (-0.45,0.54);

			\draw (C1) node{$\bullet$} node[above]{$s$}; \draw (C2) node {$\bullet$} node[above]{$t$};
			
			\draw[draw=red,fill=red!20] (A1)--(A2)--(A3)--(A4)--(A5)--cycle;
			\draw[->-=0.5] (C1)-- node[pos=0.5,sloped,above]{$e^{*}$} (Cbis);
			\draw[dotted] (Cbis)--(C);
			\draw (C) -- node[pos=0.63,above,sloped]{$g_{e^*}=g_{f}$}(C2);
			\draw[red] (C) node{$\bullet$};
			\draw[red] (A1) node[below=1cm]{$f$};
		\end{tikzpicture}
\caption{$\SU(2)$ group element $g_{e^{*}}=g_{f}$ on the dual edge $e^{*}$, dual to the face (in red) $f$, defining the change of frames between the two 3-cells (dual to the source and target node $s$ and $t$ respectively) sharing the face $f$.}
\end{subfigure}
\hspace{4mm}
\begin{subfigure}[t]{.55\linewidth}
	\centering
		\begin{tikzpicture}[scale=2.5,>=stealth]
			\coordinate (A1) at (-1.1,0.06);
			\coordinate (A2) at (-0.4,-0.2);
			\coordinate (A3) at (0.6,0);
			\coordinate (A4) at (0.72,0.41);
			\coordinate (A5) at (0.05,0.59);
			\coordinate (A6) at (-0.66,0.56);
			
			\coordinate (C1) at (-0.49,1.09);
			\coordinate (C2b) at (0.01,-0.12);
			\coordinate (C2) at (0.32,-0.86);
			\coordinate (C) at (-0.12,0.21);
			
			\draw[fill=black!20] (A1)--(A2)--(A3)--(A4)--(A5)--(A6)--cycle;
			\draw[->-=0.5] (A1)-- node[sloped,below]{$g_1$} (A2); \draw[->-=0.5] (A2)-- node[sloped,below]{$g_2$} (A3); \draw[->-=0.5] (A4)-- node[right]{$g_3$} (A3); \draw[->-=0.5] (A4)-- node[sloped,above]{$g_4$} (A5); \draw[->-=0.3] (A5)-- node[sloped,above,pos=0.3]{$g_5$} (A6); \draw[->-=0.5] (A6)-- node[left]{$g_6$} (A1);
			\draw (A1) node[right=1cm]{$p$};
			\draw[red,thick] (C1)-- node[pos=0.2,right]{$e$}(C); \draw[red,dotted,thick] (C)--(C2b); \draw[red,thick] (C2b)--(C2); \draw[red] (C) node{$\bullet$};
		\end{tikzpicture}

\caption{The $\SU(2)$ holonomy around an edge $e$ is defined as the ordered product of the group elements along the boundary of the plaquette $p$ dual to $e$ and gives a measure of the curvature of the discrete connection. In the example depicted above, considering that the cycle starts with $g_1$ following the orientation of the dual edge $1$, $G_{p} = g_1 g_2 g_3^{-1} g_4 g_5 g_6$.}
\end{subfigure}
\caption{A discrete $\su(2)$-connection as $\SU(2)$ group elements along the dual edges $e^{*}\in\Delta^{*}$.}
\label{fig:discrete connection}
\end{figure}
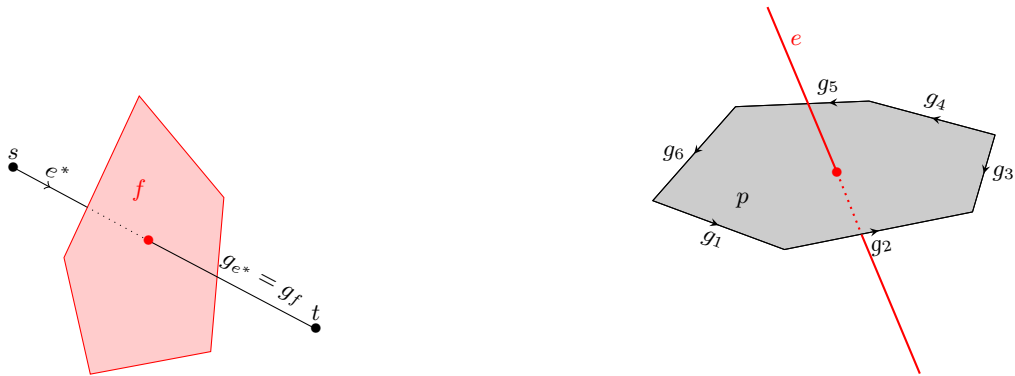%
%
Notice that one should choose a starting point (or root) on the plaquette to define the ordered product around it. However, since we constrained the $G$'s to be equal to the identity, this does not matter for now. 
The partition function is then formulated as a lattice gauge theory on $\Delta^{*}$:
\begin{equation}
Z^{\PR}[\Delta] =
\int[\rd g_{e^{*}}]_{e^{*}\in\Delta^{*}}\,
\prod_{p\in\Delta^{*}}\delta\left(\overrightarrow{\prod_{e^{*}\in \pp p}} g_{e^{*}}\right)
\,.
\label{eq:PR_model_connection}
\end{equation}
The topological invariance of the model is straightforward to show diagrammatically \cite{Girelli:2001wr}.

Upon using the Plancherel decomposition of the $\delta$-distribution over $\SU(2)$ as a sum over spins according to the Peter-Weyl theorem, 
\be
\delta(G)=\sum_{j\in\f\N2}(2j+1)\chi_{j}(G)\,,
\ee
and performing the integration over $\SU(2)$ in terms of intertwiners, one can easily show that the two definitions, as a state-sum or as a lattice gauge theory, of the Ponzano-Regge model exactly match (see \cite{Freidel:2004vi,Barrett:2008wh} for a review). The difference between the two formulations can be understood as a change of basis from frame field (``metric'') to connection.

\medskip

{\bf Divergence and Gauge-Fixing:}
In general, the Ponzano-Regge amplitude for a generic cellular complex $\Delta$ diverges. These divergences emerge due to the (non-compact) shift symmetry of the $BF$ action, which consists of arbitrary shifts of the frame field $e\mapsto e + d_\omega \lambda$  \cite{Freidel:2004vi}. At the discrete level, this translates into a divergence, in $\delta(\id)$ or equivalently $\sum_{j}(2j+1)^{2}$, for each bulk vertex $v\in\Delta$. This is clear in the lattice gauge theory formulation. The bubble dual to a vertex is a closed surface, made of several plaquettes, each of those coming with a $\delta$-distribution $\delta(G_{p})$ enforcing the flatness of the holonomy around it. One of these distributions is redundant due to the Bianchi identity.
Removing one such distribution  at the interface between two adjacent bulk vertices amounts to removing one of the two vertex while merging their corresponding 3-cells; this can be reiterated until one gets a topologically-equivalent cellular complex without any bulk vertex.
Mathematically, this amounts to choosing a maximal tree $T$ on the cellular complex $\Delta$ and removing all the $\delta$-distribution $\delta(G_{e})$ for the edges $e\in T$. It was shown in \cite{Freidel:2004vi} that this gauge-fixing procedure has a trivial Fadeev-Popov determinant and does not depend on the choice of gauge-fixed plaquettes. Explicit examples of gauge-fixings for various cellular complexes can be found in \cite{Freidel:2005bb,PRholo2}.
A complete discussion of the divergence problem formalized in terms of cohomology can be found in \cite{Barrett:2008wh,Bonzom:2010ar,Bonzom:2011br}.  On closed manifolds (with defects), it results in a formula for the Ponzano-Regge partition function in terms of the Reidemeister torsion for the twisted cohomology of flat connections on $\Delta^{*}$, which matches the Ray-Singer analytic torsion as computed from the continuous path integral for BF theory.


\subsection{Partition function with Boundary States}

We would like to consider the more general case of a manifold with boundaries, in order to study the boundary theories induced by the Ponzano-Regge model and possible holographic dualities between bulk and boundary theories along the lines sketched in \cite{Short,Riello:2018anu}.
Let us consider a 3D cellular complex $\Delta$ with a 2D boundary $\pp\Delta$. The definition of a 2D boundary state extends the definition of the boundary spin network for a 3-cell. As in the construction of the Ponzano-Regge model as a state-sum presented in the previous section, we introduce the 2D cellular complex $(\pp\Delta)^{*}$  dual to the 2D boundary and focus on its 1-skeleton graph $\Gamma$.

Quantum boundary states are defined as wave-functions of a discrete connection on the boundary graph $\Gamma$. They are (with a slight abuse of language) referred to as spin networks state\footnotemark.
\footnotetext{%
A spin network mathematically refers to the basis of the space of $\SU(2)$-gauge invariant $L^{2}$ wave-functions diagonalizing the $\SU(2)$ Casimirs and constructed from $\SU(2)$ Wigner matrices and intertwiners \cite{Rovelli:1994ge,Rovelli:1995ac}.However, it is often used  to indicate more broadly a generic wave-function in that Hilbert space. For a discussion of the gauge invariance of spin networks and their extensions under $\SU(2)$ transformations, the interested reader can refer to \cite{Charles:2016xwc}.
}
These are functions of one $\SU(2)$ group element along each link of the boundary graph. Let us call $L$ the number of links of $\Gamma$. The space of boundary wave-functions is provided with the natural scalar product defined by the Haar measure on $\SU(2)$,
\be
\psi(\{h_{l}\}_{l\in\Gamma})\in \cH_{\Gamma}
\,,\qquad
\cH_{\Gamma}=L^{2}(\SU(2)^{{\times L}})
\,,\qquad
\la \psi|\widetilde{\psi}\ra
=
\int_{\SU(2)^{{\times L}}}
\prod_{l\in\Gamma}\rd h_{l}\,\,
\overline{\psi(\{h_{l}\}}\,\,
\widetilde{\psi}(\{h_{l}\})
\,.
\ee
Each link $l\in\Gamma$ is dual to an edge on the boundary of the cellular complex $e\in\pp\Delta$. We will thus write equivalently $h_{l}$ and $h_{e}$, implicitly assuming the equivalence of labeling the group elements on the boundary by the edges $e\in\pp\Delta$ or links $l\in\Gamma=(\pp\Delta)^{*}$.

We define the Ponzano-Regge amplitude with  boundary  by ``closing'' the cellular decomposition $\Delta$ with the boundary state $\psi_{\pp \Delta}$:
\be
Z_{\Delta}[\psi_{\pp\Delta}]
\,=\,
\int\prod_{e\in\pp\Delta}\rd h_{e}\,\prod_{f\in\Delta}\rd g_{f}\,\,\,
\psi_{\pp\Delta}\big{(}\{h_{e}\}_{e\in\pp\Delta}\big{)}
\,
\prod_{e\in\pp\Delta}
\delta\left(h_{e}\,\overrightarrow{\prod_{f\ni e}} g_{f}\right)
\,
\prod_{e\in\Delta\setminus\pp\Delta}
\delta\left(\overrightarrow{\prod_{f\ni e}} g_{f}\right)\,.
\ee
The integration over the group elements on the boundary faces, $g_{f}$ for $f\in\pp\Delta$, automatically projects the wave-functions on the gauge invariant sector i.e. satisfying
\be
\forall \{g_{n}\}_{n\in\Gamma} \in \SU(2)^{N} ,\, \forall l \in \pp\Delta^* \quad
\psi\left(\left\lbrace g_{t(l)} h_{l} g_{s(l)}^{-1}\right\rbrace \right)
\,=\,
\psi(\left\lbrace h_{l}\right\rbrace ),
\ee
where  $s(l)$ (resp. $t(l)$) stands for the source (resp. target) node of the link $l$. The integer $N$ denotes the number of nodes of the boundary graph $\Gamma$. This means that all non-gauge invariant components of the boundary wave-function lead to a vanishing contribution in the integral $Z_{\Delta}[\psi_{\pp\Delta}]$. It is therefore natural to assume that the boundary wave-functions are all invariant under $\SU(2)$ gauge transformations at the boundary nodes.

The integration over the group elements on the bulk faces implies the flatness of the discrete connection on the boundary, i.e. the holonomy around every boundary vertex $v_{\pp}\in\pp\Delta$ is constrained to be the identity, with a $\delta$-distribution contribution to the integral
\be
\delta\left(\overrightarrow{\prod_{\substack{e\ni v_{\pp}\\e\in\pp\Delta}}}\, h_{e}\right)
\label{eq:flatness_bdry_vertex}
\, ,
\ee
as illustrated in figure \ref{fig:boundaryflatness}.
Indeed, each dual face on the boundary $\pp \Delta^*$ belongs to a polyhedron (a bubble dual to a vertex of the original cellular complex) whose all others faces are in the bulk. The holonomies around these faces are constrained by the model to be flat, which implies that the holonomy around the remaining face of the polyhedra be also flat.
Let us point out that this is not the only effect of the integration over the bulk group elements. The integration will lead to a specific distribution over the space of flat boundary connection, reflecting the topologies of the 3D cellular complex $\Delta$ and of its 2D boundary $\pp\Delta$. We will see this in more detail below, where we will focus on  2D boundaries with the topology of the 2-sphere and 2-torus.
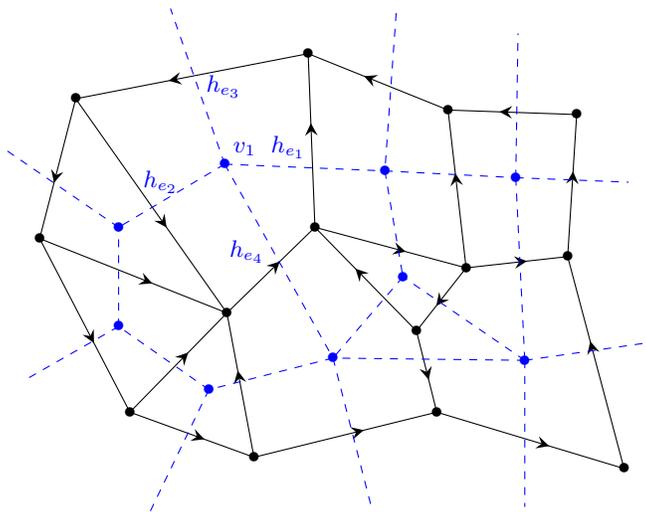
\begin{figure}[h!]
	\centering
	\begin{tikzpicture}[scale=3]

		\coordinate (A1) at (-0.39,0.81);
		\coordinate (A2) at (-0.55,0.19);
		\coordinate (A3) at (-0.15,-0.58);
		\coordinate (A4) at (0.4,-0.78);
		\coordinate (A5) at (1.21,-0.58);
		\coordinate (A6) at (2.04,-0.83);
		\coordinate (A7) at (1.79,0.11);
		\coordinate (A8) at (1.83, 0.74);
		\coordinate (A9) at (1.26, 0.76);
		\coordinate (A10) at (0.64, 1.01);
		\coordinate (I1) at (0.28, -0.14);
		\coordinate (I2) at (0.67, 0.24);
		\coordinate (I3) at (1.34, 0.06);
		\coordinate (I4) at (1.12, -0.22);
		
		\draw (A1) node{$\bullet$}; \draw (A2) node{$\bullet$}; 
		\draw (A3) node{$\bullet$}; \draw (A4) node{$\bullet$}; 
		\draw (A5) node{$\bullet$}; \draw (A6) node{$\bullet$}; 
		\draw (A7) node{$\bullet$}; \draw (A8) node{$\bullet$}; 
		\draw (A9) node{$\bullet$}; \draw (A10) node{$\bullet$}; 
		\draw (I1) node{$\bullet$}; \draw (I2) node{$\bullet$}; 
		\draw (I3) node{$\bullet$}; \draw (I4) node{$\bullet$}; 
		
		\link{A1}{A2};
		\link{A2}{A3};
		\link{A3}{A4};
		\link{A4}{A5};
		\link{A5}{A6};
		\link{A6}{A7};
		\link{A7}{A8};
		\link{A8}{A9};
		\link{A9}{A10};
		\link{A10}{A1};
		\link{A1}{I1};
		\link{A2}{I1};
		\link{A3}{I1};
		\link{A4}{I1}; 
		\link{I1}{I2};
		\link{I2}{A10};
		\link{I2}{I3};
		\link{I3}{A9};
		\link{I3}{A7};
		\link{I3}{I4};
		\link{I4}{A5};
		\link{I4}{I2};    
		
		\coordinate (B1) at (0.27, 0.52); \coordinate (B2) at (0.98, 0.49); \coordinate (B3) at (1.56, 0.46); \coordinate (B4) at (1.6, -0.35); \coordinate (B5) at (1.06, 0.02); \coordinate (B6) at (0.75, -0.34); \coordinate (B7) at (0.2, -0.48); \coordinate (B8) at (-0.2, -0.2); \coordinate (B9) at (-0.2, 0.24); \coordinate (B10) at (0.03, 1.21); \coordinate (B11) at (1.03, 1.19); \coordinate (B12) at (1.57, 1.1); \coordinate (B13) at (2.06, 0.44); \coordinate (B14) at (2.16, -0.27); \coordinate (B15) at (1.6, -1); \coordinate (B16) at (0.92, -1.0); \coordinate (B17) at (-0.06, -1.02); \coordinate (B18) at (-0.62, -0.44); \coordinate (B19) at (-0.71, 0.59); 
		
		\draw[dashed,blue] (B1)-- node[above,pos=0.4]{$h_{e_1}$} (B2); \draw[dashed,blue] (B1)-- node[above,pos=0.6]{$h_{e_2}$} (B9); \draw[dashed,blue] (B1)-- node[right,pos=0.5]{$h_{e_3}$} (B10); \draw[dashed,blue] (B1)-- node[left,pos=0.45]{$h_{e_4}$} (B6);
		\draw[dashed,blue] (B2)--(B3); \draw[dashed,blue] (B2)--(B5); \draw[dashed,blue] (B11)--(B2); 
		\draw[dashed,blue] (B3)--(B4); \draw[dashed,blue] (B3)--(B13); \draw[dashed,blue] (B3)--(B12);   
		\draw[dashed,blue] (B4)--(B5);  \draw[dashed,blue] (B4)--(B6); \draw[dashed,blue] (B4)--(B14); \draw[dashed,blue] (B4)--(B15);  
		\draw[dashed,blue] (B5)--(B6);
		\draw[dashed,blue] (B6)--(B7); \draw[dashed,blue] (B6)--(B16);
		\draw[dashed,blue] (B7)--(B8); \draw[dashed,blue] (B7)--(B17);  
		\draw[dashed,blue] (B8)--(B9); \draw[dashed,blue] (B8)--(B18);
		\draw[dashed,blue] (B9)--(B19); 
		
		\draw[blue] (B1) node{$\bullet$} node[above right]{$v_1$}; \draw[blue] (B2) node{$\bullet$}; 
		\draw[blue] (B3) node{$\bullet$}; \draw[blue] (B4) node{$\bullet$}; 
		\draw[blue] (B5) node{$\bullet$}; \draw[blue] (B6) node{$\bullet$}; 
		\draw[blue] (B7) node{$\bullet$}; \draw[blue] (B8) node{$\bullet$}; 
		\draw[blue] (B9) node{$\bullet$};
	\end{tikzpicture}
\caption{Induced flatness of the discrete connection on the 2D boundary, with the boundary spin network links in continuous line and the boundary of the cellular complex in blue dashed lines. A dual face on the boundary $\pp\Delta$ corresponds to a elementary loop of the boundary graph $\Gamma$, dual to a boundary vertex $v_{\pp}$. In the example above, the holonomy around the boundary vertex $v_{1}$ is given by the ordered product of the holonomies $h_{e_1}$, $h_{e_2}$, $h_{e_3}$ and $h_{e_4}$, associated to the links of the plaquette dual to $v_{1}$ and the flatness is imposed by \eqref{eq:flatness_bdry_vertex}.}
\label{fig:boundaryflatness}
\end{figure}

The Ponzano-Regge amplitude with boundary is still generally divergent, as for the case of a closed 3D cellular complex, due to the same reason of redundant $\delta$-distributions at the bulk vertices reflecting the gauge invariance of the theory under local 3D translations. This is addressed by gauge-fixing this symmetry along a maximal tree of the opened cellular complex $\Delta\setminus\pp\Delta$ (see \cite{Freidel:2004vi,Barrett:2008wh,Bonzom:2010ar} for more details). This procedure effectively amounts to removing as many redundant $\delta$-distributions in the bulk usually reducing the combinatorial structure of the cellular complex to a single bulk vertex without affecting its topology (see \cite{PRholo1,Barrett:2008wh} for examples).

Up to now, we have considered  boundary states in the connection polarization. We can also consider  boundary states in the metric polarization, and in particular with fixed induced metric data. This simply corresponds to a Fourier transform of the boundary wave-functions, switching to the spin network basis. Indeed, using the Peter-Weyl theoreom for the decomposition of $L^{2}$ functions over the $\SU(2)$ Lie group, one can decompose $\SU(2)$-gauge invariant wave-functions on the spin network basis:
\begin{equation}
\psi_{\pp\Delta}(\{h_{l}\}_{l\in\Gamma})
\,=\,
\sum_{\{j_{l},\cI_{n}\}}\psi_{\{j_{l},\cI_{n}\}}\Psi_{\pp\Delta}^{\{j_{l},\cI_{n}\}}(\{h_{l}\})
\qquad\textrm{with}\quad
\Psi_{\pp\Delta}^{\{j_{l},\cI_{n}\}}(\{h_{l}\})
\,=\,
\tr_{\{\cV^{j_{l}}\}}
\Big{[}
\bigotimes_{n}\cI_{n}\otimes\bigotimes_{l} \sqrt{d_{j_l}}  D^{j_l}(h_l) 
\Big{]}
\,,
\label{eq:def_spin_network}
\end{equation}
where the Fourier coefficients of the wave-function $\psi_{\pp\Delta}$ are $\psi_{\{j_{l},\cI_{n}\}}$. The Fourier transform is a sum over spins $j_{l}\in\N/2$ and over a basis of intertwiners living at the nodes $\cI_{n}$. The expression $D^j(h)$ stands for the Wigner matrix representing the group element $h\in\SU(2)$ in the spin-$j$ representation. We use the notation $d_{j}=2j+1$ for the dimension of the irreducible representation of spin $j$. 
The trace  stands for the contraction of all the magnetic indices as prescribed by the combinatorial structure of the graph $\Gamma$:
\be
\Psi_{\pp\Delta}^{\{j_{l},\cI_{n}\}}(\{h_{l}\})
\,=\,
\sum_{\{m^{s,t}_{l}\}}
\prod_{l}  \sqrt{d_{j_l}}  \la j_{l},m^{t}_{l}| h_{l}|j_{l},m^{s}_{l}\ra\,
\prod_{n}\left\la \{j_{l},m^{s}_{l}\}_{t(l)=n}|\cI_{n}|\{j_{l},m^{t}_{l}\}_{s(l)=n}\right\ra
\,.
\ee
These spin network wave-functions are normalized thanks to the factors $\sqrt{d_{j_l}}$,
\be
\la \Psi_{\pp\Delta}^{\{j_{l},\cI_{n}\}}|{\Psi}_{\pp\Delta}^{\{j'_{l},\cI'_{n}\}}\ra
=
\int \prod_{l\in\Gamma}\rd h_{l}\,
\overline{\Psi_{\pp\Delta}^{\{j_{l},\cI_{n}\}}}(\{h_{l}\})\,\,
{\Psi}_{\pp\Delta}^{\{j'_{l},\cI'_{n}\}}(\{h_{l}\})
=
\prod_{l}\delta_{j_{l},j'_{l}}
\,
\prod_{n}\la \cI_{n}|\cI'_{n}\ra
\,.
\ee
This leads to Ponzano-Regge amplitudes for basis states,
\be
Z_{\Delta}\big{[}\{j_{l},\cI_{n}\}_{l,n\in\pp\Gamma}\big{]}
=
Z_{\Delta}\big{[}\Psi_{\pp\Delta}^{\{j_{l},\cI_{n}\}}\big{]}
\,,
\ee
where the spins on the boundary are to be interpreted as the quantized lengths of the boundary edges $\ell_{e}=j_{e}\,l_{{\text{Planck}}}$ for $e\in\pp\Delta$.
The intertwiners determine the area and shape of the boundary faces.

\medskip

Committing a slight abuse of notation, we will sometimes write $\la Z_{\Delta} | \psi_{\pp\Delta} \ra=Z_{\Delta}[\psi_{\pp\Delta}]$ for the Ponzano-Regge amplitude with quantum boundary state $\psi_{\pp\Delta}$. This underlines that the Ponzano-Regge partition function for the 3D cellular complex $\Delta$ with boundary is to be understood as a linear form on the space of boundary states  $\psi_{\pp\Delta}$. This linear form  necessarily projects on flat boundary connections in a way that depends on the topology of the bulk $\Delta$.

One can then switch the logic between bulk and boundary and focus first on the boundary.
Starting with a 2D cellular complex, one can consider the space of linear forms on the space of spin network states living on it and wonder if the Ponzano-Regge partition function for 3D embeddings of this 2D cellular complex can cover this space of linear forms (or at least define a dense subset).
We leave this general mathematical question for future investigation and focus on the low genus topologies of a spherical and toroidal boundary topologies.

Let us start with a cellular decomposition of the 3-ball with a 2-sphere boundary, $\Delta=\cB_{3}$ with $\pp\Delta=\cS_{2}$. This case was already considered in the original work of Ponzano and Regge \cite{PR}.  The Ponzano-Regge amplitude simply amounts to the evaluation of the boundary wave-function on the identity. Indeed,  there does not exist any non-trivial flat connection on the 2-sphere: the only flat connection up to gauge transformations  is the trivial connection. In the spin basis, these amplitudes give the 3nj-symbols corresponding to the combinatorics of the boundary graph $\Gamma$.

\begin{prop}

For a cellular decomposition $\Delta$ of a 3-ball, with a connected boundary graph $\Gamma = (\pp \Delta)^{*}$, the Ponzano-Regge partition function for a quantum state on the boundary 2-sphere is given by its evaluation at the identity,
\be
\la Z_{\Delta} | \psi \ra_{\cB_{3}} = \psi(\{h_{l}=\id\}_{l\in\Gamma})
\,.
\ee
\end{prop}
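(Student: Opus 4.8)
\emph{Proof strategy.} The plan is to perform the group integrations defining $\la Z_{\Delta}|\psi\ra$ in two stages --- first the boundary edge holonomies, then the bulk face variables --- and then collapse the result to the evaluation of $\psi$ at the identity by gauge invariance. First I would integrate out the boundary holonomies $h_{e}$, $e\in\pp\Delta$, against the delta functions $\delta\big(h_{e}\,\overrightarrow{\prod_{f\ni e}}g_{f}\big)$. This localizes each $h_{e}$ onto $H_{e}(\{g_{f}\}):=\big(\overrightarrow{\prod_{f\ni e}}g_{f}\big)^{-1}$, yielding
\be
\la Z_{\Delta}|\psi\ra_{\cB_{3}}
=
\int\prod_{f\in\Delta}\rd g_{f}\;
\psi\big(\{H_{e}(\{g_{f}\})\}_{e\in\pp\Delta}\big)\,
\prod_{e\in\Delta\setminus\pp\Delta}\delta\Big(\overrightarrow{\prod_{f\ni e}}g_{f}\Big)
\,.
\ee
This is exactly the lattice gauge theory of \eqref{eq:PR_model_connection} on the dual of the ball, with $\psi$ evaluated on the restriction of the discrete connection to the boundary graph $\Gamma$ and with the remaining $\delta$-distributions enforcing flatness on all bulk edges.

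Second, I would handle the redundancy of the bulk constraints exactly as for a closed complex: gauge-fix the translational symmetry along a maximal tree of $\Delta\setminus\pp\Delta$, which by \cite{Freidel:2004vi} carries a trivial Faddeev--Popov determinant, is independent of the chosen tree, and reduces the bulk to a single vertex without changing the topology. The topological input then enters: the $3$-ball is contractible, so $\pi_{1}(\cB_{3})=0$ and the only flat discrete connection on $\Delta^{*}$ up to gauge transformations is the trivial one. Consequently, on the support of the (gauge-fixed) bulk constraints the boundary holonomies are pure gauge, $H_{l}=g_{t(l)}\,g_{s(l)}^{-1}$ for some node variables $\{g_{n}\}_{n\in\Gamma}$; here the connectedness of $\Gamma$ is what makes this the generic parametrization and fixes the residual freedom down to a single global $\SU(2)$.

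Third, I would invoke the $\SU(2)$ gauge invariance of the boundary wave-function, $\psi(\{g_{t(l)}h_{l}g_{s(l)}^{-1}\})=\psi(\{h_{l}\})$, which was argued above to be a natural requirement on admissible boundary states. It gives $\psi(\{g_{t(l)}g_{s(l)}^{-1}\}_{l})=\psi(\{g_{t(l)}\,\id\,g_{s(l)}^{-1}\}_{l})=\psi(\{h_{l}=\id\}_{l})$, so $\psi$ no longer depends on the node variables and the integrand is the constant $\psi(\{\id\})$. The leftover normalized Haar integrations over the node variables of $\Gamma$ and over the gauge-fixed bulk variables then evaluate to unity, giving $\la Z_{\Delta}|\psi\ra_{\cB_{3}}=\psi(\{h_{l}=\id\}_{l\in\Gamma})$ as claimed.

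\emph{Main obstacle.} The delicate part is the bookkeeping of the measures through the gauge-fixing: one must check that, once the redundant bulk $\delta$'s are removed and the flatness constraints solved, the residual group integrations combine to exactly $1$ rather than to some spurious power of $\mathrm{vol}\,\SU(2)$, and one must pass cleanly from ``$\pi_{1}(\cB_{3})=0$'' to the combinatorial statement that the bulk constraints force the discrete connection to be pure gauge. A cleaner route, which I would use as a fallback, is to invoke topological invariance to replace $\Delta$ by the minimal decomposition of the $3$-ball with boundary dual to $\Gamma$ --- the cone over $\Gamma$, which has essentially no bulk edges --- so that the pure-gauge form of the boundary holonomies and the normalization are manifest; equivalently, one can run an induction over Pachner moves starting from that minimal decomposition.
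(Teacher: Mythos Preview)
Your proposal is correct, and in fact your ``fallback'' route is precisely the paper's primary argument. The paper does not attempt the abstract $\pi_{1}(\cB_{3})=0$ argument at all: it immediately passes to the cone decomposition with a single bulk vertex $\Omega$, so that after integrating out the boundary face variables (absorbed by gauge invariance of $\psi$) the amplitude reduces to an integral over the $h_{l}$'s with one flatness $\delta$-function per boundary plaquette. Gauge-fixing the single bulk vertex removes exactly one of these $\delta$'s. The remaining count is then done explicitly on the boundary graph $\Gamma$ via a maximal tree and the sphere Euler characteristic $P-L+N=2$: the tree has $N-1$ links (set to $\id$ by the boundary $\SU(2)$ invariance), leaving $L-(N-1)=P-1$ links, which are in bijection with the $P-1$ independent plaquette constraints and hence are also forced to $\id$.

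What this buys relative to your primary route is exactly the resolution of the obstacle you flagged: the Euler-characteristic counting makes manifest that the number of residual Haar integrations matches the number of remaining $\delta$-functions, so no spurious volume factors survive and no separate combinatorial translation of ``$\pi_{1}=0$'' is needed. Your abstract argument is morally right but, as you noted, the measure bookkeeping through the bulk gauge-fixing is where it would have to be made precise; the paper simply sidesteps this by working with the minimal bulk from the outset.
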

\begin{proof}
 To prove this result, we start with the planar boundary graph $\Gamma$ and its dual 2D cellular decomposition of the 2-sphere, and rely on the simplest 3D cellular complex $\Delta$, compatible with this boundary data, defined with a single bulk vertex. We call $\Omega$ this bulk vertex and the full cellular decomposition $\Delta$ is built by drawing edges from $\Omega$ to every vertex on the boundary $v_{\pp}\in\pp\Delta$. We call $e_{v}$ the resulting bulk edge.  We distinguish boundary faces consisting of vertices and edges entirely in the boundary 2-sphere $\pp\Delta$ and bulk faces which involve the bulk vertex $\Omega$. These bulk faces are all triangles linking $\Omega$ to two boundary vertices, say $v^{1}_{\pp}$ and $v^{2}_{\pp}$, and thus made of two bulk edges $e_{v^{1}_{\pp}}$ and $e_{v^{2}_{\pp}}$ and the boundary edge $(v^{1}_{\pp}v^{2}_{\pp})$.
 
Thus a boundary edge $e_{\pp}\in\pp\Delta$ is shared by three faces: one bulk face which we call $f(e_{\pp})$ linking it to the bulk vertex $\Omega$, and two boundary faces corresponding to the source and target node of the link of the boundary graph $\Gamma$ dual to $e$, which we call its source and target  faces $s(e)$ and $t(e)$.
 
We associate a group elements $g_{f}$ to each bulk face and each boundary face, leading to the Ponzano-Regge partition function with boundary defined as:
\be
Z_{\Delta}[\psi_{\pp\Delta}]
\,=\,
\int\prod_{e_{\pp}\in\pp\Delta}\rd h_{e}\,\prod_{f\in\Delta}\rd g_{f}\,\,\,
\psi_{\pp\Delta}\big{(}\{h_{e_{\pp}}\}\big{)}
\,
\prod_{e_{\pp}}
\delta\left(h_{e_{\pp}}^{-1}\,g_{s(e)}g_{f(e_{\pp})}g_{t(e)}^{-1}\right)
\,
\prod_{v_{\pp}\in\pp\Delta}
\delta\left(\overrightarrow{\prod_{f\ni e_{v}}} g_{f}\right)
\,.
\ee
The integration over the boundary faces is completely absorbed by the $\SU(2)$ gauge-invariance of the boundary spin network, so that the $\delta$-distribution corresponding to a boundary edge $e_{\pp}$ simply identifies $h_{e_{\pp}}=g_{f(e_{\pp})}$. We can thus entirely write the Ponzano-Regge amplitude directly  in terms of the combinatorics of the boundary cellular complex:
\be
Z_{\Delta}[\psi_{\pp\Delta}]
\,=\,
\int\prod_{e\in\pp\Delta}\rd h_{e}\,
\psi_{\pp\Delta}\big{(}\{h_{e}\}\big{)}
\,
\prod_{v\in\pp\Delta}
\delta\left(\overrightarrow{\prod_{\substack{e\ni v\\e\in\pp\Delta}}} h_{e}\right)
=
\int\prod_{l\in\Gamma}\rd h_{l}\,
\psi_{\pp\Delta}\big{(}\{h_{l}\}\big{)}
\,
\prod_{p\in\pp\Delta}
\delta\left(\overrightarrow{\prod_{l\in p}} h_{l}\right)
\,,
\ee
where we have reformulated the integration variables in terms of the elements of the boundary graph $\Gamma$ and the plaquettes (i.e. dual faces) of the boundary complex $\pp\Delta$ in terms of cycles on the boundary graph.

The gauge-fixing of the Ponzano-Regge amplitude amounts to removing one $\delta$-distribution per bulk vertex, which means removing one $\delta$-distribution around a boundary plaquette. Let's call $p_{\emptyset}$ this gauge-fixed plaquette.
Calling $N$, $L$ and $P$ the number of nodes, links and plaquettes on the boundary $\Gamma$, this means that we are integrating over $L$ group elements $h_{l}$ with $(P-1)$ $\delta$-functions. One elegant way to see that this is enough to fix all the holonomies to the identity is to use the $\SU(2)$ gauge-invariance of the boundary state to fix as many group elements to the identity and then fix the remaining group elements to the identity by flatness.
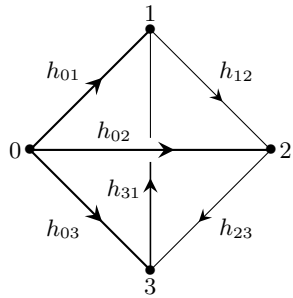
\begin{figure}[h!]
	\begin{tikzpicture}[scale=1.6]
	
		\coordinate (A0) at (0,0);
		\coordinate (A1) at (1,1);
		\coordinate (A2) at (2,0);
		\coordinate (A3) at (1,-1);
		\coordinate (a) at (1,0.1);
		\coordinate (b) at (1,-0.1);
			
		\draw (A0) node{$\bullet$} node[left]{0}; 
		\draw (A1) node{$\bullet$} node[above]{1};
		\draw (A2) node{$\bullet$} node[right]{2}; 
		\draw (A3) node{$\bullet$} node[below]{3};
		
\draw[thick,decoration={markings,mark=at position 0.6 with {\arrow[scale=1.5,>=stealth]{>}}},postaction={decorate}] (A0) -- node[above left,pos=.5]{$h_{01}$}(A1);
\draw[thick,decoration={markings,mark=at position 0.6 with {\arrow[scale=1.5,>=stealth]{>}}},postaction={decorate}] (A0) -- node[below left,pos=.5]{$h_{03}$}(A3);
\draw[thick,decoration={markings,mark=at position 0.6 with {\arrow[scale=1.5,>=stealth]{>}}},postaction={decorate}] (A0) -- node[above,pos=.35]{$h_{02}$}(A2);
		\alink {A2}{A3}{below right}{h_{23}};
		\alink {A1}{A2}{above right}{h_{12}};		
\draw[semithick,decoration={markings,mark=at position 0.85 with {\arrow[scale=1.5,>=stealth]{>}}},postaction={decorate}] (A3) -- node[left,pos=.75]{$h_{31}$}(b);
\draw (a)--(A1);

	\end{tikzpicture}

\caption{Gauge-fixing and flatness on the Tetrahedral graph embedded on the 2-sphere: we choose a node $0$ as the root and the maximal tree $T=\{(01),(02),(03)\}$ (in bold). The $\SU(2)$ gauge-invariance allows to fix the group variables on the tree links to $\id$. The remaining links $(12)$, $(23)$ and $(31)$ are respectively matched to the plaquettes $(012)$, $(023)$ and $(031)$, leaving the non-independent cycle $(123)$ as the gauge-fixed plaquette necessary to define a finite Ponzano-Regge amplitude.}
\label{fig:tetrasphere}
\end{figure}

We follow the gauge-fixing procedure for spin networks introduced in \cite{Freidel:2002xb}. We choose a root node $n_{0}$ and choose a maximum tree $T$ on the graph $\Gamma$. This tree $T$ goes through every node of the graph and consists in $L_{T}=N-1$ links. Moreover, for every node $n$, there exist a unique path along $T$ from the root node $n_{0}$ to $n$.

Using the $\SU(2)$ gauge-invariance at the nodes, we can fix every group elements on the tree links to the identity by using the invariance of the Haar measure under the $\SU(2)$ action, $h_{l\in T}=\id$. Using the sphere Euler characteristic $P-L+N=2$, this leaves $L-L_{T}=L-N+1=(P-1)$ links not belonging to the tree. These exactly corresponds to the plaquettes of the boundary graph. Indeed, there is a one-to-one correspondence between independent cycles on $\Gamma$ and links not belonging to the tree, with a link  $l\in\Gamma\setminus T$ matched with the loop following the path in $T$ from the root node $n_{0}$ to its source node $s(l)$, then along the link $l$, and back along the tree $T$ from the target node $t(l)$ to the root node $n_{0}$.
We illustrate this for the case of the tetrahedral case in figure \ref{fig:tetrasphere}.
At the end, combining the $\SU(2)$ gauge invariance and the $\delta$-functions on the boundary plaquettes allows us to fix all the group elements to the identity and get the required result, $\la Z_{\Delta} | \psi \ra_{\cB_{3}} = \psi(\{h_{l}=\id\}_{l\in\Gamma})$.

\end{proof}

It has been further proved that the generating functions for these ``planar'' Ponzano-Regge amplitudes for a 3-valent boundary graph $\Gamma$ are dual to the 2D Ising model living on the 2D cellular complex \cite{Bonzom:2015ova,costantino}. This duality involves the same class of coherent spin network states that we will use in the present work and that have been shown to allow for exact analytic computations of spinfoam amplitudes \cite{Freidel:2012ji,Bonzom:2012bn,Dittrich:2013jxa}.
Finally, general results for the Ponzano-Regge partition function for the 3-sphere and handlebodies can be found in \cite{Freidel:2005bb,Barrett:2008wh,Barrett:2011qe}.

\medskip

Here, we are interested in the next non-trivial topology beyond the 3-ball. We consider a solid torus, i.e. a 3D solid cylinder with a disk base whose final disk is glued back on the initial disk, so that the 2D boundary is a torus.
The boundary graph is defined as follows. We consider an open graph on a tube (i.e. 2D cylinder) such that it has $P$ links puncturing the initial circle and similarly $P$ links puncturing the final circle. The cyclical order of the links on the initial and final circles is important. We then glue these initial and final links together respecting the orientation of the circles, as drawn in figure \ref{fig:twistedcylinder}. A twist is obtained by off-setting the identification between initial and final links. 

Although there might not exist a natural way to distinguish a ``zero-twist'' configuration for an arbitrary graph on the cylinder, we can always talk about the difference of twist between two different gluing of the solid cylinder into a solid torus.
Let us underline that different twists a priori lead to different closed graphs once the opened links are glued along the cut.
\begin{figure}[h!]
	\includegraphics[scale=0.65]{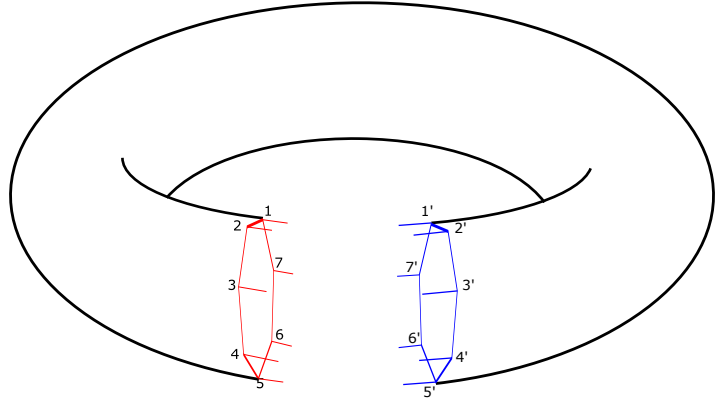}
	\caption{Example of an initial and final circles for an open graph on a tube for $P=7$ punctures. These initial and final circles are identified in order to reconstruct the torus. A twist is obtained by off-setting the identification the initial and final circles.}
	\label{fig:twistedcylinder}
\end{figure}

The  solid torus has two cycles, the contractible cycle around the tube and the non-contractible cycle along the tube.
The resulting Ponzano-Regge amplitude is given by the integration over the holonomy along the non-contractible cycle. This result for an arbitrary graph on the torus extends the computation for a regular square lattice done in \cite{PRholo1,PRholo2} and is a particular case of the general formula for the Ponzano-Regge amplitude for handlebodies proven in \cite{Freidel:2005bb,Dowdall:2009eg}.

\begin{prop}
\label{prop:torus}
For a cellular decomposition $\Delta$ of a solid torus, with $\Gamma = (\pp \Delta)^*$ the Ponzano-Regge partition function for a quantum state on the boundary 2-torus is given by its integral over the non-contractible cycle. Let us call $\cC$ the initial and final circle on which the gluing is performed to create the graph on the torus from the graph on the cylinder. We distinguish the links that cross this circle, which we orient all in the same direction, from the remaining links.Then the Ponzano-Regge amplitude reads:
\be
\label{Ztorus}
\la Z_{\Delta} | \psi \ra_{\cT}
=
\int \rd H\,\,\psi\Big{(}\{h_{l}=H\}_{l\in\cC},\{h_{l}=\id\}_{l\in\Gamma\setminus \cC^*}\Big{)}
\,.
\ee
where $\cC^*$ corresponds to the set of links in $\Gamma$ that cross the circle $\cC$.
\end{prop}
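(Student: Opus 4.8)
\emph{Proof proposal.} The strategy is to mirror the proof of the previous proposition, the genuinely new element being the single non-contractible cycle of the solid torus, which will survive the reduction as one free holonomy to be integrated. Since the Ponzano--Regge amplitude is a topological invariant, it suffices to compute on one convenient cellular decomposition $\Delta$ of the solid torus $\cT$ compatible with the boundary graph $\Gamma$. Generalizing the cone construction of the previous proof, I would cut $\cT$ along a meridian disk $\cD$ whose boundary is the gluing circle $\cC\subset\pp\Delta$ --- opening $\cT$ into a solid cylinder $\cB_{3}$ carrying the cut-open graph (essentially $\Gamma\setminus\cC^{*}$) together with the two banks $\cD_{\pm}$ of the cut, on which one places the single-bulk-vertex cone complex of the previous proof --- and then reglue $\cD_{+}$ to $\cD_{-}$; equivalently, and more economically for the bookkeeping below, one works with a complex of $\cT$ containing one bulk vertex $\Omega$, one bulk ``core loop'' edge $\kappa$ winding once around the non-contractible cycle, bulk edges $e_{v}$ from $\Omega$ to every boundary vertex, and bulk faces coning these data to $\pp\Delta$. (One could also specialize the general handlebody formula of \cite{Freidel:2005bb,Dowdall:2009eg}, but I would give a self-contained argument.)

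I would then carry out the group integrations as in the ball proof. Assigning a $g_{f}$ to each face, the integrations over the boundary-face group elements are absorbed by the $\SU(2)$-gauge invariance of $\psi$ at the boundary nodes, identifying each $h_{e}$ with a bulk holonomy through $e$; the $\delta$-distributions on the bulk edges $e_{v}$ then reduce, exactly as for the sphere, to flatness of $\{h_{l}\}_{l\in\Gamma}$ around every boundary vertex (i.e.\ around every plaquette of $\Gamma$). The new ingredient is the $\delta$-distribution on the core loop $\kappa$ (equivalently, the bulk $\delta$'s living on the reglued disk $\cD$): since a loop linking $\kappa$ once is a meridian, this $\delta$ forces the holonomy around the contractible cycle $\cC$ to be trivial, while leaving the holonomy along the non-contractible cycle of $\cT$ unconstrained. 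Finally, the usual gauge-fixing of the $BF$ translation symmetry removes the one redundant $\delta$-distribution attached to $\Omega$. One is then left with $\int\prod_{l}\rd h_{l}\,\psi(\{h_{l}\})$ against these flatness constraints and one leftover holonomy to be integrated over $\SU(2)$.

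To solve the constraints I would use, as in the ball proof, the $\SU(2)$-gauge invariance of $\psi$ to fix a maximal tree $T$ of $\Gamma$ to the identity, choosing $T$ inside the connected subgraph $\Gamma\setminus\cC^{*}$ (the graph on the cut-open cylinder). Because $\chi(T^{2})=0$ instead of $\chi(\cS_{2})=2$, one holonomy survives the combined gauge-fixing and flatness --- against none in the spherical case --- and it is the non-contractible one. Concretely, the boundary-vertex flatness constraints fix all non-tree links of $\Gamma\setminus\cC^{*}$ to the identity just as on the sphere, and then, with every non-$\cC^{*}$ link set to $\id$, the remaining flatness constraints around the vertices met by $\cC$ force all links of $\cC^{*}$, oriented coherently across $\cC$, to carry a single common value $H$, which is itself unconstrained (it is the holonomy of the non-contractible cycle of $\cT$, the contractible one $\cC$ being trivial by the core-loop $\delta$). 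Integrating this last $\delta$-free holonomy over $\SU(2)$ yields
\be
\la Z_{\Delta}|\psi\ra_{\cT}
=
\int\rd H\,\,\psi\big(\{h_{l}=H\}_{l\in\cC^{*}},\,\{h_{l}=\id\}_{l\in\Gamma\setminus\cC^{*}}\big)\,,
\ee
and independence of this result from the choices of $\Delta$, of the tree, and of the routing of $H$ follows --- exactly as in the ball case --- from topological invariance together with the invariance of the Haar measure.

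The step I expect to be the real obstacle is the middle one: tracking carefully the bulk $\delta$-distribution attached to the non-contractible cycle (the core loop $\kappa$, or equivalently the reglued disk $\cD$) and showing that its net effect is \emph{precisely} the triviality of the $\cC$-holonomy together with the emergence of exactly one free holonomy, which moreover couples to the boundary state only through the links of $\cC^{*}$ and with a consistent orientation. This is where the topology of the solid torus genuinely enters, and where some care is needed about orientations and about which redundant $\delta$-distribution the bulk gauge-fixing removes.
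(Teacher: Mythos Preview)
Your proposal is correct and follows essentially the same approach as the paper: the single-bulk-vertex-plus-core-loop (``bone'') cellular decomposition, the Euler-characteristic count ($\chi(T^{2})=0$ giving one fewer net $\delta$-constraint than the sphere), the choice of maximal tree inside $\Gamma\setminus\cC^{*}$, and the use of boundary flatness to equalize all $\cC^{*}$-link holonomies to a common $H$ are exactly the paper's argument. Your treatment is in fact more explicit than the paper's about the role of the core-loop $\delta$-distribution in trivializing the contractible cycle, which the paper only mentions in passing.
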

\begin{proof}
The proof is very similar to the case of the 3-ball. 
Starting from the   2D boundary complex, dual to the boundary graph $\Gamma$ with its $N$, $L$ and $P$ nodes, links and plaquettes, an admissible cellular decomposition of the 3D cylinder is  obtained by just adding a single bulk vertex $\Omega$ and a central edge or ``bone'' going from $\Omega$ to itself defining the non-contractible cycle of the solid cylinder. An example is given in figure \ref{fig:twovertextorus}.
\begin{figure}[h!]
\begin{subfigure}[t]{.45\linewidth}
	\centering
	\begin{tikzpicture}[scale=.75]
	
		\coordinate (A0) at (0,0);
		\coordinate (A1) at (0,4);
		\coordinate (A2) at (5,4);
		\coordinate (A3) at (5,0);
		\draw[opacity=0.6] (A0)--(A1)--(A2)--(A3)--(A0);

		\coordinate (a) at (2.5,1.25);
		\coordinate (b) at (2.5,2.75);
			
		\draw (a) node{$\bullet$}; 
		\draw (b) node{$\bullet$};

		\coordinate (N) at (2.5,4);
		\coordinate (S) at (2.5,0);
		\coordinate (Ea) at (5.3,1.25);
		\coordinate (Eb) at (5.3,2.75);
		\coordinate (Wa) at (-.3,1.25);
		\coordinate (Wb) at (-.3,2.75);

\draw[thick,decoration={markings,mark=at position 0.6 with {\arrow[scale=1.5,>=stealth]{>}}},postaction={decorate}] (a) -- node[left,pos=.5]{$h_{3}$}(b);
\draw[decoration={markings,mark=at position 0.6 with {\arrow[scale=1.5,>=stealth]{>}}},postaction={decorate}] (b) -- node[left,pos=.5]{$h_{4}$}(N);
\link {S}{a};
\alink{a}{Ea}{above}{h_{2}};
\alink{b}{Eb}{above}{h_{1}};
\link{Wa}{a};
\link{Wb}{b};

	\end{tikzpicture}
\caption{Gluing the horizontal lines straight, we compute the integral of the wave-function $\psi(\{h_{l}\}_{l=1..4}$ contracted with $\delta(h_{1}h_{3}^{-1}h_{2}h_{3})$, $\delta(h_{1}h_{4}h_{2}h_{4}^{-1})$ for the boundary flatness and $\delta(h_{3}h_{4})$ for the cycle around the central bone. Gauge-fixing $h_{3}=\id$ by a $\SU(2)$ transformation,  this leads to an amplitude $\int \rd H\,\psi(H,H,\id,\id)$.}
\end{subfigure}
\hspace{4mm}
\begin{subfigure}[t]{.45\linewidth}
	\centering
	\begin{tikzpicture}[scale=.75]
	
		\coordinate (A0) at (0,0);
		\coordinate (A1) at (0,4);
		\coordinate (A2) at (5,4);
		\coordinate (A3) at (5,0);
		\draw[opacity=0.6] (A0)--(A1)--(A2)--(A3)--(A0);

		\coordinate (a) at (2.5,1.25);
		\coordinate (b) at (2.5,2.75);
			
		\draw (a) node{$\bullet$}; 
		\draw (b) node{$\bullet$};

		\coordinate (N) at (2.5,4);
		\coordinate (S) at (2.5,0);
		\coordinate (Ea) at (5,1.25);
		\coordinate (Eb) at (5,2.75);
		\coordinate (Wa) at (0,1.25);
		\coordinate (Wb) at (0,2.75);

\draw[thick,decoration={markings,mark=at position 0.6 with {\arrow[scale=1.5,>=stealth]{>}}},postaction={decorate}] (a) -- node[left,pos=.5]{$h_{3}$}(b);
\draw[decoration={markings,mark=at position 0.6 with {\arrow[scale=1.5,>=stealth]{>}}},postaction={decorate}] (b) -- node[left,pos=.5]{$h_{4}$}(N);
\link {S}{a};
\alink{a}{Ea}{above}{h_{2}};
\alink{b}{Eb}{above}{h_{1}};
\link{Wa}{a};
\link{Wb}{b};
\draw[rounded corners=3 pt] (Eb) --(5.4,2.75)-- (5.6,1.25)--(5.9,1.25);
\draw[rounded corners=3 pt] (-.9,2.75) --(-.6,2.75)-- (-.4,1.25)--(0,1.25);
\draw[rounded corners=3 pt] (Ea) --(5.4,1.25)-- (5.5,.2);
\draw[rounded corners=3 pt] (-.5,3.7) --(-.4,2.75)-- (Wb);

	\end{tikzpicture}

\caption{Gluing the horizontal lines with a twist, the Ponzano-Regge integral involves the product of distributions $\delta(h_{1}h_{4}^{-1}h_{2}h_{3})\delta(h_{1}h_{3}h_{2}h_{4}^{-1})\delta(h_{3}h_{4})$ still leading to the same formula for the amplitude in terms of the boundary wave-function  $\int \rd H\,\psi(H,H,\id,\id)$.}
\end{subfigure}
\caption{The two-node boundary graph on the torus  and the resulting calculation of the Ponzano-Regge amplitude for  the straight and twisted versions.}
\label{fig:twovertextorus}
\end{figure}

Without going into the details of the holonomies and $\delta$-functions as in the case of the 3-ball and the 2-sphere, we prefer to stress the differences with the previous proof. For the same number of nodes and links, we will have two less plaquettes on the boundary, due to the smaller Euler characteristic, $P-L+N=0$. This corresponds to two $\delta$-functions less in the Ponzano-Regge integral. But there will be an extra $\delta$-function coming from the holonomy around the central bone, corresponding to the contractible cycle of the solid cylinder. Following the same gauge-fixing procedure, using the $\SU(2)$ gauge transformations of the boundary state to set as many group variables as possible to the identity by choosing  a maximal tree in $\Gamma\setminus \cC^*$, the fact that we have one less $\delta$-function translates into one group element not set to the identity. This group element is the one living on the boundary links around the cut circle $\cC$. Calling it $H$, we easily see that it is the same group element on all these links due to the flatness of the boundary connection.  This leads to the desired result, as illustrated in figure \ref{fig:twovertextorus}.

\end{proof}

Another way to describe the 2-torus geometry with twists and construct boundary graphs is to start directly with the 2-torus instead of the 2-cylinder. Instead of taking the simple circle slice with zero slope, thus winding only once, one can choose a cut circle with non-vanishing rational slope, thus non-trivially winding around the torus. The foliation of the torus obtained by sliding this cut along the torus defines a Morse function, allowing to define a 2D flat coordinate systems on the torus. This can be understood as a Dehn twist winding the cylinder geometry potentially several times (in integer multiples of $2\pi$) around its main axis and amounts to a modular transformation -or large diffeomorphisms- acting on the original $(t,x)$ coordinate system of the torus. This method, as described in appendix \ref{app:torus} leads to  boundary linear forms which depend on the Dehn twist acting on boundary states living on the same boundary graph drawn on the torus. All these boundary linear forms still project on the moduli space of flat connections but define a priori different projectors corresponding to different windings of the torus around its non-contractible cycle\footnotemark.
We illustrate this on simple examples (the tetrahedral graph embedded on the 2-torus and the lattice with two squares) in appendix \ref{app:torus}.
\footnotetext{%
Calling $H_{1}$ and $H_{2}$ the holonomies around the two cycles on the 2-torus, satisfying the flatness condition $H_{1}H_{2}=H_{2}H_{1}$. The original linear form obtained in \eqref{Ztorus} corresponds to considering $H_{1}=H$ and $H_{2}=\id$. Winding the second cycle around the first cycle, thereby creating a twist, amounts to considering $H_{1}=H$ and $H_{2}=H^{m}$ for a winding number $m$. This modifies the Ponzano-Regge boundary amplitude as discussed in the appendix.
}
It would be interesting to compare this approach, which allows to directly  study the behavior of the Ponzano-Regge amplitude under large boundary diffeomorphisms, with the simpler procedure for constructing boundary graphs and boundary states used here. We leave this for future investigation.

\medskip

The rest of the paper will consist in applying the above proposition \ref{prop:torus} to coherent spin network states on a square lattice on the twisted torus and showing that the resulting Ponzano-Regge amplitude with boundary can be exactly computed with an explicit dependence on the twist angle.

\section{Quantum Boundary States as Coherent Spin Networks}

The Ponzano-Regge amplitude for a toroidal boundary was already considered in \cite{PRholo2}. More precisely, this previous work considered  a restricted choice of boundary spin network states on a square lattice defined at fixed spins (i.e. fixed edge lengths) and with coherent intertwiners (flexible squares as made from two triangles glued together). This allows to perform a saddle point analysis and derive an approximative formula in the asymptotic limit for large spins.

Here, we go one step further and consider fully coherent spin networks, with coherent superpositions of both spins and intertwiners. These states involve scale parameters defining Poisson probability distributions on the spins and become scale-invariant in a critical limit of those scale parameters. Moreover, they can be understood as generating functions for the semi-coherent Ponzano-Regge amplitdues at fixed spins. Finally, we will show in the next section that they allow for an exact analytical computation of the Ponzano-Regge amplitudes.

%

\subsection{From Coherent Intertwiners to Coherent Spin Superpositions}
\label{cohstates}

To construct coherent boundary spin network states, we use the spinor or holomorphic formalism. 
This formalism was developed for loop gravity \cite{Borja:2010rc,Livine:2011gp,Livine:2013wmq}, and led to the definition of coherent intertwiners which can be glued into coherent spin networks  \cite{Livine:2007vk,Freidel:2009nu,Freidel:2010tt,Dupuis:2010iq,Bianchi:2010gc,Bonzom:2012bn,Alesci:2016dqx}.
 We denote by $|w \ra \in \C^2$ a spinor, by $\la w |$ its conjugate and by $| w ] $ its dual
\begin{equation}
	| w \ra = 
	\begin{pmatrix}
	w^0 \\
	w^1
	\end{pmatrix}\,,
	\qquad
	\la w | = (\bar{w}^0 \; \; \bar{w}^1)
\,,\qquad
	| w ] = \varsigma | w \ra = 
	\begin{pmatrix}
	-\bar{w}^1 \\
	\bar{w}^0 
	\end{pmatrix}
	 \,,
\end{equation}
where  $\varsigma$ is the $\SU(2)$ structure map. It is an anti-unitary map $\varsigma^2 = -1$, which commutes with the $\SU(2)$ action on the spinors, $g\,\varsigma\,| w \ra=\varsigma\,g\,| w \ra$ for all group elements $g \in \SU(2)$.
The dual spinors satisfy the following scalar product identities:
\begin{equation}
	[w|\tilde{w}] = \la \tilde{w} | w \ra \; , \quad [w|\tilde{w} \ra = - [\tilde{w} | w \ra \; .
	\nn
\end{equation}
A spinor $w$ is equivalent to a $\SU(2)$ group element up to a norm factor:
\be
|w\ra=\sqrt{\la w|w \ra}\,G_{w}\,|\uparrow\ra\,,
\qquad\textrm{with}\quad
|\uparrow\ra=\mat{c}{1 \\ 0}
\quad\textrm{and}\quad
G_{w}=\f1{\sqrt{\la w|w \ra}}\mat{cc}{w^{0} & -\bar{w}^{1}\\ w^{1} & \bar{w}^{0}}\,,
\ee
where the corresponding group element $G_{w}$ is  the $\SU(2)$ transformation mapping the orthonormal basis $(|\uparrow\ra,|\downarrow\ra)$ onto the orthonormal basis $(|w\ra,|w])$ up to the spinor norm factor.
Finally, we can associate a vector $\vec{u}\in\R^3$ to every spinor by projecting it over the Pauli matrices $\sigma_{i=1,2,3}$,
\begin{equation*}
	\vu_{w} = \la w | \vec{\sigma} | w \ra \,,\qquad
	|\vu_{w}|^{2}=\la w|w\ra
	\,,
\end{equation*}
where the Pauli matrices are normalized such that $(\sigma_i)^{2}=\id$. This map $\C^{2}\rightarrow\R^{3}$ is clearly not one-to-one but erases the phase of the spinor. Indeed it is $\U(1)$-invariant, $\vu_{w}=\vu_{e^{i\theta}w}$ for every phase $e^{i\theta}\in\U(1)$.

\medskip

First, we define $\SU(2)$ coherent states on the Hilbert space $\cV^{j}$ for the $\SU(2)$-representation of spin $j$.
Calling $J_{i}$  the $\su(2)$ generators, corresponding to half the Pauli matrices in the fundamental representation of spin $\f12$, the highest weight state of spin $j$ can be interpreted as a semi-classical 3-vector of norm $j$ aligned on the $z$-axis,
\be
|j,j\ra=|\f12,\f12\ra^{\otimes 2j}=|\uparrow\ra^{\otimes 2j}
\,,\quad
\la j,j | \vJ| j,j\ra =2j \la \uparrow|\f\vsigma2|\uparrow\ra=j\mat{c}{0 \\ 0 \\ 1}
\,,\quad
\sqrt{\la j,j | \vJ^2| j,j\ra -\la j,j | \vJ| j,j\ra^2 }=\sqrt{j}
\,.
\ee
We act on this highest weight state to define $\SU(2)$ coherent states defining semi-classical quantum 3-vectors of norm $j$ with arbitrary direction:
\be
\forall\xi\in\C^2\,,\qquad
|j,\xi\ra
=
\sqrt{\la\xi|\xi\ra}^{2j}\,G_{\xi}\,|j,j\ra
=
|\f12,\xi\ra^{\otimes 2j}
\,,\qquad
\la j,\xi| \vJ| j,\xi\ra
=
j\,\f{\vu_{\xi}}{|\vu_{\xi}|}
\,.
\ee

The second step is to introduce coherent intertwiners for every node of the boundary graph. Following \cite{Livine:2007vk,Freidel:2010tt,Dupuis:2010iq}, we consider tensor products of $\SU(2)$ coherent states and group average them to project them on the $\SU(2)$-invariant space of intertwiners. So, considering a node $n$ with $E_{n}$ links attached to it, all considered as outgoing to start with and each carrying a spin $j_{l}$ with $l$ running from 1 to $E_{n}$, a coherent intertwiner depends on the choice of $E_{n}$ spinors $\xi_{l}$ and is defined as:
\be
\cI_{n}[\{\xi_{l}\}_{l=1..E_{n}}]
\,=\,
\int_{\SU(2)}\rd g\,
g\act\bigotimes_{l=1}^{E_{n}} |j_{l},\xi_{l} \ra
\,=\,
\int_{\SU(2)}\rd g\,
\bigotimes_{l=1}^{E_{n}}g\,|j_{l},\xi_{l} \ra
\,.
\ee
To account for incoming links, we adjust this definition by taking the dual of the incoming spinors:
\be
\label{def:cohintertwiner}
\cI_{n}[\{\xi_{l}\}_{l \ni n}]
\,=\,
\int_{\SU(2)}\rd g\,
\bigotimes_{l,\,s(l)=n}
g\, |j_{l},\xi_{l} \ra
\otimes
\bigotimes_{l,\,t(l)=n}
[j_{l},\xi_{l}|\,g^{-1}
\,.
\ee

The next step is to glue those coherent intertwiners at fixed spins along the graph links. Following the original work \cite{Livine:2007vk}, we refer to such spin networks as Livine-Speziale (LS) spin networks. They are defined by choosing a spinor $\xi_{l}^{n}\in\C^{2}$ around each vertex $n$ for each link $l\ni n$ attached to it. This amounts to a spinor at every end of each link, so we equivalently write $\xi_{l}^{s,t}$ for the spinors at the source and target vertices of the link $l$. Gluing the corresponding coherent intertwiners by the $\SU(2)$ group elements $h_{l}$ along the links gives a spin network wave-function:
\be
\label{eq:LS_boundary_state}
\psi_{LS}^{\{j_{l},\xi^{n}_{l}\}}(\{h_l\}_{l\in\Gamma})
\,=\,
\int_{SU(2)} \prod_{n} \rd g_{n}
\,
\prod_{l}\, [j_l, \xi_{l}^{t} |g^{-1}_{t(l)} h_l g_{s(l)} |j_l, \xi_{l}^{s} \ra
\,=\,
\int_{SU(2)} \prod_{n} \rd g_{n}
\,
\prod_{l}\, [ \xi_{l}^{t} |g^{-1}_{t(l)} h_l g_{s(l)} | \xi_{l}^{s} \ra^{2 j_l}
\,.
\ee
Such a boundary state can be interpreted as a quantized discrete geometry on the 2D boundary cellular complex $\pp\Delta$ (see \cite{Freidel:2010aq} for their interpretation in terms of twisted geometries). The length of an edge $e\in\pp\Delta$ is given in Planck unit by the spin $j_{l}$ of the link $l$ dual to it. Each face $f\in\pp\Delta$ is a polygon whose shape is encoded in the intertwiner living at the node $n$ dual to the face. As illustrated in figure \ref{fig:spinornormal}, the 3-vector defined by the spinors  carried by the links around the face give the semi-classical edge vectors around the polygon.
\begin{figure}[h!]
		\begin{tikzpicture}[scale=1,>=latex]
		\coordinate (A1) at (-2.1,0.61);
		\coordinate (A2) at (-0.22,1.95);
		\coordinate (A3) at (2.99,1.07);
		\coordinate (A4) at (1.89,-1.06);
		\coordinate (A5) at (-0.93,-1.98);
		
		\coordinate (I) at (0,0);
		\coordinate (N1) at (-1.38,1.84);
		\coordinate (N2) at (2.11,2.23);
		\coordinate (N3) at (3.93,-0.06);
		\coordinate (N4) at (0.81,-2.69);
		\coordinate (N5) at (-2,-1);
		
		\draw[dashed,color=blue,->] (A1)-- node[pos=1,above]{$\vec{N}_1$} (A2);
		\draw[dashed,color=blue,->] (A2)-- node[pos=1,right]{$\vec{N}_2$} (A3);
		\draw[dashed,color=blue,->] (A3)-- node[pos=1,below right]{$\vec{N}_3$} (A4); 
		\draw[dashed,color=blue,->] (A4)-- node[pos=1,below]{$\vec{N}_4$} (A5); 
		\draw[dashed,color=blue,->] (A5)-- node[pos=1,left]{$\vec{N}_5$} (A1);
		
		\draw[dotted,color=red] (A2)--(A5); \draw[dotted,color=red] (A2)--(A4);
		
		\node at (I) {$\bullet$};
		\draw[->-=0.5,>=stealth] (I)-- (N1); \draw[->-=0.5,>=stealth] (I)--(N2); \draw[->-=0.5,>=stealth] (I)--(N3); \draw[->-=0.5,>=stealth] (I)--(N4); \draw[->-=0.5,>=stealth] (I)--(N5);
		
		\node at (N1) [above,scale=1.2]	{$|j_1, \xi_1 \ra$};
		\node at (N2) [above,scale=1.2]	{$|j_2, \xi_2 \ra$};
		\node at (N3) [right,scale=1.2]	{$|j_3, \xi_3 \ra$};
		\node at (N4) [below,scale=1.2]	{$|j_4, \xi_4 \ra$};
		\node at (N5) [below left,scale=1.2]	{$|j_5, \xi_5 \ra$};
		\end{tikzpicture}
	\caption{
Coherent intertwiners of a LS spin network as semi-classical polygons: example of a 5-valent intertwiner in black and its dual pentagon in dashed blue.
Each link $l$ (here chosen with outgoing orientation) carries a spinor $|j_l,\xi_l \ra$, which defines an edge vector of the polygon $\vec{N}_l =\la j_l,\xi_l | \vec{J} | j_l,\xi_l \ra$ of length $|\vec{N}_l|=j_{l}$.
Note that the edge vectors are not necessarily co-planar and that the corresponding polygon is thus not planar in general. For instance, splitting the pentagon into three triangles along the dotted red lines, there can be a non-vanishing dihedral angle between adjacent triangles. This splitting corresponds to the decomposition of the 5-valent intertwiners into three 3-valent intertwiners.}
	\label{fig:spinornormal}
\end{figure}
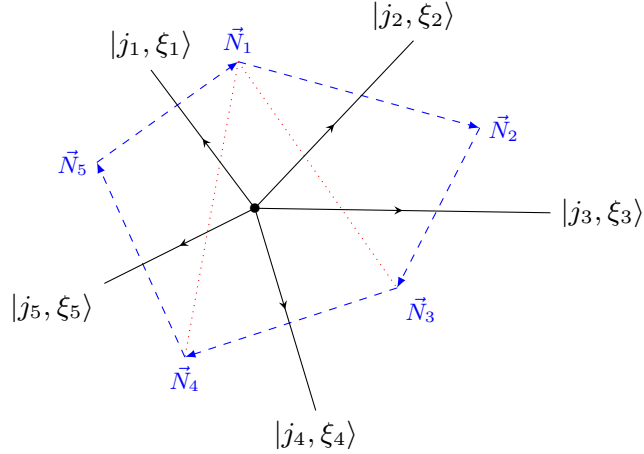

The final step is to define coherent spin networks with spin superpositions.
To this purpose, we introduce a complex coupling $y_l$ on each link, which will play the role of a conjugate variable to the spins $j_{l}$ and control its probability distribution:
\begin{equation}
	\psi^{\{y_{l},\xi^{n}_{l}\}}(\{h_l\})
	=
	\sum_{\{j_l\}} \omega\big{[}\{j_l\}\big{]} \left(\prod_{l} y_l^{2 j_l}\right) \psi_{LS}^{\{j_{l},\xi^{n}_{l}\}}(\{h_l\})
	\; ,
\end{equation}
where the weight $\omega\big{[}\{j_l\}\big{]}$ is a combinatorial factor depending on the spins,
\begin{equation}
	\omega\big{[}\{j_l\}\big{]} = \prod_{n} \f{(J_n+1)!}{\prod_{l\ni n} \sqrt{(2 j_l)}!}
	= \f{\prod_{n} (J_n+1)!}{\prod_{l} (2 j_l)!}
	\quad\textrm{where $J_n = \sum_{l \ni n} j_l$ is the sum of the spins around each node}.
	\label{eq:weight_general}
\end{equation}
Up to the node factorial contribution  $(J_n+1)!$, this weight defines a Poisson distribution on the spins.

Such coherent states have the natural mathematical interpretation as a generating function for LS spin networks \cite{Dupuis:2010iq,Dupuis:2011fz,Bonzom:2012bn}.
The sum over the spins $\{j_{l}\}$ defines a series with a non-zero radius of convergence \cite{Freidel:2012ji} and the boundary state is defined outside of the radius of convergence as the analytic continuation of the series. Such a coherent state is interpreted as a superposition of discrete geometries on the 2D boundary with different edge lengths \cite{Bonzom:2015ova}. The probability distribution of those edge lengths is given by the  weight $ \omega\big{[}\{j_l\}\big{]}$, the couplings $\prod_{l} y_l^{2 j_l}$ and the norm of the LS spin networks $\psi_{LS}^{\{j_{l},\xi^{n}_{l}\}}(\{h_l\})$.
Indeed, we can compute the expectation value of any observable $\cO$ of the spins on a state $\psi^{\{y_{l},\xi^{n}_{l}\}}$, 
\be
\la\cO[\{j_l\}]\ra_{\{y_{l},\xi^{n}_{l}\}}
=
\f{\la \psi^{\{y_{l},\xi^{n}_{l}\}}|\cO[\{j_l\}]|\psi^{\{y_{l},\xi^{n}_{l}\}}\ra}{\la \psi^{\{y_{l},\xi^{n}_{l}\}}|\psi^{\{y_{l},\xi^{n}_{l}\}}\ra}
=
\f{\sum_{\{j_l\}}\cO[\{j_l\}]\cP_{\{y_{l},\xi^{n}_{l}\}}[\{j_l\}]}{\sum_{\{j_l\}}\cP_{\{y_{l},\xi^{n}_{l}\}}[\{j_l\}]}
\,,
\ee
where the probability distribution for the spins is computed directly in terms of the norm of the coherent intertwiners:
\be
\cP_{\{y_{l},\xi^{n}_{l}\}}[\{j_l\}]
=
\omega\big{[}\{j_l\}\big{]}^{2} \left(\prod_{l} y_l^{4 j_l}\right)
\,\prod_{l}\f1{(2j_{l}+1)}
\,\prod_{n} \big{\la} \cI_{n}[\{\xi_{l}\}_{l \ni n}]
\,\big{|}\,
\cI_{n}[\{\xi_{l}\}_{l \ni n}]\big{\ra}
\,.
\ee
The norm of a coherent intertwiner was computed in \cite{Freidel:2010tt,Bonzom:2012bn} from its generating function, here keeping the node index $n$ implicit:
\be
\big{\la} \cI[\{\xi_{l}\}]
\,\big{|}\,
\cI[\{\xi_{l}\}]\big{\ra}
=
\int_{\SU(2)} \rd g\,
\prod_{l}\la\xi_{l}|g|\xi_{l} \ra^{2j_{l}}
\,,\qquad
\int_{\SU(2)} \rd g\,\,
e^{\sum_{l}\la\xi_{l}|g|\xi_{l} \ra}
=
\sum_{J\in\N}
\f1{J!(J+1)!}
\left(\f12\sum_{l,\tl}\big|[\xi_{l}|\xi_{\tl}\ra\big|^{2}\right)^{J}
\,,
\ee
from which we can extract the term of power $2j_{l}$ in each spinor $\xi_{l}$ to get the norm of a coherent intertwiner. Aside the exact expression, it is actually straightforward to derive its behavior at large spins by computing the saddle point approximation of the integral over $\SU(2)$ as one rescales homogeneously the spins by a large factor, $j_{l}\mapsto \lambda j_{l}$ with $\lambda \rightarrow +\infty$. As shown in \cite{Livine:2007vk}, the maximum of the integrand $\prod_{l}\la\xi_{l}|g|\xi_{l} \ra^{2j_{l}}$ is necessarily reached at $g=\id$ and this is a stationary point if and only if the spinors satisfy the closure condition
\be
\vcC=\sum_{l}j_{l}\f{\la\xi_{l}|\vsigma|\xi_{l}\ra}{\la\xi_{l}|\xi_{l}\ra}
=\sum_{l}j_{l}\f{\vu_{\xi_{l}}}{|\vu_{\xi_{l}}|}
=0
\,.
\ee
Let us stress that the closure conditions are invariant under homogeneous rescaling of the spins $j_{l}\mapsto \lambda j_{l}$.
Thus, as confirmed by numerics, the behavior is very different whether  the closure condition is satisfied or not. On the one hand, when the closure vector vanishes, $\vcC=0$, it means geometrically that the edge vectors around the node actually close and form a polygon. In that case, the norm $\big{\la} \cI[\{\xi_{l}\}] \,\big{|}\,\cI[\{\xi_{l}\}]\big{\ra}$ generically behaves as $\lambda^{-\f32}\prod_{l}\la\xi_{l}|\xi_{l}\ra^{2j_{l}}$ .  On the other hand, the case when the closure vector does not vanish, the integral is exponentially suppressed and the coherent intertwiner norm decreases as $\exp[-\lambda^{2}|\vcC|^{2}]$.

This means that the effect of the coherent intertwiner norm in the spin probability distribution is two-fold. First, it selects the spins configuration such that the closure conditions are satisfied around every node $n$.
Second, putting aside the algebraic term and focusing on the leading order exponential behavior, it produces spinor norm factors $\la\xi_{l}|\xi_{l}\ra^{2j_{l}}$ that can be reabsorbed in the couplings $y_{l}\rightarrow Y_{l}=\sqrt{\la\xi^{n}_{l}|\xi^{n}_{l}\ra}\, y_{l}$ upon normalizing all the spinors $\xi^{n}_{l}\rightarrow \sqrt{\la\xi^{n}_{l}|\xi^{n}_{l}\ra}\,\hat{\xi}^{n}_{l}$.

Putting all the contributing factors together, combinatorial weight, couplings and intertwiner norms, using the Stirling formula approximating the factorials in the combinatorial weight at large spins, assuming that the spins satisfy the closure conditions around every node, and finally focusing on the exponential factors and considering the algebraic factors as sub-leading,  the spin probability distribution behaves at large spins as:
\be
\cP_{\{y_{l},\xi^{n}_{l}\}}[\{j_l\}]
\sim
e^{2\Phi[\{j_l\}]}
\qquad\textrm{with}\quad
\Phi[\{j_l\}]
=
\sum_{l }2j_{l}(\ln Y_{l}-\ln j_{l} ) +\sum_{n}J_{n}\ln J_{n}
\,.
\ee
The extrema of this probability distribution is given by a vanishing derivative with respect to each spin $j_{l}$:
\be
\forall l\in\Gamma\,,\qquad
\pp_{j_{l}}\Phi=0
\quad\Longleftrightarrow\quad
\ln \f{Y_{l}^{2}J_{s(l)}J_{t(l)}}{4 j_{l}^{2}}=0
\quad\Longleftrightarrow\quad
\f{j_{l}^{2}}{J_{s(l)}J_{t(l)}}=\f{Y_{l}^{2}}4
\,.
\ee
This equation for the peaks of the spin  probability distribution has the crucial property of being invariant homogeneous rescaling of the spins $j_{l}\mapsto \lambda j_{l}$. This scale-invariance of the extrema of the   probability distribution defined by this class of coherent spin networks was put forward in \cite{Bonzom:2015ova}. The fixed point equations for the spins induce non-trivial constraints on the couplings $Y_{l}$. But, once  the couplings allow for a solution for the spins $\{j_{l}\}$, then there exist a whole line of solution obtained by arbitrary global rescalings of the spins.
 In the case of a planar 3-valent graph, these constraints determine a triangulation whose angles are fixed in terms of the couplings $Y_{l}$ and whose edge lengths give the spins \cite{Bonzom:2015ova}. Furthermore, in that case, it is understood that such fixed point couplings are related to the critical couplings of the Ising model on the considered graph \cite{Bonzom:2015ova,Bonzom:2019dpg}.
 
So we have three types of behavior for the probability distribution. Either, the couplings $Y_{l}$ are within the convergence radius of the series defining the coherent spin network and the spin probability distribution is peaked on low spins. Or the couplings $Y_{l}$ make the series diverge in which case the spin probability distribution favors large spins. And finally, in between, there is a critical regime, where the couplings $Y_{l}$ lead to a line of maxima of the spin probability distributions where the spins are the edge lengths, up to arbitrary global rescaling, of a planar 2D cellular decomposition whose angles are determined by the critical couplings. Such a line of stationary points corresponds to a pole for the coherent spin network wave-function \cite{Freidel:2010tt,Bonzom:2012bn,Bonzom:2015ova,Bonzom:2019dpg}. 

 
 \medskip
 
 We would like to conclude this presentation of coherent spin network states by providing  an expression of these states as Gaussian integrals, which allows for explicit and exact calculations and which will be at the heart of our work on the toroidal lattice.
The starting point is the expression \eqref{eq:LS_boundary_state} of the LS spin network wave-functions as group averaging at every node of the graph, which gives:
 \be
 \label{def:psi1}
\psi^{\{y_{l},\xi^{n}_{l}\}}(\{h_l\})
=
\sum_{\{j_l\}} \prod_{l} \f{y_l^{2 j_l}}{(2 j_l)!}\,\prod_{n} (J_n+1)!\,
\int_{SU(2)} \prod_{n} \rd g_{n}
\,
\prod_{l}\, [ \xi_{l}^{t} |g^{-1}_{t(l)} h_l g_{s(l)} | \xi_{l}^{s} \ra^{2 j_l}
\,.
\ee
Now, a lemma shown in \cite{Bonzom:2012bn} allows to switch the integrals over $\SU(2)$ group elements into integrals over spinors in $\C^{2}$ by absorbing the  node factors $(J_n+1)!$.
\begin{lemma}
From \cite{Bonzom:2012bn}, the integral of a homogeneous polynomial $P[g]$ in a group element $g\in\SU(2)$ of even degree $2J$ can be expressed of a Gaussian integral over $\C^{2}$:
\be
\int_{\SU(2)}\rd g\,P[g]
=
\f1{(J+1)!}\int_{\C^{2}}\f{e^{-\la w|w\ra}\rd^{4}w}{\pi^{2}}\,P[g(w)]
\,,
\ee
where $g(w)=|w\ra\la \uparrow| +|w][ \uparrow|$
is a $\SU(2)$ matrix rescaled by a norm factor $\la w|w\ra$.
\end{lemma}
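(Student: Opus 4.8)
The plan is to use that $\SU(2)$, as a manifold, is the unit $3$-sphere $S^{3}=\{w\in\C^{2}:\la w|w\ra=1\}$ inside $\C^{2}\simeq\R^{4}$: I would push the Haar integral onto that sphere and let the radial part of the Gaussian over $\C^{2}$ regenerate the node factor $(J+1)!$. First, when $\la w|w\ra=1$ the matrix $g(w)=|w\ra\la\uparrow|+|w][\uparrow|$ coincides with the element $G_{w}\in\SU(2)$ defined above, and $w\mapsto G_{w}$ is a bijection of $S^{3}$ onto $\SU(2)$ intertwining the left translation $G\mapsto hG$ on $\SU(2)$ with the linear action $w\mapsto hw$ of $\SU(2)\subset\U(2)$ on $\C^{2}$ (this uses $\varsigma h=h\varsigma$ for the second column). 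Since that action is unitary, hence a Euclidean isometry of $\R^{4}$, it preserves the round measure $\rd\Omega$ on $S^{3}$, so the pushforward of $\rd\Omega$ is a left-invariant measure on $\SU(2)$, which by uniqueness of Haar on a compact group equals $\mathrm{vol}(S^{3})\,\rd g=2\pi^{2}\,\rd g$ with $\rd g$ normalized.

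For a general nonzero spinor I would use the polar decomposition $w=r\hat w$, with $r=\sqrt{\la w|w\ra}\ge 0$ and $\hat w\in S^{3}$, under which $\rd^{4}w=r^{3}\,\rd r\,\rd\Omega(\hat w)$; since the entries of $g(w)$ are real-linear in $w$, $g(w)=r\,G_{\hat w}$, and homogeneity of degree $2J$ gives $P[g(w)]=r^{2J}P[G_{\hat w}]=\la w|w\ra^{J}P[G_{\hat w}]$. Hence the right-hand side of the claimed identity factorizes into a radial and an angular piece,
\be
\int_{\C^{2}}\f{e^{-\la w|w\ra}\,\rd^{4}w}{\pi^{2}}\,P[g(w)]
=
\f1{\pi^{2}}\left(\int_{0}^{\infty}e^{-r^{2}}r^{2J+3}\,\rd r\right)\left(\int_{S^{3}}P[G_{\hat w}]\,\rd\Omega(\hat w)\right),
\ee
and it remains to evaluate the two factors. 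The radial one is elementary: setting $t=r^{2}$, $\int_{0}^{\infty}e^{-r^{2}}r^{2J+3}\,\rd r=\tfrac12\Gamma(J+2)=\tfrac12(J+1)!$, and the angular one is $2\pi^{2}\int_{\SU(2)}\rd g\,P[g]$ by the previous paragraph; the constants collapse as $\f1{\pi^{2}}\cdot\tfrac12(J+1)!\cdot2\pi^{2}=(J+1)!$, yielding $(J+1)!\int_{\SU(2)}\rd g\,P[g]$, i.e. the lemma after dividing by $(J+1)!$.

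The only genuinely non-routine point is the measure identification: one must check that the Jacobian of $w\mapsto G_{w}$ relates $\rd\Omega$ and $\rd g$ by the \emph{constant} $2\pi^{2}$, and not by some $J$-dependent factor, and, consistently, that homogeneity is applied with the power $\la w|w\ra^{J}$ (not $\la w|w\ra^{2J}$); both follow from left-invariance and from tracking $g(rw)=r\,g(w)$. Everything else is standard Gaussian/$\Gamma$-function bookkeeping. A more computational alternative, close to the derivation of \cite{Bonzom:2012bn}, is to bypass the sphere picture and verify the identity on the spanning monomials in the Wigner matrix elements $D^{j}_{mn}(g)$, matching Schur orthogonality under Haar on the left with the Gaussian moments of the components of $w$ on the right.
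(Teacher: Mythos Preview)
Your proof is correct and follows precisely the geometric idea the paper itself sketches immediately after the lemma, namely embedding $\SU(2)$ as the unit $3$-sphere in $\C^{2}\simeq\R^{4}$ and separating the Gaussian integral into a radial part (yielding the $(J+1)!$) and an angular part (yielding the Haar integral). The paper does not spell out a proof but only cites \cite{Bonzom:2012bn} and states this intuition; your argument is the natural detailed version of it, with the bookkeeping of $\mathrm{vol}(S^{3})=2\pi^{2}$ and $\tfrac12\Gamma(J+2)$ carried out correctly.
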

This lemma embeds  $\SU(2)$ as the 3-sphere in the four-dimensional flat Euclidean space $\C^{2}\sim\R^{4}$ and realizes the integral over the 3-sphere as a Gaussian integral over $\R^{4}$. It allows to transform the group integrals entering the definition of the coherent spin network wave-functions into Gaussian integrals over spinor variables attached to the graph nodes:
\be
\label{def:psi2}
\psi^{\{y_{l},\xi^{n}_{l}\}}(\{h_l\})
=
\int_{\C^{2N}}\prod_{n}^{N}\f{e^{-\la w_{n}|w_{n}\ra}\rd^{4}w_{n}}{\pi^{2}}\,\,
e^{\sum_{l} \big{[} \xi_{l}^{t} \big{|}\,g(w_{t(l)})^{\dagger}\, h_l \,g(w_{s(l)})\, \big{|} \xi_{l}^{s} \big{\ra}}
\ee
Although the $\SU(2)$ gauge invariance of the coherent spin network wave-function at every node was obvious in the original definition \eqref{def:psi1} due to the explicit group averaging with group elements at every node, this new expression \eqref{def:psi2} in terms of  a Gaussian integral over spinors is still clearly gauge invariant. Indeed the 2$\times$ 2 matrices, $g(w)$ simply transform under the left $\SU(2)$ action:
\be
\forall k\in\SU(2)\,,\quad
kg(w)=g(kw)\,,
\nn
\ee
so that a $\SU(2)$ gauge transformation $h_{l}\mapsto k_{t(l)}^{-1}h_{l}k_{s(l)}$ for $\{k_{n}\}\in\SU(2)^{\times N}$ can be re-absorbed by a simple redefinition of the spinors $|w_{n}\ra\mapsto k_{n}\,|w_{n}\ra$ since the Gaussian measure $e^{-\la w|w\ra}\rd^{4}w$ is invariant under $\SU(2)$ transformations.

A slight variation of this expression consists in integrating over the inverse group elements at the nodes $g_{n}\mapsto g_{n}^{-1}$ in \eqref{def:psi1}, leading to switching the place of the $\dagger$ in \eqref{def:psi2}:
\beq
\label{def:psi3}
\psi^{\{y_{l},\xi^{n}_{l}\}}(\{h_l\})
&=&
\int_{\C^{2N}}\prod_{n}^{N}\f{e^{-\la w_{n}|w_{n}\ra}\rd^{4}w_{n}}{\pi^{2}}\,\,
e^{\sum_{l} \big{[} \xi_{l}^{t} \big{|}\,g(w_{t(l)})\, h_l \,g(w_{s(l)})^{\dagger}\, \big{|} \xi_{l}^{s} \big{\ra}}
\\
&=&
\int_{\C^{2N}}\prod_{n}^{N}\f{e^{-\la w_{n}|w_{n}\ra}\rd^{4}w_{n}}{\pi^{2}}\,\,
e^{
\sum_{l}
h_{l}^{\uparrow\uparrow}\la w_{s(l)}|X_{l}|w_{t(l)}\ra
+
h_{l}^{\uparrow\downarrow}[ w_{s(l)}|X_{l}|w_{t(l)}\ra
+
h_{l}^{\downarrow\uparrow}\la w_{s(l)}|X_{l}|w_{t(l)}]
+
h_{l}^{\downarrow\downarrow}[ w_{s(l)}|X_{l}|w_{t(l)}]
}
\,,\nn
\eeq
where $X_{l}=|\xi^{s}_{l}\ra [\xi^{t}_{l}|$ and the component $h_{l}^{mm'}$ stands for $\la m|h_{l}|m'\ra$ (i.e.  $h_{l}^{\uparrow\uparrow}=\la \uparrow|h_{l}|\uparrow\ra$ and so on).
Since  the left $\SU(2)$ action on  $g(w)^{\dagger}=|\uparrow \ra\la w|+|\uparrow ][w|$ is not obvious anymore, the $\SU(2)$ gauge invariance is harder to read. Nevertheless, this is resolved by a direct mapping between the matrices $g(w)$ and their complex conjugate $g(w)^{\dagger}$:
\be
g(w)^{\dagger}=g(w^{*})
\qquad\textrm{with}\quad
w^{*}=\mat{c}{\bar{w}^{0}\\-w^{1}}
\,,\quad
(w^{*})^{*}=w
\,,\quad
kg(w)^{\dagger}=kg(w^{*})=g(kw^{*})=g((k^{\dagger}w)^{*})
\,,
\ee
with the Gaussian measure $e^{-\la w|w\ra}\rd^{4}w$ clearly invariant under the change of variable $w\mapsto w^{*}$. This is this final expression \eqref{def:psi3} for the coherent spin network wave-function as the integral of a spinorial action that we will use to compute the Ponzano-Regge amplitude on the solid torus.

The coherent spin network wave-function can then be computed exactly as a complex Gaussian integral and expressed in terms of the corresponding determinant \cite{Freidel:2010tt,Bonzom:2015ova}.
To get the Ponzano-Regge amplitude for such a coherent spin network boundary state on the 2-sphere bounding the 3-ball, one evaluates this wave-function on the identity $h_{l}=\id\,,\,\,\forall l$, while to get the Ponzano-Regge amplitude on the 2-torus bounding a solid cylinder, as we consider in the next section, one will evaluate this wave-function on the holonomy around the non-contractible cycle of the cylinder and then further integrate over that holonomy.

\subsection{Quantum Boundary State on the Torus}
\label{sec:torus_notation}

This paper is dedicated to the study of the Ponzano-Regge amplitude on a manifold with the topology of a 3D solid torus with a 2D torus boundary. We actually consider a 3D cylinder with a disk as its basis. As illustrated in figure \ref{fig:cylinder}, this is meant to describe the evolution of the disk in time, from an initial disk geometry to a final disk geometry.
We choose a 2D square lattice on the boundary torus. Taking the (1-skeleton of the) dual of this boundary cellular complex gives the boundary graph $\Gamma$, which is again a square lattice.
%
%
Sidestepping the question of selecting specific initial and final states on the disk, we glue the final disk back onto the initial disk. This gluing is done with a twist angle $\gamma$---that is, we  rotate the final disk by an angle $\gamma$ before identifying it with the initial disk, obtaining a twisted filled torus with a 2D twisted torus boundary.
This amounts to  computing the trace of the Ponzano-Regge transition amplitude for a disk geometry contracted with a rotation operator (and thus a translation around the 1D boundary circle).

The advantage of using a square lattice, or quadrangulation, is not only that it is naturally adapted to the geometry of the 2D torus and easily extendable to a 3D cellular complex of the solid torus (cut into camembert parts or cake pieces as described in \cite{PRholo1,PRholo2}), but its regularity allows a simple formulation of coarse-graining and refinement towards the continuum limit. 
This will allow a detailed analysis of how the Ponzano-Regge amplitude depends on the details of the boundary graph and quantum state on the boundary.
Moreover, this choice will allow us to perform exact computations of the Ponzano-Regge partition function with for coherent spin network boundary states.
\begin{figure}[h!]
	\begin{center}
	\begin{tikzpicture}[scale=1]
		\coordinate (C1) at (1.5,2);
		\coordinate (D1) at (-1.5,2);
		\coordinate (C2) at (1.5,-2);
		\coordinate (D2) at (-1.5,-2);
		\coordinate (C3) at (1.5,-2.8);
		\coordinate (C0) at (1.5,2.8);
		\coordinate (D0) at (-1.5,2.8);
		
		\draw [thick] (C1) arc[x radius=1.5, y radius=0.5, start angle=0, end angle=-360]node[pos=65/360](E1){};
		\draw [thick] (C2) arc[x radius=1.5, y radius=0.5, start angle=0, end angle=-180];
		\draw [thick,dashed] (C2) arc[x radius=1.5, y radius=0.5, start angle=0, end angle=180];
		\draw (C1) node{$\bullet$} ++(1,0) node{initial disk};
		\draw (C2) node{$\bullet$} ++(1,0) node{final disk};
		\draw [thick] (C1)--(C2);
		\draw [thick] (D1)--(D2);
		\draw [dotted] (C0)--(C1);
		\draw [dotted] (D0)--(D1);

		\draw(C0) arc[x radius=1.5, y radius=0.5, start angle=0, end angle=-360]node[pos=65/360](E0){$\bullet$};		
		\draw(C3) arc[x radius=1.5, y radius=0.5, start angle=0, end angle=-360] node[pos=65/360]{$\bullet$};
		\draw[->,>=latex](1.5,-2.4) arc[x radius=1.5, y radius=0.5, start angle=0, end angle=-70] ;
		\draw (2.6,-2.6) node{twist angle $\gamma$};
		\draw [dotted] (E0)--(E1);
		
	\end{tikzpicture}\end{center}
	\caption{Closing the solid cylinder with a twist: the final disk is glued back onto the initial disk with a twist angle $\gamma$ in order to form a twisted solid torus.}
	\label{fig:cylinder}
\end{figure}
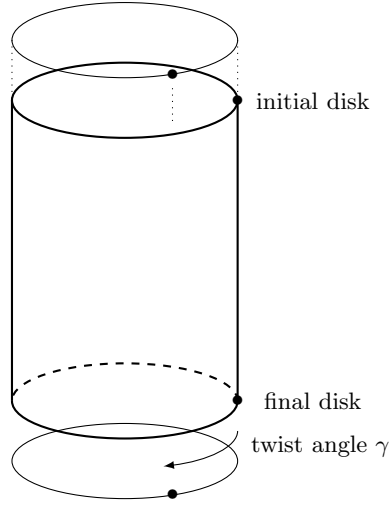

We consider $N_{t}$ vertical slides (the ``time'' direction) and $N_x$ horizontal slices (the ``space'').
Each node of the boundary graph  is denoted by $(t,x)$. This square lattice is closed with periodic conditions, 
\begin{equation}
	\forall (t,x)\in [0,N_t-1]\times[0,N_x-1]\quad (t,N_x) \equiv (t,0) \quad \text{and} \quad (N_t,x) \equiv (0,x + N_\gamma).
	\label{eq:periodicity_condition}
\end{equation}
The torus is trivially glued in the space direction, while the parameter $N_\gamma$ creates a shift in the identification in the time direction, as illustrated in figure \ref{fig:discretized_boundary_torus},  leading to a twist angle
\begin{equation}
	\gamma = 2 \pi \f{N_\gamma}{N_x} \; .
\end{equation}
It could be considered as slightly misleading to identify the vertical and horizontal directions as time and space since we are working in Euclidean signature. It is nevertheless legitimate to consider the vertical direction as the direction of evolution. The horizontal slices describe the spatial geometry (the disk and its circular boundary) and we can study its evolution as we slide along the cylinder. Then the (solid) torus is obtained by identifying the initial and final state of spatial geometry up to a twist.
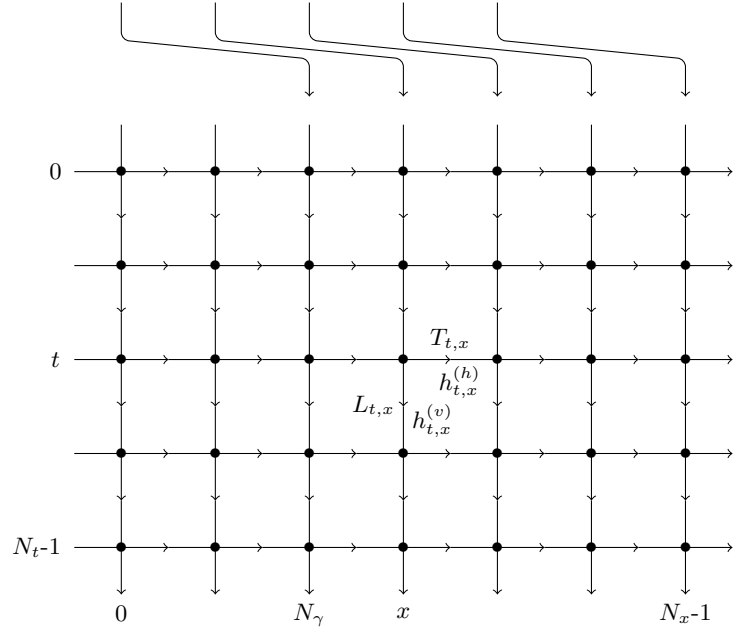
\begin{figure}[h!]
	\begin{center}
	\begin{tikzpicture}[scale=1.25]
	\foreach \i in {0,...,6}{
		\foreach \j in {-1,...,3}{
			\draw (\i,\j) node {$\bullet$};
			\draw[<-] (\i,\j-.5) --(\i,\j+.5);
			\draw[->] (\i-.5,\j) --(\i+.5,\j);
		}
	}
	
	\foreach \i in {0,...,4}{
		\draw[rounded corners=3 pt,->] (\i,4.5+0.3) --(\i,4+0.1+0.3)-- (\i+2,4-0.1+0.3)--(\i+2,3.5+0.3)   ;
	}
	
	\draw (0,-1.7) node{$0$};
	\draw (2,-1.72) node{$N_\gamma$};
	\draw (3,-1.7) node{$x$};
	\draw (6,-1.7) node{$N_x$-1};
	
	\draw(-.9,-1) node{$N_t$-1};
	\draw(-0.7,1) node{$t$};
	\draw(-0.7,3) node{$0$};
	
	\draw(3.6,1) node[below=-1]{$\hh_{t,x}$}; \draw (3.5,1) node[above]{$T_{t,x}$};
	\draw(3,0.35) node[right]{$\hv_{t,x}$}; \draw (3,0.5) node[left]{$L_{t,x}$};
	
	\end{tikzpicture}
\end{center}
	\caption{The boundary graph on the twisted torus defined as a square lattice: the space (``horizontal'') periodic condition is without twist while the time (``vertical'') periodic condition is with a twist angle is $\gamma=2\pi \f{N_{\gamma}}{N_{x}}$. The two links outgoing from a node $(t,x)$ carry respectively the group element  $\hh_{t,x}$ and spin $T_{t,x}$ for the  link in the space direction, and the group element  $\hv_{t,x}$ and spin $L_{t,x}$ for the  link in the time direction.}
	\label{fig:discretized_boundary_torus}
\end{figure}

\begin{figure}[h!]
	\begin{center}
	\begin{tikzpicture}[scale=1.25]
	
	\coordinate (Tr) at (8,1); \coordinate (Tl) at (12,1);
	\coordinate (Lu) at (10,2); \coordinate (Ld) at (10,0);
	\coordinate (C) at (10,1);
	
	\coordinate (A1) at (11,0.5) ; \coordinate (A2) at (11,1.5); \coordinate (A3) at (9,1.5); \coordinate (A4) at (9,0.5);
	
	\draw[gray!=40,thick,->] (A1)--(A2);
	\draw[gray!=40,thick,->] (A2)--(A3);
	\draw[gray!=40,thick,->] (A3)--(A4);
	\draw[gray!=40,thick,->] (A4)--(A1);
	
	\draw[->-=0.3] (Tr) -- node[pos = 0,above]{$|T_{t,x-1},\uparrow ]$} (C);
	\draw[->-=0.7] (C) --  node[above,pos = 1]{$| T_{t,x},\uparrow \ra$} (Tl);
	\draw[->-=0.3] (Lu) -- node[pos = 0,above]{$|L_{t-1,x},+ ]$} (C); 
	\draw[->-=0.7] (C) -- node[below,pos=1]{$|L_{t,x},+ \ra$} (Ld);
	\draw (C) node{$\bullet$};
	\end{tikzpicture}
\end{center}
	\caption{4-valent coherent intertwiner living at the node  $(t,x)$ of the square lattice, determined by the spins and spinors attached to each incoming and outgoing link. It is interpreted geometrically as a rectangle (in grey) dual to the node.}
	\label{fig:rectangleintertwiner}
\end{figure}
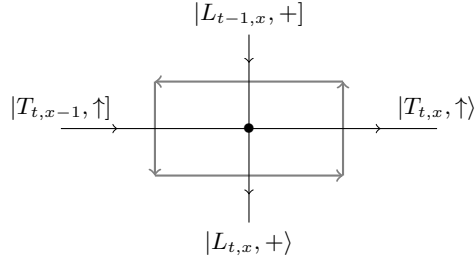

The boundary state $\psi$ is a wave-function of $SU(2)$ group elements living on the links of the boundary graph, respectively $\hh_{t,x}$ and $\hv_{t,x}$ along the  horizontal and vertical links outgoing from the node $(t,x)$. A spin network state is defined by the assignment of a spin $j_{l}$ to each link $l$ and an intertwiner $\cI_{n}$ to each node $n$.
So we assign to a spin $L_{t,x}\in\f\N2$ to each  vertical link between the nodes $(t,x)$ and $(t+1,x)$ and a spin $T_{t,x}$ to each horizontal link between the nodes $(t,x)$ and $(t,x+1)$.
As for  the nodes, since they are 4-valent, the intertwiners are not uniquely determined by the spins and we need to specify them. Following \cite{PRholo2}, we choose coherent intertwiners at every vertex $(t,x)$, representing semi-classical quantized rectangles on the boundary, 
\begin{equation}
	\cI_{t,x} = \int_{\SU(2)} \rd g_{t,x}\,
	|L_{t-1,x},+] \otimes |L_{t,x},+ \ra \otimes | T_{t,x-1}, \uparrow ] \otimes | T_{t,x}, \uparrow \ra
	\,,
	\label{eq:intertwiner_choice}
\end{equation}
with the four spinors around the node given by:
\be
|\uparrow \ra=\mat{c}{1 \\ 0}
\,,\quad
|\uparrow ]=|\downarrow \ra=\mat{c}{0 \\ 1}
\,,\quad
\sqrt{2}\,|+\ra=| \downarrow \ra + |\uparrow \ra =\mat{c}{1 \\ 1}
\,,\quad
\sqrt{2}\,|+]=| \downarrow \ra - |\uparrow \ra=\mat{c}{-1 \\ 1}
\,.
\nn
\ee
The spinor $|\uparrow \ra$ corresponds to a unit vector $\vu_{\uparrow}=\hat{e}_{z}$ upwards in the $z$-direction, while the spinor $|+ \ra$ corresponds to a unit vector $\vu_{+}=\hat{e}_{x}$ pointing in the $x$-direction. Taking the dual of a spinor flips the sign, thus the direction, of the corresponding vector. The coherent intertwiner defined above thus corresponds to a standing up  rectangle, as drawn in figure \ref{fig:rectangleintertwiner}.
With this choice of intertwiners, the LS spin network state, as defined in the previous section by equation \eqref{eq:LS_boundary_state}, reads\footnotemark{}
\begin{equation}
\label{eq:LS_boundary_state_torus}
\psi_{LS}(\{\hh_{t,x},\hv_{t,x}\})
=
\int_{\SU(2)}
\prod_{t,x = 0}^{N_t-1,N_x-1} \rd g_{t,x}\,
\la \uparrow |g^{-1}_{t,x+1} \hh_{t,x} g_{t,x} | \uparrow \ra^{2 T_{t,x}}
\,
\la + |g^{-1}_{t+1,x} \hv_{t,x}g_{t,x} | + \ra^{2 L_{t,x}}
\,.
\end{equation}
\footnotetext{%
Note that in this expression of the LS spin network, the dual spinors have disappeared and  the structure map does not appear explicitly. This is indeed compensated by the orientation of the links. From the definition of coherent intertwiners \eqref{def:cohintertwiner}, an ingoing edge carry an $\SU(2)$ coherent state with a dual spinor. Now, in the definition of the intertwiner \eqref{eq:intertwiner_choice}, all the links where considered outgoing. However, it is clear from the figure \ref{fig:discretized_boundary_torus} that half of the links are in fact  ingoing. The definition of the intertwiner must then be modified to take into account these ingoing links, by applying the structure map to those links and thus taking the dual spinor. Putting everything together, we end up with twice the structure map per link. Now recall that $\varsigma^2 = - \id$. This simply creates sign factors on each link. However, no sign actually appears in the definition of the LS spin network wave-function.
The reason behind this invariance is that for a non-trivial intertwiner to exist at a node $n$, the sum of the spins associated to the legs of the node must be an integer, here $(T_{t,x}+L_{t,x}+L_{t-1,x}+T_{t,x-1})\in\N$. Since switching the orientation of a link $l$ produces a sign  $(-1)^{2j_l}$, we see that the product of all of the signs contributions $(-1)^{2(T_{t,x}+L_{t,x})}$ simplifies to 1.
Interesting this even parity of the sum of the spins around each node leads to a $\Z_{2}$ invariance of the integrand under the local change of variable $g_{t,x} \mapsto -g_{t,x}$.
}

Finally, the definition of a coherent spin network involves the sum over the spins $T_{t,x}$ and $L_{t,x}$ to create a coherent spin superposition with weight defined in equation \eqref{eq:weight_general} as:
\begin{equation*}
\omega\big{[}\{T_{t,x},L_{t,x}\}_{t,x}x\big{]}
=
\prod_{t,x = 0}^{N_t-1,N_x-1} \frac{\lambda_{t,x}^{2 L_{t,x}} \tau_{t,x}^{2T_{t,x}}}{(2 L_{t,x})!(2T_{t,x})!} (J_{t,x}+1)!
\qquad \text{where} \quad J_{t,x} := T_{t,x} + L_{t,x} + T_{t,x-1} + L_{t-1,x} \; .
\end{equation*}
The parameters $\lambda_{t,x}$ and $\tau_{t,x}$ are the couplings controlling the probability distribution for the spins on  the links in respectively time and space directions, corresponding respectively to space and time edges.
The coherent spin network wave-function is then:
\begin{equation}
	\psi(\{\hh_{t,x},\hv_{t,x}\}) = \sum_{T_{t,x}} \sum_{L_{t,x}} \prod_{t,x} \left( \frac{\lambda_{t,x}^{2 L_{t,x}} \tau_{t,x}^{2T_{t,x}}}{(2 L_{t,x})!(2T_{t,x})!} (J_{t,x}+1)!\right) \, \psi_{LS}\left(\left\lbrace\hh_{t,x},\hv_{t,x} \right\rbrace \right) \; .
	\label{eq:boundary_state_scale_free}
\end{equation}
As shown previously, we can trade the sum over the spins for a Gaussian integral over spinors living at the graph nodes:
\begin{equation}
\label{def:psitorus}
\psi(\{\hh_{t,x},\hv_{t,x}\})
=
\int_{\C^2} \Bigg{[}\prod_{t,x}\f{e^{-\la w_{t,x} | w_{t,x} \ra}\rd^4 w_{t,x}}{\pi^2}\Bigg{]}
\,
e^{\tau_{t,x} \la \uparrow |g(w_{t,x+1}) \,\hh_{t,x} g(w_{t,x})^{\dagger} | \uparrow \ra + \lambda_{t,x} \la + |g(w_{t+1,x})\, \hv_{t,x} g_{t,x}(w_{t,x})^{\dagger} | + \ra }
\,,
\end{equation}
with the notation $g(w)=|w\ra\la \uparrow| +|w][ \uparrow|$.
This expression for the quantum boundary state on the 2-torus will be the starting point for our computation of the Ponzano-Regge partition function for 3D quantum gravity on the solid torus, as presented in section \ref{sec:PRontorus}.

\subsection{Critical Regime of the Boundary State and Scale Invariance}

Before moving on to the exact evaluation of the Ponzano-Regge amplitude for such coherent boundary states, let us delve into the geometrical interpretation of those spin network states. The coherent wave-function with spin superposition is written as:
\be
\psi(\{\hh_{t,x},\hv_{t,x}\})
=
\sum_{\{L_{t,x},T_{t,x}\}} f_{\{\lambda_{t,x},\tau_{t,x}\}}[\{L_{t,x},T_{t,x}\}]
\,\int_{\SU(2)}
\prod_{t,x} \rd g_{t,x}\,e^{-\Phi_{\{L_{t,x},T_{t,x}, \hh_{t,x}, \hv_{t,x}\}}[\{g_{t,x}\}]}
\,,
\ee
with the combinatorial pre-factor $f$ and the exponent $\Phi$ given by:
\beq
f_{\{\lambda_{t,x},\tau_{t,x}\}}[\{L_{t,x},T_{t,x}\}]
&=&
\prod_{t,x} \left( \frac{\lambda_{t,x}^{2 L_{t,x}} \tau_{t,x}^{2T_{t,x}}}{(2 L_{t,x})!(2T_{t,x})!} (J_{t,x}+1)!\right)
\,,\\
\Phi_{\{L_{t,x},T_{t,x}, \hh_{t,x}, \hv_{t,x}\}}[\{g_{t,x}\}]
&=&
-2\sum_{t,x}
T_{t,x} \ln\la \uparrow |g^{-1}_{t,x+1} \hh_{t,x} g_{t,x} | \uparrow \ra + L_{t,x} \ln\la + |g^{-1}_{t+1,x} \hv_{t,x} g_{t,x} | + \ra
\,.
\eeq
Following the logic developed  in section \ref{cohstates} for the general case, we would like to understand what are the main contributions to the series defining the wave-function $\psi(\{\hh_{t,x},\hv_{t,x}\})$. In order to analyze the dominant terms of the sum over  the spins $L_{t,x}$ and $T_{t,x}$, we will assume that the dominant contribution is obtained for sufficiently large spins. Thus, we will perform a saddle point approximation for the integral over the $\SU(2)$ group averaging variables $g_{t,x}$ and then look at the sum over the spins. This will clarify the role of the couplings $\lambda_{t,x}$ and $\tau_{t,x}$.


\medskip

The action $\Phi_{\{L_{t,x},T_{t,x}, \hh_{t,x}, \hv_{t,x}\}}[\{g_{t,x}\}]$ is the contribution coming from the group averaging in the definition of the coherent intertwiners used to build the LS spin networks. We review the analysis of the behavior of the integral $\int \rd g_{t,x}\, \exp[-\Phi]$  by a saddle point approximation at large spins previously done in \cite{PRholo2}.

Since $\Phi$ is a priori complex, the main contribution to the integral comes from a stationary point configuration $o$ such that the real part of the action is an absolute minimum and that the first derivatives of the action with respect to the group variables $g_{t,x}$ vanish,
\begin{equation}
\partial_{g_{t,x}} \Phi(\{g_{t,x}\})\Big{|}_o = 0
\,,\qquad
\mathfrak{Re}\,\Phi[\{g_{t,x}\}] \Big{|}_o \leq \mathfrak{Re}\,\Phi[\{g_{t,x}\}] 
\; .
\end{equation}
The first condition is not a condition on the variables $g_{t,x}$ but on the spins $L_{t,x}$ and $T_{t,x}$. It amounts to the closure constraints for the intertwiners at every node $(t,x)$,
\be
T_{t,x}\vu_{\uparrow}-T_{t,x-1}\vu_{\uparrow}
+
L_{t,x}\vu_{+}-L_{t-1,x}\vu_{+}
=0\,.
\ee
Since $\vu_{\uparrow}=\hat{e}_{z}$ and $\vu_{+}=\hat{e}_{x}$ are orthogonal, this implies that the spins are equal on opposite links, $L_{t,x}=L_{t-1,x}$ and $T_{t,x}=T_{t,x-1}$. This means that the spin $T_{t,x}$ is homogeneous around each time slice, $T_{t,x}=T_{t}$ independent from $x$, and the spin $L_{t,x}$ is homogeneous\footnotemark{} along each time line, $L_{t,x}=L_{x}$  independent from $t$.
\footnotetext{%
Note that the time lines can wrap around the twisted torus due to the twisted periodicity condition, which will identify the $L_{x}$'s for different $x$'s. E.g. if the shift $N_{\gamma}$ is 1 or  coprime with $N_{x}$, then the spin $L_{t,x}$ is completely homogeneous independent of the site $(t,x)$.
}
This represents a discretization of a flat cylinder 2D metric  $\rd s^{2}=A(t)^{2}\rd t^{2}+B(x)^{2}\rd x^{2}$ where the scale factors in both direction can be re-absorbed in the definition of the time and space coordinates.

On the other hand, the second condition characterizes the stationary points in the integration variables $\{g_{t,x}\}$.
Recall that since $|\uparrow\ra$ and $|+\ra$ are normalized spinors, the modulus of the matrix elements $|\la \uparrow | g | \uparrow \ra|$ and $ |\la + | g | + \ra|$ for an arbitrary group element $g\in\SU(2)$ are always positive and  bounded by 1.
%
%
Since  $\mathfrak{Re}\,\log(z) = \log(|z|)$ for any complex number $z\in\C$,  the absolute minimal of $\mathfrak{Re}\,\Phi$ is $0$ and  is reached when the arguments of the logarithms are 1 in modulus. So the stationary points are given by
\be
\la + |g^{-1}_{t+1,x} \hv_{t,x} g_{t,x} | + \ra = e^{i \phi^L_{t,x}}
\,,\qquad
\la \uparrow |g^{-1}_{t,x+1} \hh_{t,x} g_{t,x} | \uparrow \ra = e^{i \phi^T_{t,x}} \; .
\label{eq54}
\ee
for some  phases $\phi^{L,T}_{t,x}\in[0,2\pi]$.
%
On shell of these equations, and assuming the homogeneity of the spins resulting from the closure constraints, the action $\Phi$ evaluates to:
\begin{equation}
\Phi_{\{L_{t,x},T_{t,x}, \hh_{t,x}, \hv_{t,x}\}}[\{g_{t,x}\}]\bigg{|}_o
=
-2i \sum_{t,x} L_{x}  \phi^L_{t,x} + T_{t} \phi^T_{t,x}
\,,
\qquad\textrm{with}\quad
T_{t,x}=T_{t}\quad\textrm{and}\quad L_{t,x}=L_{x} \quad\forall t,x
\, .
\end{equation}
As shown in \cite{PRholo2}, since the spins give the edge lengths of the 2D square lattice, the phases $\phi^{L,T}_{t,x}$ correspond to (half) the dihedral angles between the rectangular faces, as shown in figure \ref{fig:dual_edge_dihedral_angle}. Then, the on-shell action $\Phi$ reproduces the  evaluation of the  Gibbons-Hawking-York boundary term for the 3D Regge action on the solid torus (see \cite{PRholo2} for details, proofs and discussions).
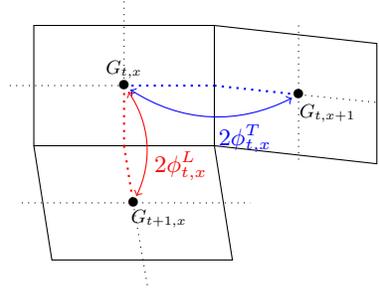
\begin{figure}[!htb]
	\begin{center}
		\begin{tikzpicture}[scale=0.80,line/.style={<->,shorten >=0.1cm,shorten <=0.1cm}]
		\coordinate(a1) at (0,0); \coordinate(b1) at (0.3,-1.9);
		\coordinate(a2) at (3,0); \coordinate(b2) at (3.3,-1.9); \coordinate(d1) at (5.7,-0.3);
		\coordinate(h1) at (0,2); 
		\coordinate(h2) at (3,2); \coordinate(d2) at (5.7,1.7);

		\draw (a1)--(a2)--(h2)--(h1)--cycle;
		\draw (a1)--(b1)--(b2)--(a2);
		\draw (a2)--(d1)--(d2)--(h2);
		
		\draw[dotted] (1.5,2.4)--(1.5,1);
		
		\draw[dotted] (-0.4,1)--(1.5,1); 
		
		\draw[dotted,blue,thick] (1.5,1)--(3,1); \draw[dotted, blue, thick] (3,1) -- (4.4,0.85);
		\draw[dotted,thick,red] (1.5,1)--(1.5,0); \draw[dotted,red,thick] (1.5,0)--(1.65,-0.95);

		\draw[dotted] (1.65,-0.95)--(1.9,-2.4);
		\draw[dotted] (-0.2,-0.95)--(3.6,-0.95);
		
		\draw[dotted] (4.4,0.85)--(5.9,0.7); \draw[dotted] (4.4,2)--(4.4,-0.3);
		
		\draw (1.5,1) node[above, scale=0.8]{$G_{t,x}$}; \draw (1.5,1) node{$\bullet$};
		\draw (1.5,-0.95) node[below right,scale=0.8]{$G_{t+1,x}$};\draw (1.65,-0.95) node{$\bullet$};
		\draw (4.3,0.8) node[below right,scale=0.8]{$G_{t,x+1}$};\draw (4.4,0.85) node{$\bullet$};
		
		\path [red,line,bend left] (1.5,1) edge node[midway,below right]{$2\phi_{t,x}^{L}$} (1.65,-0.95);
		\path [blue,line,bend right] (1.5,1) edge node[midway,below right]{$2\phi_{t,x}^{T}$} (4.4,0.85);
		\end{tikzpicture}
	\end{center}
	\caption{Three plaquettes - faces - of the boundary discretization dual to 3 vertices. The dotted lines are the links of the dual lattice. In red, the dihedral angle $2 \phi_{t,x}^{L}$ between the plaquette dual to the vertices $(t,x)$ and $(t+1,x)$ and in blue the dihedral angle $2 \phi_{t,x}^{T}$ between the plaquette dual to the vertices $(t,x)$ and $(t,x+1)$.}
	\label{fig:dual_edge_dihedral_angle}
\end{figure}


\medskip

Now it's time to deal with the sum over the spins $T_{t,x}$ and $L_{t,x}$. Let us look at the stationary configurations in spins to identify the dominant contributions to the series. 

The previous work \cite{Bonzom:2015ova} investigated the case of a triangulation and focused on the combinatorial pre-factor $f_{\{\lambda_{t,x},\tau_{t,x}\}}[\{L_{t,x},T_{t,x}\}]$ ignoring the coherent intertwiner contributions and thus the role of the extrinsic curvature. It nevertheless showed that stationary spins exist if and only if the couplings determine the angles of a planar triangulation. In order to be so, the couplings have to satisfy certain constraints. Then, in that critical regime, we do not get isolated stationary points but obtain instead a whole line of stationary points with the spins given by the edge lengths of the triangulation up to an arbitrary global scale factor. We obtain a similar result here, but extended to take into account the coherent intertwiner contributions with the dihedral angles and extrinsic curvature.

%

We combine the two terms in the coherent wave-function $\psi$ coming from the combinatorial pre-factor $f$ and the LS action $\Phi$. Proceeding to a large spin approximation for the pre-factor, we use the Stirling formula $n! \sim \sqrt{2 \pi n } \frac{e^n}{n^n}$ to get:
\beq
&&f_{\{\lambda_{t,x},\tau_{t,x}\}}
\big{[}\{L_{t,x},T_{t,x}\}\big{]}
\,=\,
e^{F}
\\
&\textrm{with}\quad&
F\sim \sum_{t,x}
2 L_{t,x} \ln\lambda_{t,x} + 2 T_{t,x} \ln\tau_{t,x}
-J_{t,x}+2L_{t,x}+2T_{t,x}
+ J_{t,x} \ln J_{t,x} - 2L_{t,x} \ln(2L_{t,x}) - 2T_{t,x} \ln(2T_{t,x})
\,.
\nn
\eeq
Using this large spin approximation, the stationary point equations for the spins read\footnotemark{}
\begin{align*}
\partial_{L_{t,x}} (F-\Phi) &= 0
\quad\Rightarrow\quad
2 \ln(\lambda_{t,x}) = \ln\left(\f{4L_{t,x}^2}{J_{t,x}J_{t+1,x}}\right) 
- 2\ln\la + |g^{-1}_{t+1,x} \hv_{t,x} g_{t,x} | + \ra
\; ,\\ 
\partial_{T_{t,x}}(F-\Phi) &= 0
\quad\Rightarrow\quad
2 \ln(\tau_{t,x}) = \ln\left(\f{4T_{t,x}^2}{J_{t,x}J_{t,x+1}}\right) 
- 2\ln\la \uparrow |g^{-1}_{t,x+1} \hh_{t,x} g_{t,x} | \uparrow \ra\; .
\end{align*}
\footnotetext{%
We focus on real geometries, thus real spins. The analytic continuation of the series to complex spins and the resulting contributions from complex saddle points was tackled in \cite{Bonzom:2019dpg}. It allows to strengthen the relation between the Ponzano-Regge boundary theory and the 2D Ising model but leads to a more complicated geometrical interpretation.
}
Plugging the solution of the saddle point equations  \eqref{eq54} for the group variables $g_{t,x}$, these stationary point equations for the spins translate into  equations for the couplings $\lambda_{t,x}$ and $\tau_{t,x}$:
\begin{equation}
\label{eq:saddle}
\lambda_{t,x} = \f{2L_{x}}{\sqrt{J_{t,x}J_{t+1,x}}} e^{-i \phi_{t,x}^L} \; ,
\qquad
\tau_{t,x} = \f{2T_{t}}{\sqrt{J_{t,x}J_{t,x+1}}}  e^{-i\phi_{t,x}^T} \;,
\qquad\textrm{with}\quad
J_{t,x}=2L_{x}+2T_{t}
\,.
\end{equation}
Note that we have also used the fact that the spins $T_{t,x}$ determining the time intervals are constant in space $T_{t,x}=T_{t}$ and the spins $L_{t,x}$ determining the space intervals are constant in time $L_{t,x}=L_{x}$, as  resulting from the  closure constraints induced by the saddle point equations in $g_{t,x}$.

We must not forget that the couplings $\lambda_{t,x}$ and $\tau_{t,x}$ are given a priori and that the stationary point equations determine extremal spin configurations $\{L_{x},T_{t}\}$ in terms of those couplings.
A first important point is that such stationary point do not always exist. Indeed, the existence of a solution to the stationary point equations requires the couplings to satisfy some constraints. We will refer to couplings that fulfill those conditions as {\it critical couplings}.
The second important point is that, once a set of critical couplings has been chosen and one solution for the spin configuration $\{L_{x},T_{t}\}$ has been identified, then any arbitrary global rescaling of  these spins still gives a solution to the stationary point equations. This means that we actually a whole line of stationary points. This induced scale invariance for critical couplings then leads to a divergence of the series defining the coherent spin network wave-function. For non-critical couplings, there is no finite real stationary points, we lose the scale-invariant line of stationary points in the spins and the dominant contributions are given  either by low spins or  or by spins growing to infinity, respectively corresponding to an absolutely convergent or totally divergent series.

So let us start by highlighting necessary constraints satisfied by the critical couplings. Since the phases of the couplings give the dihedral angles determined by the stationary points in $g_{t,x}$, the stationary point equations in the spins will give conditions on the modulus of the couplings.
%
%
Let us focus on the stationary point equations involving $T_{t}, T_{t+1}, L_{x}, L_{x+1}$:
\be
|\lambda_{t,x}| = \f{2L_{x}}{\sqrt{J_{t,x}J_{t+1,x}}}
\,,\quad
|\tau_{t,x}| = \f{2T_{t}}{\sqrt{J_{t,x}J_{t,x+1}}} 
\,,\quad
|\lambda_{t,x+1}| = \f{2L_{x+1}}{\sqrt{J_{t,x}J_{t+1,x+1}}}
\,,\quad
|\tau_{t+1,x}| = \f{2T_{t+1}}{\sqrt{J_{t+1,x}J_{t+1,x+1}}}
\,.
\ee
We can re-write these equations directly in terms of the ratios of $L$'s over $T$'s:
\be
|\tau_{t,x}| =
\f1{\sqrt{(1+A)(1+C)}}
\,,\qquad
|\tau_{t+1,x}| =
\f1{\sqrt{(1+B)(1+D)}}
\,,
\ee
\be
|\lambda_{t,x}| =
\f{\sqrt{AB}}{\sqrt{(1+A)(1+B)}}
\,,\qquad
|\lambda_{t,x+1}| =
\f{\sqrt{CD}}{\sqrt{(1+C)(1+D)}}
\,,\nn
\ee
with the notation:
\be
A:=\f{L_{x}}{T_{t}}
\,,\quad
B:=\f{L_{x}}{T_{t+1}}
\,,\quad
C:=\f{L_{x+1}}{T_{t}}
\,,\quad
D:=\f{L_{x+1}}{T_{t+1}}
\,,\quad
\textrm{satisfying the relation}\,\,
AD=BC
\,.
\ee
These expressions directly imply that the couplings satisfy a polynomial equation:
\be
\label{eq:critical}
1+|\lambda_{t,x}|^2|\lambda_{t,x+1}|^2+|\tau_{t,x}|^2|\tau_{t+1,x}|^2
=
2|\lambda_{t,x}||\lambda_{t,x+1}||\tau_{t,x}||\tau_{t+1,x}|
+|\lambda_{t,x}|^2+|\lambda_{t,x+1}|^2+|\tau_{t,x}|^2+|\tau_{t+1,x}|^2
\,.
\ee
This looks  similar to the equation for the zeroes of the 2D Ising model. In light of the recently uncovered relation between the geometrical stationary point of coherent spin networks on triangulations and the critical couplings of the 2D Ising model on dual graphs \cite{Bonzom:2015ova,Bonzom:2019dpg}, it would probably be enlightening to investigate this similarity further and understand if there is indeed a straightforward mapping between critical couplings of coherent spin networks and 2D Ising model on the square lattice.

Then, once the couplings satisfy such a condition, it is possible to compute the spin ratios A,B,C,D in terms of the couplings, as shown in appendix \ref{app:criticalcouplings}. Extending this from node to node and enforcing the periodicity conditions of the twisted torus, we finally get a spin configuration solution on the whole square lattice. 

\medskip

In this paper, we will not pursue the general case of arbitrary inhomogeneous couplings but will instead assume further homogeneity of the boundary state and focus on the homogeneous, though anisotropic, case where the couplings do not depend on the graph node\footnotemark:
\begin{equation*}
\tau_{t,x} = \tau  \qquad \text{and} \qquad \lambda_{t,x} = \lambda \qquad \forall (t,x) \; .
\end{equation*}
\footnotetext{%
Keeping arbitrary couplings $\tau_{t,x}$ and $\lambda_{t,x}$ depending on the graph node $(t,x)$ would lead in the continuum limit to fields $\tau(t,x)$ and $\lambda(t,x)$ defined on the 2D boundary. These fields would play the role of sources of the 2D geometry and thus of the 3D gravitational field. In the perspective of holography, it would be interesting to investigate what is the induced boundary action for those source field, whether the scale invariance translates into a conformal invariance on the boundary and if their correlations can be given a clear geometrical interpretation.
}
In this setting, the stationary point equations read:
\be
\label{eq:homosaddle}
\lambda= \f{2L_{x}}{\sqrt{J_{t,x}J_{t+1,x}}} e^{-i \phi_{t,x}^L} \; ,
\qquad
\tau= \f{2T_{t}}{\sqrt{J_{t,x}J_{t,x+1}}}  e^{-i\phi_{t,x}^T} \;,
\qquad\textrm{with}\quad
J_{t,x}=2L_{x}+2T_{t}
\,,
\ee
It is clear that this implies the homogeneity of the dihedral angles, $\phi_{t,x}^L=\phi^L$ and $\phi_{t,x}^T=\phi^T$, i.e. of the extrinsic geometry. It is also straightforward\footnotemark{} to show that it implies the homogeneity of the spins, $L_{x}=L$ and $T_{t}=T$.
Since the spins give the edge lengths in Planck unit, this translates geometrically into the homogeneity of the 2D intrinsic boundary geometry. Moreover this is possible if and only if the couplings satisfy $|\lambda|+|\tau|=1$.
One can check that this equation indeed corresponds to the condition \eqref{eq:critical} derived earlier in the inhomogeneous case applied to the homogeneous coupling ansatz.
So, the equation for criticality of the coherent spin networks for homogeneous couplings is simply that:
\be
|\lambda^{(c)}|+|\tau^{(c)}|=1
\,,
\ee
where the subscript ${}^{(c)}$ refers to ``critical''. When the couplings satisfy this condition, there exists phases $\phi^L$ and $\phi^T$ determining a solution to the stationary point equations, thus leading to a line of stationary points and a pole of the series defining the coherent spin network wave-function. This condition thus serves as radius of convergence of the series: whenever $|\lambda|+|\tau|<1$, the series is absolutely convergent.
%
\footnotetext{%
Starting for the equation of the couplings modulus,
\be
|\lambda|= \f{2L_{x}}{\sqrt{J_{t,x}J_{t+1,x}}}
\,,\quad
|\tau|= \f{2T_{t}}{\sqrt{J_{t,x}J_{t,x+1}}}
\,,\qquad
\textrm{and thus}
\quad
|\lambda|\sqrt{J_{t+1,x}}+|\tau|\sqrt{J_{t,x+1}}
=
\sqrt{J_{t,x}}\,,\,\,\forall t,x
\,,
\nn
\ee
we first sum the latter relation over all the nodes $(t,x)$ of the lattice, obtaining the necessary condition $|\lambda|+|\tau|=1$ for the existence of stationary spin configurations. Then using that $|\tau|=1-|\lambda|$, and keeping in mind that necessarily $0\le |\lambda|\le 1$,  we write that same relation as
\be
|\lambda|\big{[}\sqrt{J_{t+1,x}}-\sqrt{J_{t,x+1}}\big{]}=\sqrt{J_{t,x}}-\sqrt{J_{t,x+1}}
\,,\quad\textrm{which implies}\,\,\,
\sqrt{J_{t+1,x}}-\sqrt{J_{t,x+1}}\ge\sqrt{J_{t,x}}-\sqrt{J_{t,x+1}}
\,,\quad\textrm{i.e.}\,\,
\sqrt{J_{t+1,x}}\ge\sqrt{J_{t,x}}
\,.
\nn
\ee
Since the torus is periodic in the time direction, if we move up in time from the node $(t,x)$, we will eventually come back to the same node (even with a non-trivial twist), thus the sequence of inequalities implies that $J_{t,x}$ does not depend on the time coordinate $t$. Since $J_{t,x}=T_{t}+L_{x}$, this means that $T_{t}$ does not depend on $t$. Finally the equation for $\lambda$ allows us to conclude that $L_{x}$ is also independent from the space coordinate $x$.
}
%

\section{3D Quantum Gravity on the Solid Torus}
\label{sec:PRontorus}

In this section, we will focus on the computation of the Ponzano-Regge partition function for the solid torus with the quantum boundary state on the 2-torus given by a coherent spin network on the boundary square lattice. This Ponzano-Regge amplitude is given by the integral over the $\SU(2)$ holonomy along the non-contractible cycle of the evaluation of the boundary wave-function. We will show that it will be exactly computable and leads to a discretized, generalized and regularized  version of the BMS character formula for the partition function of 3D gravity as derived in \cite{Barnich:2015mui,Oblak:2015sea,Bonzom:2015ans}. In particular, we will discuss the continuum and asymptotic limits, in which the usual BMS character is recovered.



\subsection{Ponzano-Regge Amplitude on the Torus}

We apply the proposition \ref{prop:torus} derived in the first section about the Ponzano-Regge amplitude for a solid torus to the special case of a boundary graph defined by the square lattice on the 2-torus. This means evaluating the boundary wave-function setting all the group elements to the identity except on the final time slice and then integrate over this last group element interpreted as the holonomy along the non-contractible cycle of the solid torus:
\be
Z^{\PR}_{\cT}[\psi]=
\int_{\SU(2)} \rd H\,
\psi\Big{(}
\{\hh_{t,x} = \id\}_{t,x}, \{\hv_{t,x}= \id\}_{t < N_t-1,x}, \{\hv_{N_{t}-1,x}= H\}_{x}
\Big{)}
\,.
\ee
This formula was obtained through a $\SU(2)$ gauge fixing along a maximal tree on the boundary graph. If we do not proceed to a maximal gauge fixing, it is  interesting to take a step back and realize that we can redistribute the holonomy along the non-contractible cycle in an arbitrary way among all the temporal slices of the torus:
\be
Z^{\PR}_{\cT}[\psi]=
\int_{\SU(2)} \prod_{t}\rd H_{t}\,
\psi\Big{(}
\{\hh_{t,x} = \id\}_{t,x}, \{\hv_{t,x}= H_{t}\}_{t,x}
\Big{)}
\,.
\ee
Of course, the integrand as a function of the group elements $H_{t}$ is invariant under $\SU(2)$ transformations acting in between time slices:
\be
\phi(\{H_{t}\})=\psi\Big{(}
\{\hh_{t,x} = \id\}_{t,x}, \{\hv_{t,x}= H_{t}\}_{t,x}
\Big{)}\,,\qquad
\phi(\{H_{t}\})=\phi(\{g_{t}^{-1}H_{t}g_{t+1}\})
\,,\,\,
\forall g_{t}\in\SU(2)\,.
\ee
Gauge fixing this action, we can set all the $H_{t}$ to the identity except on the final slice, recovering the expression we started from.

In the previous work \cite{PRholo1,Riello:2018anu}, in the context of mapping the Ponzano-Regge amplitude onto a boundary spin chain model, it was interesting to restrict the group averaging to the final slice, in which case it led to a projector on the overall spin-0 sector of the chain. Here  it is actually more useful to distribute the holonomy evenly along the lattice, in order to compute the amplitude by a clean and simple Fourier transform. To this purpose, we use the following lemma:
\begin{lemma}
Let us consider a function of $N$ group elements in $\SU(2)^{\times N}$ such that it is invariant under a cyclic action of $\SU(2)$:
\be
\phi(H_{1},H_{2},..,H_{N})=\phi(g_{1}^{-1}H_{1}g_{2},g_{1}^{-1}H_{1}g_{2},..,g_{N}^{-1}H_{N}g_{1})
\,,\quad
\forall \{g_{n}\}\in\SU(2)^{\times N}
\,.
\ee
Then its integral over  $\SU(2)^{\times N}$ reduces to a integral over a single angle of its homogeneous evaluation:
\be
\int_{\SU(2)^{\times N}}\prod_{l}\rd H_{l}\,
\phi(\{H_{l}\})
=
\f1\pi\int_{0}^{2\pi}\rd\vphi\,\sin^{2}(\vphi)\,
\phi(\{H_{l}=e^{i\f{\vphi}{N}\sigma_{z}}\})
\,.
\ee
\end{lemma}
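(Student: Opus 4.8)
The plan is to use the cyclic gauge invariance to strip the integrand down to a class function of the single ordered product $P:=H_{1}H_{2}\cdots H_{N}$, and then to reduce the $N$-fold Haar integral to the Weyl integration formula on $\SU(2)$.

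First I would record how the cyclic action acts on $P$: under $H_{n}\mapsto g_{n}^{-1}H_{n}g_{n+1}$ (with $g_{N+1}\equiv g_{1}$) the product telescopes to $g_{1}^{-1}Pg_{1}$, so only its conjugacy class is invariant. On the other hand, choosing $g_{1}=\id$ and recursively $g_{n+1}=H_{n}^{-1}g_{n}$ sets the first $N-1$ gauge-transformed elements to $\id$ and the last one to $P$; hence $\phi(H_{1},\dots,H_{N})=\phi(\id,\dots,\id,P)=:F(P)$, and the residual diagonal conjugations ($g_{n}\equiv g$) show that $F(g^{-1}Pg)=F(P)$, i.e. $F$ is a class function on $\SU(2)$. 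Applying the same gauge fixing to the homogeneous point $\{H_{l}=h\}$ gives $\phi(\{H_{l}=h\})=F(h^{N})$ for every $h\in\SU(2)$; taking $h=e^{i\f{\vphi}{N}\sigma_{z}}$, whose $N$-th power is $e^{i\vphi\sigma_{z}}$, identifies $F(e^{i\vphi\sigma_{z}})=\phi(\{H_{l}=e^{i\f{\vphi}{N}\sigma_{z}}\})$.

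Next I would collapse the integral. Changing variables $H_{1}\mapsto P=H_{1}\cdots H_{N}$ at fixed $H_{2},\dots,H_{N}$ (left invariance of the Haar measure gives $\rd H_{1}=\rd P$), the integrand becomes $F(P)$, independent of $H_{2},\dots,H_{N}$; since their normalised Haar integrals are $1$, the left-hand side equals $\int_{\SU(2)}\rd P\,F(P)$. It then remains to feed the class function $F$ through the $\SU(2)$ Weyl integration formula, $\int_{\SU(2)}\rd P\,F(P)=\f1\pi\int_{0}^{2\pi}\rd\vphi\,\sin^{2}\vphi\,F(e^{i\vphi\sigma_{z}})$, and substitute the identification above. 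The weight $\sin^{2}\vphi$ is the radial part of the round metric on $\SU(2)\cong S^{3}$; the prefactor and the full range $[0,2\pi]$ are fixed by $\int_{0}^{2\pi}\sin^{2}\vphi\,\rd\vphi=\pi$ together with the fact that $\vphi$ and $2\pi-\vphi$ label the same conjugacy class.

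The steps that require genuine care — rather than a deep conceptual obstacle — are (i) checking that the recursive gauge fixing is globally well defined, which it is since each $g_{n+1}$ is uniquely determined once $g_{1}$ is chosen, so there is no topological obstruction; and (ii) pinning down the normalisation in the Weyl formula, which I would cross-check on the test cases $F\equiv 1$ and $F=|\chi_{1/2}|^{2}$. Conceptually the only subtle input is that $\phi$ depends on $(H_{1},\dots,H_{N})$ \emph{only} through the conjugacy class of $P$; everything else is formal manipulation of invariant measures.
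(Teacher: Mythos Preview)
Your proposal is correct and follows essentially the same route as the paper: gauge-fix to reduce $\phi$ to a class function of the ordered product, collapse the $N$-fold Haar integral to a single $\SU(2)$ integral, apply the Weyl integration formula, and then redistribute the holonomy homogeneously as $e^{i\f{\vphi}{N}\sigma_{z}}$. Your version is in fact more explicit than the paper's about the recursive gauge choice and the change of variables $H_{1}\mapsto P$, which is a plus.
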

\begin{proof}
Due to the gauge invariance of the function under $\SU(2)$, it amounts to a function of the single group element $H=H_{1}H_{2}..H_{N}$:
\be
\phi(H_{1},H_{2},..,H_{N})=\phi(H,\id,..,\id):=f(H)\,.
\ee
This means that only the product $H=H_{1}H_{2}..H_{N}$ matters and we can redistribute the overall holonomy $H$ in an arbitrary way among the individual $H_{l}$'s. Moreover, the final function $f(H)$ is central, i.e. it has a remaining invariance under the $\SU(2)$ action by conjugation, $f(H)=f(g^{-1}Hg)$ for all $g\in\SU(2)$. We can gauge fix this action by fixing the direction of $H$ while integrating over its class angle:
\be
\int_{\SU(2)^{\times N}}\prod_{l}\rd H_{l}\,
\phi(\{H_{l}\})
=
\int \rd H\, f(H)
=
\f1\pi\int_{0}^{2\pi}\rd\vphi\,\sin^{2}(\vphi)\,
f(H=e^{i{\vphi}\sigma_{z}})
\,,
\ee
where the $\sin^{2}\vphi$ factor is the Cartan measure. Finally, we distribute the overall holonomy $H=e^{i{\vphi}\sigma_{z}}$ into $N$ equal pieces, giving the wanted formula.

\end{proof}
Applying this lemma to the Ponzano-Regge amplitude for the boundary square lattice gives:
\be
Z^{\PR}_{\cT}[\psi]=
\f1\pi\int_{0}^{2\pi}\rd\vphi\,\sin^{2}(\vphi)\,
\psi\Big{(}
\{\hh_{t,x} = \id\}_{t,x}, \{\hv_{t,x}= e^{i\f{\vphi}{N}\sigma_{z}}\}_{t,x}
\Big{)}
\,.
\ee
Applying this to the coherent spin network wave-function on the square lattice, as defined in \eqref{def:psitorus} finally gives the Ponzano-Regge amplitude for the solid torus as a Gaussian integral over spinor variables and an integral over the class angle of the non-trivial holonomy:
\begin{equation}
Z^{\PR}_{\cT}(\lambda,\tau)
=
\f{1}{\pi} \int_{0}^{2\pi} \text{d}\varphi \sin^{2}(\varphi) \int_{\C^2}\prod_{t,x=0}^{N_t-1,N_x-1} \frac{\text{d}^4 w_{t,x}}{\pi^2} e^{- S_{\lambda,\tau}[\{w_{t,x}\},\varphi]} \; .
\label{eq:PR_amplitude_starting}
\end{equation}
with  an action $S_{\lambda,\tau}$ depending on the couplings $\lambda$ and $\tau$ controlling the spin probability distribution:
\beq
S_{\lambda,\tau}[\{w_{t,x}\},\varphi]
&
\,=\,
\sum_{t,x} \la \omega_{t,x} | \omega_{t,x} \ra
&
- \tau \la \uparrow | \omega_{t,x+1} \ra \la \omega_{t,x} | \uparrow \ra -  \tau \la \uparrow | \omega_{t,x+1} ] [ \omega_{t,x} | \uparrow \ra
\nn\\
&&- \lambda e^{i \frac{\varphi}{N_t}} \la + | \omega_{t+1,x} \ra \la \omega_{t,x} | + \ra - \lambda e^{-i \frac{\varphi}{N_t}} \la + | \omega_{t+1,x} ] [ \omega_{t,x} | + \ra
\,.
\eeq
As we explained above, we have decided to uniformly redistribute the $\SU(2)$ holonomy $H$ among the time slices. As we have shown above, this is a specific gauge fixing. We could use any non-uniform splitting of the holonomy, monotonous or not, when computing the Ponzano-Regge amplitude $Z^{\PR}_{\cT}(\lambda,\tau)$ and it would not change the result. The advantage of the uniform splitting is that we can diagonalize the Gaussian integrand by a straightforward Fourier transform on the spinor variables. Using a non-uniform splitting leads to a more complicated decomposition of the action $S_{\lambda,\tau}$ in terms of Fourier modes, although it could be possible to adjust the definition of the Fourier transform to compensate the non-uniformity.

Thus performing a Fourier transform allows to perform the Gaussian integral over the spinor and write the Ponzano-Regge toroidal partition function as a trigonometric integral over the remaining class angle of the non-trivial holonomy:
\begin{prop}
The Ponzano-Regge partition function for the solid twisted torus with the coherent spin network boundary state defined on the square lattice with couplings $\lambda$ and $\tau$ can be written as an integral over the class angle of the holonomy wrapping around the solid torus' non-contractible cycle:
\be
\label{eq:starint_point_amp}
Z^{\PR}_{\cT}(\lambda,\tau)
\,=\,
\f{1}{\pi} \int_{0}^{2\pi} \text{d}\varphi \sin^{2}(\varphi) \prod_{\omega,k = 0}^{N_t-1,N_x-1} \f{1}{\det[Q_{\omega,k}(\varphi)]} 
\ee
with the determinants
\beq
\det[Q_{\omega,k}(\varphi)]
&=&
1 + \lambda^2 + \tau^2  -2\tau \cos\left(\f{2\pi}{N_x}k\right) - 2\lambda \cos\left(\f{\varphi}{N_t} + \f{2\pi}{N_t}\omega - \f{1}{N_t}\gamma k\right) \\
&&+ 2 \lambda \tau \cos\left(\f{2\pi}{N_x}k\right) \cos\left(\f{\varphi}{N_t} + \f{2\pi}{N_t}\omega - \f{1}{N_t}\gamma k\right)
\,.\nn
\eeq
\end{prop}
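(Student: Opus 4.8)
\emph{Proof strategy.} The point of departure is the integral representation \eqref{eq:PR_amplitude_starting}. For fixed $\varphi$ the exponent $S_{\lambda,\tau}[\{w_{t,x}\},\varphi]$ is a quadratic form in the $2N_tN_x$ complex variables $\{w^{0}_{t,x},w^{1}_{t,x}\}$, i.e.\ the two components of each spinor $w_{t,x}\in\C^{2}$; writing it as $\bar w\,M\,w$, the diagonal $\id$ comes from the norms $\la w_{t,x}|w_{t,x}\ra$ while $M-\id$ collects the $\tau$- and $\lambda$-hoppings. For $|\lambda|,|\tau|$ small the Hermitian part of $M$ is positive definite, so the Gaussian integral converges, and the general case follows by the same analytic continuation by which the boundary state itself is defined past its radius of convergence. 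The whole computation then reduces to diagonalising $M$ and performing a Gaussian integral.

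The key structural fact is that $M-\id$ is invariant under lattice translations: each term couples only nearest-neighbour nodes, and the only twist is the time periodicity $(N_t,x)\equiv(0,x+N_\gamma)$ of \eqref{eq:periodicity_condition}. I would therefore pass to a discrete Fourier basis: transforming first in the untwisted space direction with momenta $k=0,\dots,N_x-1$, the twisted periodicity becomes $\hat w_{N_t,k}=e^{i\gamma k}\hat w_{0,k}$ with $\gamma=2\pi N_\gamma/N_x$, so the time frequencies admissible in sector $k$ are $\Omega_{\omega,k}=(2\pi\omega-\gamma k)/N_t$, $\omega=0,\dots,N_t-1$ (the sign of the $\gamma k$ shift being immaterial, as one sums over all $\omega$). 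This Fourier transform is unitary, hence preserves the measure $\prod_{t,x}\f{\rd^{4}w_{t,x}}{\pi^{2}}$ and block-diagonalises $M$ into $N_tN_x$ blocks $Q_{\omega,k}(\varphi)$, each a $2\times2$ matrix on the spinor index, so that $\det M=\prod_{\omega,k}\det Q_{\omega,k}$. The elementary identity $\int\prod_i\f{\rd^{2}z_i}{\pi}\,e^{-\bar z M z}=1/\det M$ — the $\pi^{-2}$ per node matching exactly the two complex integrations per node — turns \eqref{eq:PR_amplitude_starting} into $Z^{\PR}_{\cT}(\lambda,\tau)=\f1\pi\int_0^{2\pi}\rd\varphi\,\sin^{2}\varphi\,\prod_{\omega,k}1/\det Q_{\omega,k}(\varphi)$ with no spurious prefactor.

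It then remains to compute $Q_{\omega,k}$ explicitly. The horizontal ($|\uparrow\ra$) couplings leave $w^{0}$ and $w^{1}$ uncoupled, and — because $w^{0}$ hops ``forward'' and $w^{1}$ ``backward'' in $x$, as read off from the structure-map term — contribute $-\tau e^{+2\pi i k/N_x}$ and $-\tau e^{-2\pi i k/N_x}$ to the two diagonal entries. The vertical ($|+\ra$) couplings mix the two components; combined with the holonomy insertion, for which one checks $g(w_{t+1,x})\,e^{i\varphi\sigma_z/N_t}\,g(w_{t,x})^{\dagger}=e^{i\varphi/N_t}|w_{t+1,x}\ra\la w_{t,x}|+e^{-i\varphi/N_t}|w_{t+1,x}][w_{t,x}|$, and with the time Fourier phase $e^{\pm i\Omega_{\omega,k}}$, they assemble into the single angle $\theta_{\omega,k}=\f{\varphi}{N_t}+\f{2\pi}{N_t}\omega-\f1{N_t}\gamma k$ and yield a diagonal piece $-\lambda\cos\theta_{\omega,k}$ together with an off-diagonal piece $-i\lambda\sin\theta_{\omega,k}$:
\[
Q_{\omega,k}(\varphi)=\mat{cc}{1-\tau e^{2\pi i k/N_x}-\lambda\cos\theta_{\omega,k} & -i\lambda\sin\theta_{\omega,k}\\ -i\lambda\sin\theta_{\omega,k} & 1-\tau e^{-2\pi i k/N_x}-\lambda\cos\theta_{\omega,k}}\,.
\]
Expanding the $2\times2$ determinant and using $e^{2\pi i k/N_x}+e^{-2\pi i k/N_x}=2\cos(2\pi k/N_x)$, $e^{2\pi i k/N_x}e^{-2\pi i k/N_x}=1$ and $\cos^{2}\theta+\sin^{2}\theta=1$ produces exactly the claimed trigonometric expression for $\det[Q_{\omega,k}(\varphi)]$.

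The main obstacle is the bookkeeping in the last two steps: organising the Fourier transform so that the twist is absorbed cleanly, and tracking all conjugations and structure maps together with the holonomy phases so that the $|+\ra$-couplings really combine into the stated entries of $Q_{\omega,k}$ — in particular producing the off-diagonal $-i\lambda\sin\theta_{\omega,k}$, rather than a term $\propto e^{\pm i(\theta_{\omega,k}\pm 2\pi k/N_x)}$ that would spoil the factorised $\cos\theta$-structure of the final determinant. A practical way to control this is to first change spinor variables to $u_{t,x}=w^{0}_{t,x}+w^{1}_{t,x}$, $v_{t,x}=w^{0}_{t,x}-w^{1}_{t,x}$, in which the vertical couplings become diagonal, $-\f\lambda2 e^{i\varphi/N_t}u_{t+1,x}\bar u_{t,x}-\f\lambda2 e^{-i\varphi/N_t}\bar v_{t+1,x}v_{t,x}$, and only then Fourier transform and recombine.
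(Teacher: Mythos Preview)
Your proposal is correct and follows essentially the same route as the paper: a twisted discrete Fourier transform block-diagonalises the quadratic form into $2\times2$ blocks $Q_{\omega,k}(\varphi)$, and the Gaussian integral produces $\prod_{\omega,k}1/\det Q_{\omega,k}$. The paper writes the transform in one step, absorbing the twist as a frequency shift $\omega\mapsto\omega-\f{N_\gamma}{N_x}k$, and summarises the result compactly as $Q_{\omega,k}(\varphi)=\id_2-\tau\,e^{i\frac{2\pi k}{N_x}\sigma^z}-\lambda\,e^{i\theta_{\omega,k,\varphi}\sigma^x}$, which is exactly your matrix; your $(u,v)$ change of variables is a helpful bookkeeping device but not needed in the paper's presentation.
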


\begin{proof}

The first step  is to introduce a twisted Fourier transform on the torus to account for the peculiar periodic boundary conditions on the lattice. We define the discrete Fourier transform $\tilde{w}_{\omega,k}$ of the spinor $w_{t,x}$ by:
\be
\tilde{w}_{\omega,k}
\,=\,
\f{1}{\sqrt{N_t N_x}} \sum_{t,x=0}^{N_t-1,N_x-1} 
e^{i \f{2\pi}{N_x}kx}\,
e^{i \f{2\pi}{N_t}t\left(\omega - \f{N_{\gamma}}{N_{x}} k \right)}\,
w_{t,x}
\,,
\ee
\be
w_{t,x}
\,=\,
\f{1}{\sqrt{N_t N_x}} \sum_{\omega,k}^{N_t-1,N_x-1} 
e^{-i \f{2\pi}{N_x}kx}\,
e^{-i \f{2\pi}{N_t}t\left(\omega - \f{N_{\gamma}}{N_{x}} k \right)}\,
\tilde{w}_{\omega,k}
\,,
\ee
where we have absorbed the twist angle $\gamma = {2 \pi N_\gamma}/{N_x} $ as a shift in frequency in the temporal modes. This slight modification of the Fourier transform allows to translate the twisted periodicity conditions $w_{t+N_{t},x}=w_{t,x+N_{\gamma}}$ into standard periodicity conditions for the Fourier transformed spinor:
\be
\tilde{w}_{\om+N_t,k}
= \tilde{w}_{\om,k}=
\tilde{w}_{\omega,k+N_x}
\,,
\ee
meaning that it maps the twisted torus in coordinate space to the standard torus in momentum space.
The action is diagonalized by this transform:
\be
\label{eq:arg_exp_action}
S_{\lambda,\tau}[\{\tilde{w}_{\omega,k} \},\varphi]
\,=\,
\sum_{\omega,k}
\la \tilde{w}_{\omega,k} |Q_{\omega,k}(\varphi)| \tilde{w}_{\omega,k}\ra
\ee
with the 2$\times$2 matrices:
\be
Q_{\omega,k}(\varphi)
\,=\,
\id_{2}
- \tau
\mat{cc}{e^{i \f{2\pi k}{N_x}}& 0 \\ 0 & e^{-i \f{2\pi k}{N_x}}}
-\lambda
\mat{cc}{\cos\theta_{\om,k,\vphi} & i\sin\theta_{\om,k,\vphi}\\i\sin\theta_{\om,k,\vphi}&\cos\theta_{\om,k,\vphi}}
\qquad\textrm{with}\quad
\theta_{\om,k,\vphi}:=\f1{N_{t}}(\vphi+2\pi \om-\gamma k)
\,,\nn
\ee
which can be written in a more compact fashion in terms of the Pauli matrices as
\be
Q_{\omega,k}(\varphi)
\,=\,
\id_{2}
- \tau e^{i \f{2\pi k}{N_x} \sigma^z}
- \lambda e^{\f i{N_{t}} (\vphi+2\pi \om-\gamma k)\, \sigma^x}
\,.
\ee
For the sake of keeping notation simple, we keep implicit that those Hessian matrices $Q_{\omega,k}$ depend on the coupling constants $\lambda$ and $\tau$.
The Gaussian integration over the spinors $w_{t,x}$ in the expression \eqref{eq:PR_amplitude_starting} for the Ponzano-Regge amplitude gives the product of the determinants of the $Q_{\om,k}$ matrices, which are straightforward to compute:
\be
\det[Q_{\omega,k}(\varphi)]
=
1 + \lambda^2 + \tau^2  -2\tau \cos\left(\f{2\pi}{N_x}k\right) - 2\lambda \cos\theta_{\om,k,\vphi}
+ 2 \lambda \tau \cos\left(\f{2\pi}{N_x}k\right) \cos\theta_{\om,k,\vphi}
\,.
\ee 

\end{proof}
Interestingly the inverse determinant of the 2$\times$2 matrix $Q:=\id_{2}- \tau e^{i \alpha\sigma^z}- \lambda e^{i\beta \sigma^x}$, considered as a function of $\tau$ and $\lambda$, is the generating function of a bi-variate Chebyshev polynomials. As we explain in the appendix \ref{app:chebychev},  this point of view becomes especially convenient when unfreezing the spinors defining the coherent intertwiners, for instance when tilting the square lattice made of semi-classical rectangles into semi-classical parallelograms.

\subsection{Mode determinants, Mass and Criticality}

Now  working from the expression of the Ponzano-Regge amplitude as a trigonometric integral, $Z^{\PR}_{\cT}(\lambda,\tau)=\int\text{d}\varphi \sin^{2}(\varphi) \prod_{\omega,k } \det[Q_{\omega,k}(\varphi)]^{{-1}}$, the next step is to identify the poles in $\vphi$ and calculate their residue  in order to compute this integral as a contour integral in the complex plane. The class angle $\vphi$ encodes the holonomy along the non-contractible cycle of the solid torus and describes the curvature of the geometry. Poles in $\vphi$, corresponding to zeros of the determinants, $\det[Q_{\omega,k}(\varphi)]=0$, can be loosely thought as resonances in the connection.

From the perspective of standard (free) quantum field theory, zeroes of the determinants $\det[Q_{\omega,k}(\varphi)]=0$ would correspond to the classical dispersion relation between the frequency $\om$ and the momentum $k$. Here, not only this relation depends on the curvature angle $\vphi$, but we further have to integrate over that angle. Before computing that integral, it is nevertheless enlightening to investigate the physical meaning of those $\vphi$-dependant dispersion relations.

Let us start with the case of a trivial connection in the solid torus, $\vphi=0$. It turns out that we can factorize the determinants in terms of the eigenvalues of the discrete Laplacians in the space and time directions (taking into account the twist in the periodic boundary conditions in time),
\be
\Delta_{k} = 2-2\cos\left(\f{2\pi}{N_x}k\right)
\,,\qquad
\Delta_{\omega} = 2-2\cos\left(\f{2\pi}{N_t}\omega - \f{1}{N_t}\gamma k\right)
\,,
\ee
leading to a simple decomposition of the determinant in terms of Laplacian  and mass term:
\be
\det[Q_{\omega,k}(0)]
=
M_{0}^2 + \tau(1-\lambda)\Delta_{k}+ \lambda(1-\tau)\Delta_{\om}+\f12\lambda\tau\Delta_{k}\Delta_{\omega}\,,
\ee
with the mass given in terms of the couplings:
\be
M_{0}^{2}
=
\det[Q_{\omega,k}(0)]\Big|_{\om=k=0}
=
\big{(}1-\tau-\lambda\big{)}^{2}
\,.
\ee
Let us remember that the critical coupling equation was simply $|\tau|+|\lambda|=1$. This means that the case of real positive critical couplings\footnotemark, $\lambda+\tau=1$ with $\lambda, \tau\ge 0$, corresponds to a vanishing  mass $M_{0}=0$ for the boundary modes propagating on the background defined by a bulk curvature $\varphi=0$.
\footnotetext{
Real critical couplings naturally occur in the asymptotic limit, as $N_{x}, N_{t}$ grows to $\infty$, since the dihedral angles between the faces of the cylinder go to 0 in this continuum limit.
}
This remarkable fact extends in the complex plane to possible phases between $\tau$ and $\lambda$. Indeed, looking at the (complex) mass term for an arbitrary class angle $\vphi$, we get:
\be
M_{\vphi}^{2}
=
\det[Q_{\omega,k}(\vphi)]\Big|_{\om=k=0}
=
\big{(}1-\tau-\lambda e^{i\f\vphi{N_{t}}}\big{)}\big{(}1-\tau-\lambda e^{-i\f\vphi{N_{t}}}\big{)}
\,.
\ee
Thus  any critical couplings with $\tau\in\R$ satisfy  $\tau+|\lambda|=1$ and thus correspond to a vanishing mass for some value of the angle $\vphi$ depending on the relative phase of $\tau$ and $\lambda$. 
This provides a neat interpretation of criticality of the boundary state as massless 0-modes on the boundary leading to a divergence of the partition function.

\bigskip

In order to push the calculation of the Ponzano-Regge amplitude further, it is convenient to write the determinants in a semi-factorized form:
\begin{equation}
\label{eq:starting_point_det}
\det[Q_{\omega,k}(\varphi)]
=
(\tau^2 + \lambda^2 - 1) +2\left[ 1 - \tau \cos\left(\f{2\pi}{N_x}k\right) \right]\,  \left[1 - \lambda \cos\left(\f{\varphi}{N_t} + \f{2\pi}{N_t}\omega - \f{1}{N_t}\gamma k\right) \right] \,.
\end{equation}
In the following calculation of the Ponzano-Regge partition function, we will first assume a simplifying relation on the couplings:
\be
	\tau^2 + \lambda^2 = 1 \;,
\ee
in which case the determinants can all be factorized as:
\begin{equation}
\det[Q_{\omega,k}(\varphi)]
=
2\left[ 1 - \tau \cos\left(\f{2\pi}{N_x}k\right) \right]\,  \left[1 - \lambda \cos\left(\f{\varphi}{N_t} + \f{2\pi}{N_t}\omega - \f{1}{N_t}\gamma k\right) \right]
\,.
\end{equation}
%
%
%
The condition $\tau^{2}+\lambda^{2} = 1$ effectively decouples the spatial modes from the (twisted) temporal modes. Although this happens in momentum space and not directly in coordinate space on the original lattice, we can consider it as similar to the zero-spin recoupling channel for boundary spin network states at fixed spins considered in \cite{PRholo1}. Indeed it plays the same role of decoupling the spatial and temporal structures of the boundary state, trivializing in some sense the role of the number of time slices $N_{t}$ and thereby allowing allowing to focus solely on the effect of the twisted periodic conditions and how the Ponzano-Regge amplitude depends on the twisted angle $\gamma$. 

It is crucial to stress that this decoupling condition is not related to the criticality condition $|\tau|+|\lambda|=1$. In fact, criticality corresponds to extremal couplings satisfying $\tau^2 + \lambda^2 = 1$. For instance, considering real couplings $\tau,\lambda\in [0,1]$ satisfying $\tau^2=1- \lambda^2 $, critical couplings are $\lambda=0$ and $\lambda=1$. Thus, assuming $\tau^2 + \lambda^2 = 1$ for complex couplings does not restrict to specific behavior of the partition functions and allow to study the transition from a convergent boundary state to the critical regime. We will actually show that we recover in that critical regime the asymptotic BMS character formula for the 3D quantum gravity path  integral.

We also discuss in section \ref{sec:generalcase} the general calculation of the Ponzano-Regge amplitude when $\tau^2 + \lambda^2 \ne 1$. Evaluating the partition function as a contour integral over the class angle $\vphi$ in the complex plane, the poles and residues are not regularly spaced as in the decoupled case, but we can still write it as a finite sum over poles, which can be interpreted as a regularized and deformed BMS character.
These results, both in the decoupled and general cases, are at the  heart of the present work.

\subsection{Poles and Exact Residue Formula for the Free Ponzano-Regge Amplitude}

Assuming the decoupling condition $\tau^{2}+\lambda^{2}=1$ as discussed above, the Ponzano-Regge amplitude reads 
\be
Z^{\PR}_{\cT}(\lambda, \tau)\Big|_{\tau^{2}+\lambda^{2}=1}
\,=\,
\f{1}{\pi} \int_{0}^{2\pi} \text{d}\varphi \sin^{2}(\varphi) \prod_{\omega,k=0}^{N_t-1,N_x-1} \f{1}{\Big(2 - 2 \tau \cos\left(\f{2\pi}{N_x}k\right) \Big)  \Big(1 - \lambda \cos\left(\f{\varphi}{N_t} + \f{2\pi}{N_t}\omega - \f{1}{N_t}\gamma_k\right) \Big)}  \; .
\ee
We will refer to this case as the {\it free} Ponzano-Regge amplitude on the twisted torus, since it decouples the dispersion relations for modes propagating in the space and  time directions.
In this section, we will keep the condition $\tau^{2}+\lambda^{2}=1$ implicit and simply write  $Z^{\PR}_{\cT}$ for the Ponzano-Regge amplitude above without referring to the couplings.
The product of the $\cos$-factors over the Fourier modes can be handled using the following lemma, allowing to compute the products in terms of the Chebyshev polynomials:
\begin{lemma}
\label{prop:cosine_product}
Consider two integers $N,M\in\N^{*}$, we denote their greatest common divisor (GCD) by $P$ and call $n$ and $m$ the corresponding divisors, $N=nP$ and $M=mP$.
%
%
Then the following relation holds for all complex numbers $x,a\in\C$:
\begin{equation*}
\prod_{k=0}^{N-1} \left[2 a + 2 \cos\left( \f{2 \pi M}{N}k + x\right) \right]
=
2^{P}\Big{[} T_{n}(a) - (-1)^{n} \cos(n x) \Big{]}^P
\,,
\end{equation*}
where $T_{n}$ is the n-Chebyshev polynomial of the first kind.
\end{lemma}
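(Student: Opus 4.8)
The plan is to collapse the length-$N$ product to a single period of length $n$ using periodicity of the cosine together with the coprimality of $m$ and $n$, and then to evaluate the period-$n$ product by factorizing each bracket over the $n$-th roots of unity.

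\textbf{Step 1: reduction to a period-$n$ product.} Since $\f{2\pi M}{N}=\f{2\pi m}{n}$, replacing $k$ by $k+n$ shifts the angle $\f{2\pi m k}{n}+x$ by $2\pi m\in 2\pi\Z$, so the factor labelled by $k$ depends only on $k\bmod n$. As $k$ ranges over $\{0,\dots,nP-1\}$, each residue class mod $n$ occurs exactly $P$ times, hence
\[
\prod_{k=0}^{N-1}\Bigl[2a+2\cos\bigl(\tfrac{2\pi m k}{n}+x\bigr)\Bigr]
=\Bigl(\prod_{k=0}^{n-1}\Bigl[2a+2\cos\bigl(\tfrac{2\pi m k}{n}+x\bigr)\Bigr]\Bigr)^{P}.
\]
Because $\gcd(m,n)=1$, the map $k\mapsto mk\bmod n$ is a bijection of $\{0,\dots,n-1\}$, so the inner product equals $\prod_{j=0}^{n-1}\bigl[2a+2\cos(\tfrac{2\pi j}{n}+x)\bigr]$. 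It therefore suffices to prove the single-period identity
\[
\prod_{j=0}^{n-1}\Bigl[2a+2\cos\bigl(\tfrac{2\pi j}{n}+x\bigr)\Bigr]=2\bigl[T_n(a)-(-1)^n\cos(nx)\bigr].
\]

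\textbf{Step 2: factorization over roots of unity.} I would set $z_j=e^{i(2\pi j/n+x)}$ and let $z_\pm$ be the roots of $z^2+2az+1=0$, so that $z_+z_-=1$ and $2a+2\cos\theta=e^{-i\theta}(e^{i\theta}-z_+)(e^{i\theta}-z_-)$. The $z_j$ are exactly the roots of $Z^n=e^{inx}$, so $\prod_{j=0}^{n-1}(Z-z_j)=Z^n-e^{inx}$; evaluating at $Z=z_\pm$ gives $\prod_j(z_j-z_\pm)=(-1)^n(z_\pm^n-e^{inx})$, while $\prod_j z_j^{-1}=(-1)^{n-1}e^{-inx}$ (the sign coming from $\prod_{j=0}^{n-1}e^{2\pi ij/n}=(-1)^{n-1}$). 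Multiplying the three contributions and using $z_+z_-=1$ collapses the product to $(-1)^n\bigl[(z_+^n+z_-^n)-2\cos(nx)\bigr]$. (Everything here is a polynomial identity in $a$, so the degenerate case $z_+=z_-$ needs no separate treatment.)

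\textbf{Step 3: identification with Chebyshev and conclusion.} It remains to recognize $z_+^n+z_-^n=2(-1)^nT_n(a)$. The quickest route is that $s_n:=z_+^n+z_-^n$ satisfies $s_0=2$, $s_1=-2a$, $s_{n+1}=-2a\,s_n-s_{n-1}$ (Newton's recursion from $z_++z_-=-2a$ and $z_+z_-=1$), which is precisely the recurrence obeyed by $2(-1)^nT_n(a)$; equivalently, writing $a=-\cos\psi$ gives $z_\pm=e^{\pm i\psi}$ and $s_n=2\cos(n\psi)=2T_n(-a)=2(-1)^nT_n(a)$ by the parity of $T_n$. Substituting yields $2[T_n(a)-(-1)^n\cos(nx)]$ for the period-$n$ product, and raising to the power $P$ gives the stated formula with the prefactor $2^P$. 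The only point requiring genuine care is the sign bookkeeping in Step 2 — tracking the factor $(-1)^{n-1}$ from $\prod_j e^{2\pi ij/n}$, the $(-1)^n$ from reordering $(Z-z_j)$ into $(z_\pm - z_j)$, and the parity of $T_n$ — but once these are organized the computation is routine, and there is no real conceptual obstacle.
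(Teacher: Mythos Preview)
Your proof is correct. Step~1 coincides with the paper's reduction, but from there the two arguments diverge. The paper treats both sides of the period-$n$ identity as degree-$n$ polynomials in $a$, locates the roots of the right-hand side via $T_n(a)=\cos(n\arccos a)$, checks that they match the obvious roots $a_k=-\cos(\tfrac{2\pi k}{n}+x)$ of the left-hand side, and then fixes the overall constant by evaluating at $a=\pm 1$. Your approach is instead a direct computation: you factor each bracket through the quadratic $z^2+2az+1$, recognize the $z_j$ as the $n$-th roots of $e^{inx}$ so that $\prod_j(Z-z_j)=Z^n-e^{inx}$, and then identify the power sum $z_+^n+z_-^n$ with $2(-1)^nT_n(a)$ via Newton's recursion. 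The paper's route is shorter to state but leaves the normalization check and the handling of possibly coinciding roots implicit; your route is more mechanical but fully self-contained and makes the sign bookkeeping explicit, which is exactly where a careless argument would go wrong.
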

\begin{proof}
In the case where the $P > 1$, we see that only $n$ terms are different in the product due to the periodicity of the cosine:
\begin{equation*}
\prod_{k=0}^{N-1} \left(2 a + 2 \cos\left( \f{2 \pi M}{N}k + x\right) \right)
=
\left[\prod_{k=0}^{n-1} \left(2a + 2\cos\left( \f{2 \pi m}{n} k + x \right) \right)\right]^{P}
\;.
\end{equation*}
Moreover, this product does not depend on $m$. Indeed, since $m$ and $n$ are coprime, we can use the periodicity of the cosine to absorb the factor $m$, leaving us with showing the equality
\begin{equation}
	\prod_{k=0}^{n-1} \left[2a + 2\cos\left( \f{2 \pi}{n} k + x \right) \right] =2 \Big{[} T_{n}(a) - (-1)^{n} \cos(n x) \Big{]} \; 
	\label{eq:proof_eq_ini}
\end{equation}
Since both sides are polynomials in $a$, it is enough to show that they have the same roots, thereby showing that the two sides are proportional. We can then finally conclude the proof by checking the proportionality factor.
%

As directly read from its expression, the left hand side has $n$ roots, parametrized by the integer $k \in [0,n-1]$:
\begin{equation*}
	a_{k}^{LHS} = - \cos\left( \f{2 \pi}{n} k + x \right)
	\,.
\end{equation*}
As for the right hand side, we use the trigonometric expression of the Chebyshev polynomial, $T_{n}(a) = \cos(n \arccos(a))$.
The roots of the right hand side are then given by an equality between cosine, with solutions labeled by an integer $k\in\N$:
\begin{equation*}
n \arccos(a_k^{RHS}) = \pm (n x + n\pi + 2\pi k) \quad \text{for} \; k \, \in \N \;,
\qquad\textrm{i.e.}\quad
a_k^{RHS} = \cos\left(\f{2 \pi k}{n} + x + \pi\right) = - \cos\left(\f{2 \pi k}{n} + x \right) \; ,
\end{equation*}
where  $k$ can be taken between $0$ and $n-1$.
Having the same roots for both sides, we  check the numerical factor for $a=\pm1$ (which can be done by looking at poles in $x$) and conclude the proof.

\end{proof}

Having in mind the special case of real positive couplings, in which case $0\le \lambda,\tau \le 1$, it is convenient to use the expression of the Chebyshev polynomial in term of hyperbolic functions:
\begin{equation*}
T_{n}(a) = \cosh(n X_a) \qquad \text{with the notation} \quad X_{a} = \arcosh(a)
\,,\quad\mathfrak{Re}(X_{a})\ge 0
\,.
\end{equation*}
Remembering the parity of the Chebyshev polynomials, $T_{n}(-a) = (-1)^{n} T_{n}(a)$, the product over the spatial modes $k$ gives:
\be
\prod_{k=0}^{N_x-1} \left[2 - 2 \tau \cos\left(\f{2\pi}{N_x}k\right) \right]
=
2\,\tau^{N_x} \left( \ch(N_x X_{\tau^{{-1}}}) - 1 \right)
=
4 \tau^{N_x} \sh^{2}\left( \f{N_x X_{\tau^{-1}}}{2}\right)
\; .
\ee
We can deal with the product over the temporal modes in a similar fashion:
\be
\prod_{\omega=0}^{N_t-1} \bigg[1 - \lambda \cos\left( \f{2\pi}{N_t}\omega+\f{\varphi}{N_t}  - \f{1}{N_t}\gamma k\right)\bigg]
=
\f{\lambda^{N_t}}{2^{N_t}} \, 2\Big[\ch(N_t X_{\lambda^{-1}}) -\cos\left( \varphi - \gamma k \right) \Big]
\; .
\ee
Combining these two results, the Ponzano-Regge partition function simplifies to:
\begin{equation}
\label{eq:starting_point_residue_case_trivial_disp_rel}
Z^{\PR}_{\cT}
=
\f{2^{N_tN_x-2N_{t}}}{(\lambda \tau)^{N_t N_x}}
\f{1}{\sh^{2N_t}\left(\f12{N_x X_{\tau^{-1}}}\right)}
\f{1}{\pi} \int_{0}^{2\pi} \text{d} \varphi \sin^{2}(\varphi)
\prod_{k=0}^{N_x-1} \f{1}{2\big[\ch(N_t X_{\lambda^{-1}})-\cos(\varphi-\gamma k)\big]} \;.
\end{equation}
%
%
This expression  clearly identifies the poles in $\vphi$ and is thus well-suited to perform the final integration.
%
%
Indeed, the poles in $\vphi$ in the complex plane are:
\be
\vphi\equiv\f{2\pi N_{\gamma}}{N_{x}}k\pm iN_{t}X_{\lambda^{-1}} \,\,[2\pi]\,.
\ee
This structure with a linear sequence of resonances, reminiscent of a Regge trajectory, considerably simplifies the calculation of the Ponzano-Regge amplitude. This behavior is proper to the decoupled case ($\lambda^{2}+\tau^{2}=1$). The generic pole structure is less regular when considering the general case of arbitrary couplings, as we will see in section \ref{sec:generalcase}.

Calling $K$  the GCD between $N_x$ and $N_\gamma$ and introducing the respective coprime divisors,
\be
N_x = K n_x\,, \qquad N_\gamma = K n_\gamma \;,
\nn
\ee
we see that each pole has the multiplicity $K$. Let us start with the case $K=1$, which happens for instance for a simple shift $N_{\gamma}=1$ and arbitrary $N_{x}$. The case $K\ge 2$ will only introduce extra combinatorial factors and $K$-powers.
%
%
\begin{prop}
\label{propK1}
Assuming the couplings satisfy $\lambda^2 + \tau^2 = 1$, and that the shift $N_{\gamma}$ defining the  twist angle is coprime with the number of spatial nodes $N_{x}>2$, the Ponzano-Regge partition function on the twisted solid torus with a coherent spin network boundary state on the square lattice admits a closed expression:
\beq
\label{expression1}
Z^{\PR}_{\cT}
&=&
\f{2^{N_t(N_x-2)-1}}{(\lambda \tau)^{N_t N_x}}
\f{1}{\sh^{2N_t}\left(\f{1}{2}N_x X_{\tau^{-1}}\right)}
\f{N_x}{\sh(N_t X_{\lambda^{-1}})}
\prod_{k=1}^{N_x-1} \f{1}{2(\ch(N_t X_{\lambda^{-1}}) - \cos(k\gamma+i N_t X_{\lambda^{-1}}))}
\\
&=&
\f{2^{N_t(N_x-2)-1}}{(\lambda \tau)^{N_t N_x}}
\f{1}{\sh^{2N_t}\left(\f{1}{2}N_x X_{\tau^{-1}}\right)}
\f{1}{\sh(N_{x}N_t X_{\lambda^{-1}})}
\,.
\label{expression2}
\eeq
\end{prop}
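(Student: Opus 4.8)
The plan is to start from the already-reduced form \eqref{eq:starting_point_residue_case_trivial_disp_rel}. The $\tau$-dependent prefactor $2^{N_tN_x-2N_t}(\lambda\tau)^{-N_tN_x}\sh^{-2N_t}(\tfrac12 N_xX_{\tau^{-1}})$ plays no further role, so the whole problem reduces to evaluating
\[
\cJ\;:=\;\f1\pi\int_0^{2\pi}\rd\varphi\,\sin^2\varphi\,\prod_{k=0}^{N_x-1}\f{1}{2\big[\ch b-\cos(\varphi-\gamma k)\big]}\,,\qquad b:=N_tX_{\lambda^{-1}}\,,
\]
and showing $\cJ=\tfrac{N_x}{2\sh b}\prod_{k=1}^{N_x-1}\tfrac{1}{2(\ch b-\cos(k\gamma+ib))}=\tfrac{1}{2\sh(N_xb)}$. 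I would carry out the computation in the regime of real positive couplings, where $X_{\lambda^{-1}}=\arcosh(\lambda^{-1})>0$, $b>0$, and the integrand is regular on the real $\varphi$-axis; the general statement then follows by analytic continuation in $\lambda$.

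For the first equality I would continue $\varphi$ into the complex plane and push the contour $[0,2\pi]$ up to $\mathfrak{Im}\,\varphi\to+\infty$: the vertical edges cancel by $2\pi$-periodicity of the integrand and the upper edge drops because the integrand decays like $e^{(2-N_x)\mathfrak{Im}\,\varphi}$ --- this is precisely where the hypothesis $N_x>2$ enters. The only singularities enclosed are the simple poles of the $k$-th factor at $\varphi=k\gamma+ib$; since $\gcd(N_\gamma,N_x)=1$ the phases $k\gamma\bmod 2\pi$ are $N_x$ distinct points, so there are exactly $N_x$ poles. Using $\ch b-\cos(\varphi-\gamma k)\simeq i\sh b\,(\varphi-k\gamma-ib)$ near each one, its residue is $\tfrac{1}{2i\sh b}\sin^2(k\gamma+ib)\prod_{k'\neq k}\tfrac{1}{2(\ch b-\cos(\gamma(k-k')+ib))}$, and --- again by coprimality --- as $k'$ ranges over the remaining indices the differences $k-k'\bmod N_x$ sweep out $\{1,\dots,N_x-1\}$, so the product over $k'$ is the \emph{same} number $\prod_{m=1}^{N_x-1}\tfrac{1}{2(\ch b-\cos(m\gamma+ib))}$ at every pole. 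Summing residues then needs $\sum_{k=0}^{N_x-1}\sin^2(k\gamma+ib)=\tfrac{N_x}{2}$, which holds because the geometric sums $\sum_k e^{\pm 2ik\gamma}$ vanish whenever $e^{2i\gamma}\neq1$, i.e.\ whenever $N_x>2$. Collecting the factors and using $2^{N_tN_x-2N_t}/2=2^{N_t(N_x-2)-1}$ reproduces \eqref{expression1}.

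To pass from \eqref{expression1} to \eqref{expression2} I would establish the identity $\prod_{k=1}^{N_x-1}2\big(\ch b-\cos(k\gamma+ib)\big)=N_x\sh(N_xb)/\sh b$. Writing $2(\ch b-\cos\theta)=-e^{-i\theta}(e^{i\theta}-e^{b})(e^{i\theta}-e^{-b})$ with $\theta_k=k\gamma+ib$, so that $e^{i\theta_k}=\zeta_k e^{-b}$ with $\zeta_k=e^{ik\gamma}$, and noting that $\{\zeta_k\}_{k=1}^{N_x-1}$ are exactly the non-trivial $N_x$-th roots of unity, the product factorizes into $\prod_{\zeta\neq1}(\zeta-1)=(-1)^{N_x-1}N_x$, $\prod_{\zeta\neq1}(\zeta-e^{2b})=(-1)^{N_x+1}e^{(N_x-1)b}\sh(N_xb)/\sh b$, and $\prod_k\zeta_k^{-1}=(-1)^{N_\gamma(N_x-1)}$; the powers of $e^b$ cancel and all the signs combine to $+1$ (using that $N_\gamma$ is odd whenever $N_x$ is even). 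The same identity also follows as the degenerate $x_0=\pi+ib$ limit of Lemma \ref{prop:cosine_product}, where the $k=0$ factor vanishes and one compares first derivatives in $x$ of the two sides. Substituting it in \eqref{expression1} collapses the $k$-product against $\sh(N_tX_{\lambda^{-1}})$ and gives \eqref{expression2}. Alternatively, setting $z=e^{i\varphi}$ from the outset one has $\prod_{k=0}^{N_x-1}2(\ch b-\cos(\varphi-\gamma k))=-z^{-N_x}(z^{N_x}-e^{N_xb})(z^{N_x}-e^{-N_xb})$, and evaluating $\cJ$ as a contour integral over $|z|=1$ --- picking up the $N_x$ poles inside, whose $z_j^{\pm2}$-dependent pieces sum to zero by the same $N_x>2$ argument --- yields $\cJ=1/(2\sh(N_xb))$ directly.

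The argument is essentially mechanical; the two places that require care are the justification of the contour shift (which is exactly what forces $N_x>2$) and the bookkeeping of the $\pm1$ factors in the root-of-unity products, both of which rely on $\gcd(N_\gamma,N_x)=1$ guaranteeing that $\{e^{ik\gamma}\}$ is a genuine relabelling of the $N_x$-th roots of unity and that the $N_x$ poles are simple and distinct.
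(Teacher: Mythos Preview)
Your proof is correct and follows essentially the same strategy as the paper: compute the $\varphi$-integral by residues, use the coprimality of $N_\gamma$ and $N_x$ to see that the product over $k'\neq k$ is the same at every pole, and invoke $\sum_{n=0}^{N_x-1}\sin^2(\gamma n+ib)=N_x/2$ to obtain \eqref{expression1}. The paper carries out the residue computation after the substitution $z=e^{i\varphi}$ (so the contour becomes the unit circle and the relevant poles are the $z_k^-$ inside it), whereas you shift the $\varphi$-contour into the upper half-strip; these are equivalent reformulations of the same calculation, and in both the hypothesis $N_x>2$ enters in the same place (no pole at $z=0$, equivalently decay of the integrand at $\mathfrak{Im}\,\varphi\to+\infty$).

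For the passage from \eqref{expression1} to \eqref{expression2} the paper simply cites Lemma~\ref{prop:cosine_product} again, but as you correctly observe this is a degenerate instance (the $k=0$ factor vanishes) and strictly requires the limit you describe. Your direct roots-of-unity computation is a cleaner and more self-contained alternative, and the factorisation $\prod_{k=0}^{N_x-1}2(\ch b-\cos(\varphi-\gamma k))=-z^{-N_x}(z^{N_x}-e^{N_xb})(z^{N_x}-e^{-N_xb})$ that you mention at the end is in fact the most efficient route to \eqref{expression2}, bypassing \eqref{expression1} entirely.
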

%

\begin{proof}

We compute the Ponzano-Regge amplitude applying the residue theorem to the integral over $\varphi$. 
The poles are directly read from equation \eqref{eq:starting_point_residue_case_trivial_disp_rel}. There are $2N_x$ simple poles at the following positions, labelled by an integer $k=0, \dots,(N_{x}-1)$,
\begin{equation}
\varphi_{k}^{\pm} = \gamma k \mp i N_t X_{\lambda^{-1}} \;.
\end{equation}
%
%
Note that the case $\lambda = 1$ is peculiar. For this critical value, we have $X_{1} = 0$ and the poles are real and are half as many, $\varphi_{k}^{+}=\varphi_{k}^{-}$. Moreover, the mode $\varphi_{0}=0$ is not a pole due to the sine square factor.
Thus assuming that we are away from the critical coupling, $\lambda \ne 1$,  we introduce the complex variable $z = e^{i \varphi}$.Under this change of variable, the integration contour becomes the unit circle $\U(1)$ in the complex plane:
\be
Z^{\PR}_{\cT}
=
\f{2^{N_t(N_x-2)}}{(\lambda \tau)^{N_t N_x}} \f{1}{\sh^{2N_t}\left(\f{1}{2}N_x X_{\tau}\right)}\f{1}{\pi} \int_{U(1)} \f{\text{d}z}{i} \f{-1}{4}\f{(z^2-1)^2}{z^2} z^{N_x-1} \prod_{k=0}^{N_x-1} \f{-e^{i \gamma_k}}{(z-z_k^+)(z-z_k^-)}
\,.
\nn
\ee
 As long as $N_x > 2$, the origin $z = 0$ is not a pole.
The poles are
\begin{equation*}
	z_{k}^{\pm} = e^{i \gamma k} e^{\pm N_t X_{\lambda^{-1}}}
	\;.
\end{equation*}
As long as  $|\lambda| \neq 1$, the positive branch of poles are outside the unit circle while the negative branch of poles are within the unit circle, $|z_k^{+}| \geq 1$ and $|z_k^{-}| \leq 1$.
%
%
Thus the residue formula only involves the negative branch:
\beq
Z^{\PR}_{\cT}
&=&
\f{2^{N_t(N_x-2)}}{(\lambda \tau)^{N_t N_x}} \f{1}{\sh^{2N_t}\left(\f{1}{2}N_x X_{\tau^{-1}}\right)} 2 \sum_{n=0}^{N_x-1} \f{-1}{4}\f{((z_n^-)^2 -1)^2}{(z_n^-)^2} (z_n^-)^{N_x-1} \f{-e^{i \gamma_n}}{z_n^- -z_n^+} \prod_{\substack{k=0 \\ k \neq n}}^{N_x-1} \f{-e^{i \gamma_k}}{(z_n^- -z_k^+)(z_n^- -z_k^-)}
\nn
\\
&=&
\f{2^{N_t(N_x-2)}}{(\lambda \tau)^{N_t N_x}} \f{1}{\sh^{2N_t}\left(\f{1}{2}N_x X_{{\tau^{-1}}}\right)} \f{1}{\sh(N_t X_{\lambda^{-1}})}\sum_{n=0}^{N_x-1} \sin^2\left(\gamma n + i N_t X_{\lambda^{-1}} \right)  \prod_{k=1}^{N_x-1} \f{1}{2(\ch(N_t X_{\lambda^{-1}}) - \cos(\gamma k+i N_t X_{\lambda^{-1}}))}
\;.
\nn
\eeq
%
The sum over $n$ can be explicitly computed\footnotemark{}, yielding a factor $N_x/2$.
\footnotetext{
This result holds only if $N_\gamma \neq N_x/2$. In that case, the sine does not depend on $n$ explicitly, and the sum returns $-N_x \sh^2(N_t X_{\lambda^{-1}})$.
}
The  result is thus
\begin{equation}
Z^{\PR}_{\cT}
=
\f{2^{N_t(N_x-2)-1}}{(\lambda \tau)^{N_t N_x}} \f{1}{\sh^{2N_t}\left(\f{1}{2}N_x X_{\tau^{-1}}\right)} \f{N_x}{\sh(N_t X_{\lambda^{-1}})} \prod_{k=1}^{N_x-1} \f{1}{2(\ch(N_t X_{\lambda^{-1}}) - \cos(\gamma k+i N_t X_{\lambda^{-1}}))} \; . 
	\label{eq:case_trivial_disp_rel_amp_final}
\end{equation}
Finally we can apply the product lemma \ref{prop:cosine_product} as before to compute the remaining product over poles to get the desired exact evaluation.

\end{proof}

Let us compare the two expressions \eqref{expression1} and \eqref{expression2} for $Z^{\PR}_{\cT}$ given in the proposition above.
On the one hand, the final formula \eqref{expression2} shows explicitly that  $Z^{\PR}_{\cT}$ does not depend on the twist angle $\gamma$ as long as the shift $N_{\gamma}$ is coprime with $N_{x}$. On the other hand, the less compact formula \eqref{expression1} gives the explicit expansion of the partition function in the momentum basis and shows  the poles in $\gamma$. As we will discuss in the next section, this latter expression allows for a clearer continuum limit, where we will recover the BMS character formula for the 3D quantum gravity path integral, as derived in \cite{Oblak:2015sea,Barnich:2015mui,Bonzom:2015ans}.

\medskip

We can generalize this formula to the case of an arbitrary non-trivial GCD between $N_{\gamma}$ and $N_{x}$:
\begin{prop}
\label{propK}
Calling $K$ the GCD between $N_{\gamma}$ and $N_{x}>2$, the Ponzano-Regge partition function on the twisted solid torus for a coherent spin network boundary state on the square lattice, with couplings satisfying $\lambda^2 + \tau^2 = 1$, admits a closed expression:
\beq
Z^{\PR}_{\cT}
&=&
\f{2^{N_t(N_x-2)-K}n_{x}^K}{(\lambda \tau)^{N_t N_x}\sh^{2N_t}\left(\f{1}{2}N_x X_{\tau^{-1}}\right) \sh^K(N_t X_{{\lambda^{-1}}})}
L_{K-1}\left( \f{\ch(n_x N_t X_{\lambda^{-1}})}{\sh(n_x N_t X_{\lambda^{-1}})} \right)
\prod_{k=1}^{N_x-1} \f{1}{2(\ch(N_t X_{{\lambda^{-1}}}) - \cos(\gamma k+i N_t X_{{\lambda^{-1}}}))}
\nn
 \\
&=&
\f{2^{N_t(N_x-2)-K}}{(\lambda \tau)^{N_t N_x}}
\f{1}{\sh^{2N_t}\left(\f{1}{2}N_x X_{\tau^{-1}}\right) \sh^K(n_x N_t X_{\lambda^{-1}})}
\,
L_{K-1}\left( \f{\ch(n_x N_t X_{\lambda^{-1}})}{\sh(n_x N_t X_{\lambda^{-1}})} \right)
\,,
\label{expressionK}
\eeq
where $L_n$ are the Legendre polynomials.
	
\end{prop}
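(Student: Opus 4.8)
The plan is to reduce Proposition~\ref{propK} to the $K=1$ case of Proposition~\ref{propK1} by collapsing the product over the spatial index, and then to identify the remaining one-dimensional integral as a classical Legendre integral. Starting from the integral representation \eqref{eq:starting_point_residue_case_trivial_disp_rel}, the first step is to observe that the phases $\gamma k=2\pi N_\gamma k/N_x$ appearing in the product over $k=0,\dots,N_x-1$ take only $n_x=N_x/K$ distinct values modulo $2\pi$, each attained exactly $K$ times, and --- because $n_\gamma=N_\gamma/K$ and $n_x$ are coprime --- these values are precisely $\{2\pi j/n_x\mid j=0,\dots,n_x-1\}$. Hence
\[
\prod_{k=0}^{N_x-1}\f{1}{2\big[\ch(N_t X_{\lambda^{-1}})-\cos(\vphi-\gamma k)\big]}
=\left(\prod_{j=0}^{n_x-1}\f{1}{2\big[\ch(N_t X_{\lambda^{-1}})-\cos(\vphi-2\pi j/n_x)\big]}\right)^{K},
\]
and applying the cosine product Lemma~\ref{prop:cosine_product} to the inner $n_x$-fold product (with $a=\ch(N_t X_{\lambda^{-1}})$ and using $T_{n_x}(\ch u)=\ch(n_x u)$) collapses it to $\big(2[\ch(n_x N_t X_{\lambda^{-1}})-\cos(n_x\vphi)]\big)^{-1}$. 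This leaves
\[
Z^{\PR}_{\cT}=\f{2^{N_t(N_x-2)-K}}{(\lambda\tau)^{N_tN_x}\,\sh^{2N_t}\!\big(\f12 N_xX_{\tau^{-1}}\big)}\,\f1\pi\int_0^{2\pi}\f{\sin^2\vphi\,\rd\vphi}{\big[\ch a-\cos(n_x\vphi)\big]^{K}},\qquad a:=n_xN_tX_{\lambda^{-1}}.
\]

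The second step is to evaluate this trigonometric integral. Writing $\sin^2\vphi=\f12(1-\cos2\vphi)$ and expanding $[\ch a-\cos(n_x\vphi)]^{-K}$ in its (even) Fourier series in $\vphi$, the $\cos2\vphi$ term integrates to zero whenever $n_x>2$, so the integral reduces to $\f{1}{2\pi}\int_0^{2\pi}[\ch a-\cos(n_x\vphi)]^{-K}\rd\vphi$, which by $2\pi$-periodicity and the rescaling $\psi=n_x\vphi$ equals $\f1{2\pi}\int_0^{2\pi}[\ch a-\cos\psi]^{-K}\rd\psi$. I would then invoke the classical identity
\[
\f1{2\pi}\int_0^{2\pi}\f{\rd\psi}{(\ch a-\cos\psi)^{K}}=\f{L_{K-1}(\coth a)}{\sh^{K}a},
\]
where $L_{K-1}$ is the Legendre polynomial. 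This can be proved either by induction on $K$, starting from the elementary case $\f1{2\pi}\int_0^{2\pi}(\ch a-\cos\psi)^{-1}\rd\psi=\sh^{-1}a$ and differentiating with respect to $\ch a$ --- which produces $(-\partial_b)^{K-1}(b^2-1)^{-1/2}=(K-1)!\,(b^2-1)^{-K/2}L_{K-1}\big(b/\sqrt{b^2-1}\big)$, a standard consequence of the generating function $(1-2xt+t^2)^{-1/2}=\sum_n L_n(x)t^n$ --- or directly by residues: setting $z=e^{i\psi}$ one picks up the single pole $z=e^{-a}$ of order $K$ inside the unit circle, and the $(K-1)$-th derivative in the residue formula resums to $L_{K-1}$. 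Combining the two steps, with $\coth a=\ch(n_xN_tX_{\lambda^{-1}})/\sh(n_xN_tX_{\lambda^{-1}})$, yields the compact expression, i.e. the second line of \eqref{expressionK}.

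For the first (expanded) line of \eqref{expressionK}, I would retrace the end of the $K=1$ proof: using Lemma~\ref{prop:cosine_product} once more one gets $\prod_{k=1}^{N_x-1}2\big(\ch(N_t X_{\lambda^{-1}})-\cos(\gamma k+iN_tX_{\lambda^{-1}})\big)=n_x^{K}\,\sh^{K}(n_xN_tX_{\lambda^{-1}})/\sh^{K}(N_tX_{\lambda^{-1}})$ (this is exactly the identity used to pass from \eqref{expression1} to \eqref{expression2}, now with each distinct factor repeated and $N_x\to n_x$), which trades the $\sh^{-K}(n_xN_tX_{\lambda^{-1}})$ of the compact form for the product over $k$ together with the $n_x^{K}/\sh^{K}(N_tX_{\lambda^{-1}})$ prefactor displayed in \eqref{expressionK}. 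Finally, the borderline case $n_x=2$, i.e. $N_\gamma=N_x/2$, must be treated apart: there the $\cos2\vphi$ contribution no longer vanishes and one obtains the extra term already flagged in the footnote to Proposition~\ref{propK1}; I would record the modified formula in that case.

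The main obstacle --- beyond careful bookkeeping of the powers of $2$, $\lambda$ and $\tau$ across the collapse --- is the Legendre integral identity: recognizing that the $K$-fold pole, once integrated against the $\sin^2\vphi$ weight coming from the Cartan measure, reorganizes into a Legendre (rather than a Chebyshev or Gegenbauer) polynomial, and supplying a self-contained derivation of it (the $(K-1)$-fold differentiation of $(b^2-1)^{-1/2}$, or equivalently the higher-order residue at $z=e^{-a}$) is the one genuinely new computational input compared with the $K=1$ case.
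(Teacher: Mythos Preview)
Your reduction to the one-dimensional integral over $\varphi$ is identical to the paper's: both collapse the $N_x$-fold product via Lemma~\ref{prop:cosine_product} to arrive at $\f1\pi\int_0^{2\pi}\sin^2\varphi\,[\ch a-\cos(n_x\varphi)]^{-K}\,\rd\varphi$ with $a=n_xN_tX_{\lambda^{-1}}$. For this remaining integral you take a cleaner route: you note directly that the $\cos2\varphi$ piece of $\sin^2\varphi$ integrates to zero for $n_x>2$, rescale $\psi=n_x\varphi$, and invoke the classical identity $\f1{2\pi}\int_0^{2\pi}(\ch a-\cos\psi)^{-K}\rd\psi=L_{K-1}(\coth a)/\sh^K a$, which you justify by differentiating the $K=1$ case with respect to $b=\ch a$. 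The paper instead expands the integrand as a binomial series in $\cos(n_x\varphi)/\ch a$, integrates term by term, recognises the resulting series as ${}_2F_1\big(\f K2,\f{K+1}2;1;\ch^{-2}a\big)$, and only then reduces to the Legendre polynomial. Your route avoids this hypergeometric detour and is more transparent; both lead to the same compact second line of \eqref{expressionK}.

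One point deserves care. Your back-engineering of the first (expanded) line of \eqref{expressionK} asserts
\[
\prod_{k=1}^{N_x-1}2\big(\ch(N_tX_{\lambda^{-1}})-\cos(\gamma k+iN_tX_{\lambda^{-1}})\big)
=\f{n_x^{K}\,\sh^{K}(n_xN_tX_{\lambda^{-1}})}{\sh^{K}(N_tX_{\lambda^{-1}})}\,.
\]
But for $K>1$ the left-hand side actually vanishes: at $k=n_x,2n_x,\dots,(K-1)n_x$ one has $\gamma k\in2\pi\Z$ and hence $\cos(\gamma k+iN_tX_{\lambda^{-1}})=\ch(N_tX_{\lambda^{-1}})$, so those $K-1$ factors are exactly zero. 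The first line of the proposition therefore contains $1/0$ factors when $K>1$ and is only literally valid for $K=1$; for $K>1$ it must be read as a formal pole expansion (or with the degenerate factors removed). The paper's own back-engineering step carries the same issue, so this does not undermine your argument for the compact second line, which is the unambiguous statement.
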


\begin{proof}

We will not use compute the integral by residue, but  compute it directly by brute force.
We recall that the products over the spatial and temporal modes return:
\begin{equation*}
\prod_{\omega,k} \left[2 - 2 \tau \cos\left(\f{2\pi}{N_x}k\right) \right]
\,=\, 
\prod_{\om}4 \tau^{N_x} \sh^{2}\left( \f12{N_x X_{\tau^{-1}}}\right)
\,=\,
\left[
4 \tau^{N_x} \sh^{2}\left( \f12{N_x X_{\tau^{-1}}}\right)
\right]^{N_{t}}
\,,
\end{equation*}
and
\beq
\prod_{\omega,k} \left[1 - \lambda \cos\left(\f{\varphi}{N_t} + \f{2\pi\om}{N_t} - \f{2\pi N_{\gamma}k}{N_tN_{x}}\right)\right]
&=&
\prod_{k}\f{\lambda^{N_t}}{2^{N_t}} \, 2\left[\ch(N_t X_{\lambda^{-1}}) -\cos\left( \varphi - \f{2\pi N_{\gamma}k}{N_{x}} \right) \right]
\nn\\
&=&
\f{\lambda^{N_t N_x}}{2^{N_t N_x}} \Big[2\left(\ch(N_t n_x X_{\lambda^{-1}}) -\cos\left( n_x \varphi \right) \right) \Big]^K \; .
\eeq
The Ponzano-Regge amplitude  then reads:
\begin{equation}
Z^{\PR}_{\cT}
=
\f{2^{N_t(N_x-2)-K}}{(\lambda \tau)^{N_t N_x}} \f{1}{\sh^{2N_t}\left(\f{1}{2}N_x X_{\tau^{-1}}\right)} \f{1}{\pi} \int_{0}^{2\pi} \text{d} \varphi \sin^{2}(\varphi) \left[\f{1}{\ch(N_t n_x X_{\lambda^{-1}}) -\cos\left( n_x \varphi \right)}\right]^K \;.
\end{equation}
We  expand the fraction as an infinite sum over powers of cosines,
\begin{equation*}
\int_{0}^{2 \pi} \text{d}\varphi \sin^{2}(\varphi) \f{1}{\big[\ch(n_x N_t X_{\lambda^{-1}})-\cos(n_x \varphi)\big]^{K}}
=
\sum_{n=0}^{\infty} (-1)^n \binom{K+n-1}{n} \f{1}{\ch^{K+n}(n_x N_t X_{\lambda^{-1}})} \int_{0}^{2 \pi} \text{d} \varphi \sin^{2}(\varphi) \cos^{n}(n_x \varphi) \; .
\end{equation*}
and compute the trigonometric integrals\footnotemark, distinguishing even and odd power cases,
\be
\f1\pi\int_{0}^{2\pi}\text{d} \varphi \sin^{2}(\varphi) \cos^{2n}(n_x \varphi) = \f{\pi}{2^{2n}} \binom{2n}{n} 
\,,\qquad\qquad
\f1\pi\int_{0}^{2\pi}\text{d} \varphi \sin^{2}(\varphi) \cos^{2n+1}(n_x \varphi) = 0
\,.\nn
\ee
\footnotetext{
A quick way to proceed is to expand the sines and cosines into phases,
\begin{equation}
	\int_{0}^{2 \pi} \text{d} \varphi \sin^{2}(\varphi) \cos^{n}(n_x \varphi)= \int_{0}^{2 \pi} \text{d} \varphi \f{-1}{4}(e^{2i\varphi}+e^{-2i\varphi}-2) \sum_{m=0}^{n} \binom{n}{m} e^{i(n-2m)n_x \varphi}
	\,.\nn
\end{equation}
Assuming $n_x > 2$,  only the terms $2m = n$ contribute and give the desired result.
}
This turns the integral into a infinite series:
\begin{equation}
Z^{\PR}_{\cT}
=
\f{2^{N_t(N_x-2)-K}}{(\lambda \tau)^{N_t N_x}}
\f{1}{\sh^{2N_t}\left(\f{1}{2}N_x X_{\tau^{-1}}\right) }
\sum_{n=0}^{\infty} \frac{(K+2n-1)!}{(n!)^{2}(K-1)!} \f{1}{2^{2n}\ch^{K+2n}(n_x N_t X_{\lambda^{-1}})}
\,,
\end{equation}
where we recognize a hypergeometric function ${}_{2}F_1$:
\begin{equation}
Z^{\PR}_{\cT}
=
\f{2^{N_t(N_x-2)-K}}{(\lambda \tau)^{N_t N_x}}
\f{1}{\sh^{2N_t}\left(\f{1}{2}N_x X_{\tau^{-1}}\right) \ch^K(n_x N_t X_{\lambda^{-1}})}
{}_2F_1\left(\f{K}{2},\f{K+1}{2};1;\f{1}{\ch^2(n_x N_t X_{\lambda^{-1}})}\right)
\,.\nn
\end{equation}
This hypergeometric series can actually be exactly re-summed into a Legendre polynomial: 
\begin{equation}
Z^{\PR}_{\cT}
=
\f{2^{N_t(N_x-2)-K}}{(\lambda \tau)^{N_t N_x}}
\f{\mathrm{Sign}[\mathfrak{Re}(X_{\lambda^{-1}})]}{\sh^{2N_t}\left(\f{1}{2}N_x X_{\tau^{-1}}\right) \sh^K(n_x N_t X_{\lambda^{-1}})}
\,
L_{K-1}\left( \f{\ch(n_x N_t X_{\lambda^{-1}})}{\sh(n_x N_t X_{\lambda^{-1}})} \right)
\,.
\end{equation}
Since we defined the inverse hyperbolic cosine with positive real part, the sign disappears.

When the GCD is $K=1$, we recover the same result as from the product over poles in the previous calculation by residue. Here we can back-engineer the pole decomposition by expanding a posteriori $\sh^K(n_x N_t X_{\lambda})$ as a product over spatial modes:
\begin{equation*}
Z^{\PR}_{\cT}
=
\f{2^{N_t(N_x-2)-K}n_{x}^K}{(\lambda \tau)^{N_t N_x}\sh^{2N_t}\left(\f{1}{2}N_x X_{\tau^{-1}}\right) \sh^K(N_t X_{{\lambda^{-1}}})}
\prod_{k=1}^{N_x-1} \f{1}{2(\ch(N_t X_{{\lambda^{-1}}}) - \cos(\gamma k+i N_t X_{{\lambda^{-1}}}))}
L_{K-1}\left( \f{\ch(n_x N_t X_{\lambda^{-1}})}{\sh(n_x N_t X_{\lambda^{-1}})} \right)
\; ,
\end{equation*}
which is valid for any $K$. 

\end{proof}



The role of $K=\mathrm{GCD}(N_{\gamma},N_{x})$ is crucial in considering the the flow under refinement of boundary lattice and the continuum limit. As underlined in \cite{PRholo1}, the case $K=1$ corresponds to an irrational twist angle while the case $K\rightarrow\infty$ corresponds to a rational twist angle. More precisely, the twist angle is defined in the discrete setting from the shift as $\gamma=2\pi N_{\gamma}/N_{x}$. Then we consider the two cases, at fixed number of time slices $N_{t}$:
\begin{itemize}
\item the limit towards $\gamma\notin 2\pi\Q$: \\
\noindent
We construct a sequence of pairs of coprime integers $(N_{\gamma}^{(p)},N_{x}^{(p)})_{p\in\N}$ such that the limit of the ratios $2\pi N_{\gamma}^{(p)}/N_{x}^{(p)}$ converge towards $\gamma$ when $p$ goes to infinity. Then we define the Ponzano-Regge amplitude for a twisted torus with irrational twist as the limit of the partition function $Z^{\PR}_{\cT}$ for the square lattice with $N_{x}^{(p)}$ spatial nodes and a gluing shift $N_{\gamma}^{(p)}$. Since their two numbers are coprime by definition, their GCD is always 1, so the partition function $Z^{\PR}_{\cT}$ is given by proposition \ref{propK1} and does not depend on the twist angle $\gamma$ at the end of the day:
\be
Z^{\PR}_{\cT}
=
\f{2^{N_t(N_x-2)-1}}{(\lambda \tau)^{N_t N_x}}
\f{1}{\sh^{2N_t}\left(\f{1}{2}N_x X_{\tau^{-1}}\right)}
\f{1}{\sh(N_{x}N_t X_{\lambda^{-1}})}
=
\cA\,{ z}^{\PR}
\,,
\nn
\ee
where we have distinguished the pre-factor $\cA$ from a reduced Ponzano-Regge amplitude $z^{\PR}$, with the following asymptotics as $N_{x}$ is sent to infinity:
\be
\cA=
\f{2^{N_t(N_x-2)}}{(\lambda \tau)^{N_t N_x}}
\f{1}{\sh^{2N_t}\left(\f{1}{2}N_x X_{\tau^{-1}}\right)}
\underset{N_{x}\rightarrow\infty}{\sim}
\f1{2^{2N_{t}}}\,\left[\f{2}{(1+\lambda)\lambda} \right]^{N_{x}N_{t}}
\,,
\ee
\be
z^{\PR}
=
\f{1}{2\,\sh(N_{x}N_t X_{\lambda^{-1}})}
\underset{N_{x}\rightarrow\infty}{\sim}
\f1{2}\,\left[\f{1+\tau}{\lambda} \right]^{N_{x}N_{t}}
\,.
\ee

\item the limit towards $\gamma\in 2\pi\Q$: \\
\noindent
We choose the fundamental representation of the twist angle as a fraction $\gamma=2\pi n_{\gamma}/n_{x}$ with $n_{\gamma}$ and $n_{x}$ coprime. Then the infinite refinement limit is taken by considering the sequence of lattices with $(N_{\gamma},N_{x})=(Kn_{\gamma},Kn_{x})$ as the integer $K$ grows to infinity. The partition function $Z^{\PR}_{\cT}$ is given by proposition \ref{propK}, where we factorize out the same pre-factor $\cA$ as in the previous case:
\be
Z^{\PR}_{\cT}
=
\f{2^{N_t(N_x-2)-K}}{(\lambda \tau)^{N_t N_x}}
\f{1}{\sh^{2N_t}\left(\f{1}{2}N_x X_{\tau^{-1}}\right) \sh^K(n_x N_t X_{\lambda^{-1}})}
\,
L_{K-1}\left( \f{\ch(n_x N_t X_{\lambda^{-1}})}{\sh(n_x N_t X_{\lambda^{-1}})} \right)
=
\cA z_{K}^{\PR}
\,.
\nn
\ee
The reduced Ponzano-Regge amplitude now has a different behavior as $K$ is sent to infinity while keeping $n_{x}$ fixed and finite:
\be
z^{\PR}_{K}
=
\left[\f{1}{2\,\sh(n_x N_t X_{\lambda^{-1}})}\right]^{K}
\,
L_{K-1}\left( \f{\ch(n_x N_t X_{\lambda^{-1}})}{\sh(n_x N_t X_{\lambda^{-1}})} \right)
\underset{K\rightarrow\infty}{\sim}
\f1{\sqrt{\pi K}}\,\ch\left(\f{n_x N_t X_{\lambda^{-1}}}2\right)\,
\left[\f{1}{2\,\sh\f{n_x N_t X_{\lambda^{-1}}}2}\right]^{2K}
\,.
\nn
\ee
We see that this reduced Ponzano-Regge amplitude has a different scaling than the irrational case with $K=1$. The power $K$ is natural from a geometric perspective: the vertical lines on the twisted torus form $K$ large loops (of length $n_{x}N_{t}$) as illustrated in figure \ref{fig:drawingK}.

\end{itemize}
\begin{figure}[h!]
	\begin{center}
		\begin{tikzpicture}[scale=.8]
		\foreach \i in {0,2,4}{
			\foreach \j in {0,...,3}{
				\draw (\i,\j) node[color=red] {$\bullet$};
				\draw[<-,color=red] (\i,\j-.5) --(\i,\j+.5);
			}
		}
		
		\foreach \i in {1,3,5}{
			\foreach \j in {0,...,3}{
				\draw (\i,\j) node[color=blue] {$\bullet$};
				\draw[<-,color=blue] (\i,\j-.5) --(\i,\j+.5);
			}
		}
		
		\foreach \i in {-2,0,2}{
			\draw[rounded corners=3 pt,->,color=red] (\i,4.5+0.3) --(\i,4+0.1+0.3)-- (\i+2,4-0.1+0.3)--(\i+2	,3.5+0.3)   ;
		}
		
		\foreach \i in {-1,1,3}{
			\draw[rounded corners=3 pt,->,color=blue] (\i,4.5+0.3) --(\i,4+0.1+0.3)-- (\i+2,4-0.1+0.3)--(\i+2	,3.5+0.3)   ;
		}
		
		\draw (-2,4.5+0.3) node[above]{$4$};
		\draw (-1,4.5+0.3) node[above]{$5$};
		\draw (0,-0.5) node[below]{$0$};
		\draw (1,-0.5) node[below]{$1$};
		\draw (2,-0.5) node[below]{$2$};
		\draw (3,-0.5) node[below]{$3$};
		\draw (4,-0.5) node[below]{$4$};
		\draw (5,-0.5) node[below]{$5$};
		
		\draw(-.9,0) node{$N_t$-1};
		\draw(-0.7,3) node{$0$};
		
		\end{tikzpicture}
	\end{center}
	\caption{Vertical lines of the lattice for $N_{\gamma}=2$, $N_{x}=6$ and any $N_t$. There is $K=2$ large loops on the twisted torus, represented in red and blue}
	\label{fig:drawingK}
\end{figure}
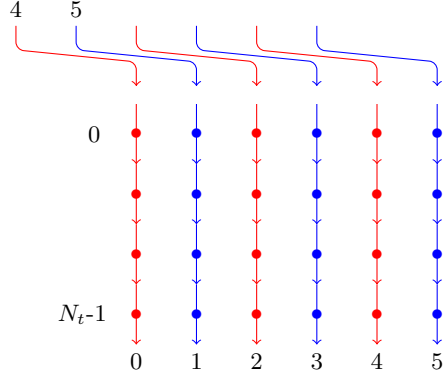
%

\subsection{General Residue Formula for the Ponzano-Regge Amplitude}
\label{sec:generalcase}

In this section, we focus on computing the Ponzano-Regge amplitude in the most general case, without assuming the condition $\lambda^{2}+\tau^{2}=1$ decoupling the temporal and spatial modes.
The computation is again done using the residue theorem, that allows us to obtain the mode expansion of the amplitude.
The starting point is equation \eqref{eq:starint_point_amp},
\begin{equation}
Z^{\PR}_{\cT}(\lambda,\tau)
=
\f{1}{\pi} \int_{0}^{2\pi} \text{d}\varphi \sin^{2}(\varphi) \prod_{\omega,k} \f{1}{\det\big[Q_{\omega,k}(\varphi)\big]} 
\nn
\end{equation}
where we recall that the mode determinants are
\begin{align}
\det\big[Q_{\omega,k}(\varphi)\big]
\,=\,
(\tau^2 + \lambda^2 - 1) +\Big[2 - 2 \tau \cos\left(\f{2\pi}{N_x}k\right) \Big]  \Big[1 - \lambda \cos\left(\f{\varphi}{N_t} + \f{2\pi}{N_t}\omega - \f{1}{N_t}\gamma k\right) \Big] \; .
\end{align}
We assume that $\lambda^{2}+\tau^{2}\ne 1$, so that the determinants are not straightforwardly factorizable. We can nevertheless write them as:
\begin{equation}
\det\big[Q_{\omega,k}(\varphi)\big]
\,=\,
\f{\tau\lambda}2\left[2\tau^{-1} - 2 \cos\left(\f{2\pi}{N_x}k\right) \right]
\left[2 C_{k} - 2\cos\left(\f{\varphi}{N_t} + \f{2\pi}{N_t}\omega - \f{1}{N_t}\gamma k\right)\right]
\; .
\end{equation}
where the coefficients $C_k$ are (to keep the notation a bit lighter, we keep the dependency on $\lambda$ and $\tau$ implicit)
\begin{equation}
	C_k = \lambda^{-1}\left[1 + \f{\tau^2 + \lambda^2 - 1}{2 - 2 \tau \cos \left(\f{2\pi}{N_x}k\right)}\right]
	\; .
\end{equation}
%
%
In the decoupled case when assuming  $\tau^2 + \lambda^2 = 1$, the second term of those coefficients drop out and the $C_k$ become all equal to $\lambda^{-1}$.
Now, in the general case with  $\tau^2 + \lambda^2 \ne 1$, the coefficients $C_{k}$ remain all different. We can nevertheless still perform the products and get the pole decomposition in the class angle $\vphi$.
The product over spatial modes remains unchanged,
\begin{equation}
\prod_{\omega,k}
\left[
2 \tau^{-1} - 2 \cos\left(\f{2\pi}{N_x}k\right)
\right]
=
2^{N_t} \Big[ \ch(n_x X_{\tau^{-1}}) - 1 \Big]^{N_t}
=
2^{2N_t} \sh^{2N_t}\left( \f{1}{2}n_x X_{\tau^{-1}}\right)
\; .
\end{equation}
The product over the temporal modes is slightly modified,
\begin{equation}
\prod_{\omega,k} \left[2 C_{k} - 2\cos\left(\f{\varphi+2\pi\omega - \gamma k}{N_{t}}\right) \right]
=
\prod_{k}2\Big[ \ch(N_t X_{C_k}) - \cos\left(\varphi - \gamma k\right) \Big]
\; .
\end{equation}
The full Ponzano-Regge amplitude thus reads as a trigonometric integral,
\begin{equation}
Z^{\PR}_{\cT}(\lambda,\tau)
=
\f{2^{N_t(N_x-2)}}{(\lambda \tau)^{N_t N_x}}
\f{1}{\sh^{2N_t}\left(\f{1}{2}N_x X_{\tau}\right)}
\f{1}{\pi} \int_{0}^{2\pi} \text{d} \varphi \sin^{2}(\varphi)
\prod_{k=0}^{N_x-1} \f{1}{2\left[ \ch(N_t X_{C_k}) - \cos\left(\varphi - \gamma k\right) \right]}
\;.
\end{equation}
The poles can be directly read from this expression:
\be
\varphi_{k}^{\pm} = \gamma k \mp i N_t X_{C_k}
\,.
\ee
As illustrated in figure \ref{fig:generalpoles}, these are not linear anymore as in the decoupled case with  $\tau^2 + \lambda^2 = 1$. Moreover the dependency on $k$ through the coefficients $C_k$ lifts the degeneracy of the poles due to the GCD $K$. Now, whatever the GCD between the shift $N_{\gamma}$ and the number of spatial nodes $N_{x}$, the poles in $\vphi$ remain simple and there is no degeneracy. This allows for a direct evaluation of the integral by residue, summarized in the following proposition.
\begin{figure}[h!]
\begin{subfigure}[t]{.4\linewidth}
\includegraphics[scale=0.45]{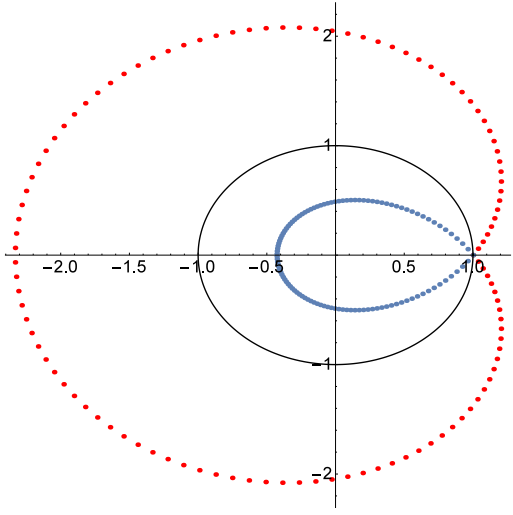}
\caption{Poles for a shift $N_{\gamma}=1$ and critical couplings $(\lambda,\tau)=(0.6,0.4)$}
		
\end{subfigure}
\hspace{8mm}
\begin{subfigure}[t]{.4\linewidth}
\includegraphics[scale=0.45]{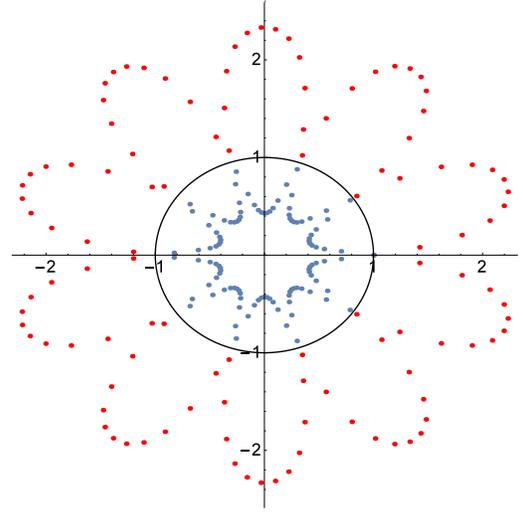}
\caption{Poles for a shift $N_{\gamma}=11$ and critical couplings $(\lambda,\tau)=(0.6,0.4)$}
		
\end{subfigure}
\begin{subfigure}[t]{.4\linewidth}
\includegraphics[scale=0.45]{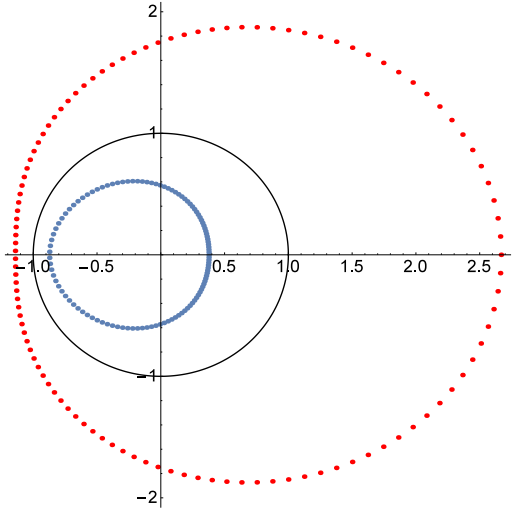}
\caption{Poles for a shift $N_{\gamma}=1$ and non-critical couplings $(\lambda,\tau)=(1.6,0.4)$}
		
\end{subfigure}
\hspace{8mm}
\begin{subfigure}[t]{.4\linewidth}
\includegraphics[scale=0.45]{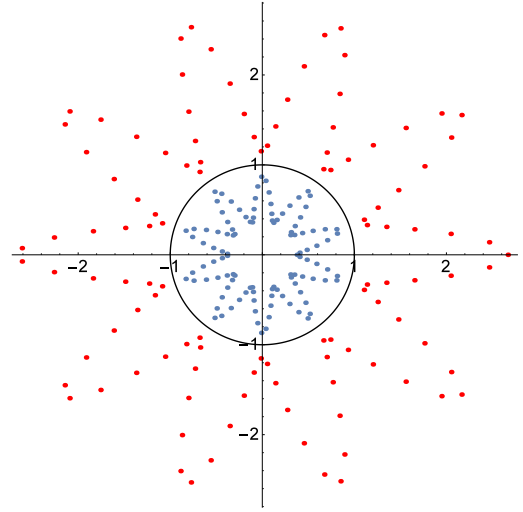}

\caption{Poles for  $N_{\gamma}=11$ and non-critical couplings $(\lambda,\tau)=(1.6,0.4)$}
		
\end{subfigure}

	\caption{Plot of the poles in the complex planes, $z_{k}^{\pm} = e^{i \gamma k} e^{\pm N_t X_{C_k}}$, for a square lattice on the twisted torus determined by the numbers of nodes $N_{t}=1$ and $N_{x}=111$. The red dots represent $z^{+}_{k}$ while the blue ones $z^{-}_{k}$. They are related by inversion with respect to the unit circle  (in black), since they have inverse modulus for equal phases. For critical couplings, on the top plots, $z=1$ is a pole, which leads to a divergent integral, while there is no pole on the unit circle for non-critical couplings as shown on the bottom plots.
	}
	\label{fig:generalpoles}
\end{figure}
\begin{prop}
The Ponzano-Regge partition function on the twisted solid torus for a coherent spin network boundary state on the square lattice, with couplings satisfying $\lambda^2 + \tau^2 \ne 1$, can be written as a finite sum with a clear pole structure in the twist angle $\gamma$:
\begin{equation}
\label{Zgeneral}
Z^{\PR}_{\cT}(\lambda,\tau)
=
\f{2^{N_t(N_x-2)}}{(\lambda \tau)^{N_t N_x}} \f{1}{\sh^{2N_t}\left(\f{1}{2}N_x X_{\tau^{-1}}\right)} \sum_{n=0}^{N_x-1} \f{\sin^2\left(\gamma n + i N_t X_{C_k} \right)}{\sh(N_t X_{C_n})}  \prod_{\substack{k=0 \\ k \neq n}}^{N_x-1} \f{1}{2\big[\ch(N_t X_{C_k}) - \cos(\gamma(n-k)+i N_t X_{C_n}\big]} \; .
\end{equation}
\end{prop}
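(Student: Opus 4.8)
The plan is to reduce the remaining trigonometric integral to a contour integral on $\U(1)$ and evaluate it by residues, exactly as in the proof of Proposition~\ref{propK1}. The only structural novelty is that for $\lambda^{2}+\tau^{2}\neq 1$ the coefficients $C_{k}$ are pairwise distinct, so the poles in the class angle $\varphi$ are all \emph{simple} and non-degenerate regardless of $\mathrm{GCD}(N_{\gamma},N_{x})$: the degeneracy that produced the Legendre polynomial in Proposition~\ref{propK} is lifted, and correspondingly the residue sum cannot be resummed into a closed product, which is why the answer stays a finite sum.

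Concretely, I would start from the trigonometric form
\begin{equation*}
Z^{\PR}_{\cT}(\lambda,\tau)
=
\f{2^{N_t(N_x-2)}}{(\lambda \tau)^{N_t N_x}}
\f{1}{\sh^{2N_t}\left(\f{1}{2}N_x X_{\tau^{-1}}\right)}
\f{1}{\pi} \int_{0}^{2\pi} \text{d} \varphi\, \sin^{2}(\varphi)
\prod_{k=0}^{N_x-1} \f{1}{2\left[ \ch(N_t X_{C_k}) - \cos\left(\varphi - \gamma k\right) \right]}
\end{equation*}
and substitute $z=e^{i\varphi}$, so that $\rd\varphi=\rd z/(iz)$ and $\sin^{2}\varphi=-(z^{2}-1)^{2}/(4z^{2})$, using the factorization $2[\ch(N_tX_{C_k})-\cos(\varphi-\gamma k)]=-\tfrac{e^{-i\gamma k}}{z}(z-z_{k}^{+})(z-z_{k}^{-})$ with $z_{k}^{\pm}=e^{i\gamma k}e^{\pm N_t X_{C_k}}$. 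Collecting the explicit powers of $z$ shows that, for $N_{x}>2$, the origin is a regular point, so the only singularities of the integrand inside the unit circle are the poles $z_{k}^{-}$: indeed, with our convention $\mathfrak{Re}(X_{a})\ge 0$, and away from criticality where $\mathfrak{Re}(X_{C_k})>0$ strictly for all $k$, one has $|z_{k}^{-}|<1<|z_{k}^{+}|$ and no pole on the contour.

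The residue theorem then gives $Z^{\PR}_{\cT}$ as twice the sum of the residues of the substituted integrand at the inner poles $z_{n}^{-}$, $n=0,\dots,N_x-1$, the factor $1/i$ from $\rd\varphi$ cancelling $2\pi i/\pi$. For each simple pole at $z_{n}^{-}=e^{i(\gamma n+iN_tX_{C_n})}$ I would convert the $z$-expressions back to trigonometric ones using $(z^{2}-1)^{2}/z^{2}\big|_{z_{n}^{-}}=-4\sin^{2}(\gamma n+iN_tX_{C_n})$, $z_{n}^{-}-z_{n}^{+}=-2e^{i\gamma n}\sh(N_tX_{C_n})$, and $(z_{n}^{-}-z_{k}^{+})(z_{n}^{-}-z_{k}^{-})=-z_{n}^{-}e^{i\gamma k}\,2[\ch(N_tX_{C_k})-\cos(\gamma(n-k)+iN_tX_{C_n})]$ for $k\neq n$. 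Substituting these, the powers of $z_{n}^{-}$ and the accumulated phases $e^{i\gamma k}$ cancel against the overall factors $(z_{n}^{-})^{N_x}$ and $e^{i\gamma\sum_{k}k}$ produced in the first step, and one recovers exactly formula~\eqref{Zgeneral}.

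There is no conceptual hurdle here, so the main obstacle is purely the bookkeeping: tracking the signs $(-1)^{N_{x}}$, the phase $e^{i\gamma N_x(N_x-1)/2}$, and the powers of $z_{n}^{-}$ so that they recombine into the clean numerator $\sin^{2}(\gamma n+iN_tX_{C_n})$ and the denominator product in \eqref{Zgeneral}. The only genuinely delicate point is delimiting the region of couplings $(\lambda,\tau)$ for which $\mathfrak{Re}(X_{C_k})>0$ for every $k$, i.e. for which the integral converges and no $z_{k}^{\pm}$ lies on $\U(1)$; outside that region \eqref{Zgeneral} is to be understood as the analytic continuation in $(\lambda,\tau)$, exactly as in the decoupled case treated earlier.
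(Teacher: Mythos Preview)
Your proposal is correct and follows essentially the same route as the paper: the paper's proof is precisely the substitution $z=e^{i\varphi}$, identification of the simple poles $z_k^{\pm}=e^{i\gamma k}e^{\pm N_t X_{C_k}}$, and application of the residue theorem to the poles $z_n^{-}$ inside the unit circle, with the same conversion back to trigonometric form. Your additional remarks on why the poles are simple (the $C_k$ are pairwise distinct when $\lambda^2+\tau^2\ne1$, lifting the $K$-fold degeneracy) and on the region of validity in $(\lambda,\tau)$ are useful clarifications that the paper leaves implicit.
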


\begin{proof}

To compute the integral by residue, we  perform the change of variable $z = e^{i \varphi}$ as before. The poles in the complex plane are then $z_{k}^{\pm} = e^{i \gamma k} e^{\pm N_t X_{C_k}}$, with
\begin{equation}
	e^{X_{C_k}} = C_k + \sqrt{C_k^2-1} \; .
\end{equation}
The factorized form of the amplitude is exactly the same as before and the integration over the unit circle $\U(1)$ gets contributions only from the poles $z_{k}^{-}$ whose norm is lesser than 1:
\beq
Z^{\PR}_{\cT}(\lambda,\tau)
&=&
\f{2^{N_t(N_x-2)}}{(\lambda \tau)^{N_t N_x}} \f{1}{\sh^{2N_t}\left(\f{1}{2}N_x X_{\tau}\right)} \f{1}{\pi} \int_{U(1)} \f{\text{d}z}{i} \f{-1}{4}\f{(z^2-1)^2}{z^2} z^{N_x-1} \prod_{k=0}^{N_x-1} \f{-e^{i \gamma k}}{(z-z_k^+)(z-z_k^-)} 
\nn\\
&=&
\f{2^{N_t(N_x-2)}}{(\lambda \tau)^{N_t N_x}} \f{1}{\sh^{2N_t}\left(\f{1}{2}N_x X_{\tau}\right)} 2 \sum_{n=0}^{N_x-1} \f{-1}{4}\f{((z^{-}_n)^2-1)^2}{(z^{-}_n)^2} \f{- e^{i \gamma n}}{z^{-}_n - z^{+}_n} \prod_{\substack{k=0 \\ k\neq n}}^{N_x-1} \f{- z_{n}^{-} e^{i \gamma k}}{(z_n^- -z_k^+)(z_n^- -z_k^-)}
\,,\nn
\eeq
which gives the announced result.
%

\end{proof}

We do not know how to re-sum over the pole label $n$ due to non-linearity of the poles in the complex plane, but the expression \eqref{Zgeneral} has a clear physical interpretation as an expansion over the poles in the twist angle $\gamma$ of the Ponzano-Regge partition function. This allows a clear comparison with other formulas for the partition function of 3D quantum gravity, such as the BMS character formula which we discuss below in the next section.

\subsection{The Continuum Limit: Recovering the BMS character from the Ponzano-Regge model}

We would like to compare the continuum limit of our Ponzano-Regge partition function formulas with the computation of the partition function of asymptotic flat 3D gravity for a solid twisted  torus as a BMS character as derived in  \cite{Barnich:2015mui}:
\begin{equation}
Z_{\textrm{3D flat gravity}}^{\textrm{asympt}}[\gamma] = \chi_{\text{BMS}}[\gamma]
\,,\qquad\textrm{with}\quad
\chi_{BMS}[\gamma]= e^{S_0}\prod_{k=2}^{\infty} \f{1}{2-2\cos(\gamma k)}\,,
\end{equation}
where $\chi_{BMS}$ is a character of the Bondi-Metzner-Sachs (BMS) group.
Here, $S_{0}$ is the on-shell action. For a real twist angle $\gamma\in\R$, the infinite product above is just a formal definition. It actually has poles at every rational twist angle, $\gamma\in2\pi\Q$. The BMS character is in fact well-defined on the upper complex plane, and can be identified as the Dedekind $\eta$-function up to a factor. 

This formula was also re-derived as an asymptotic limit of  Regge calculus for discretized 3D gravity in \cite{Bonzom:2015ans}. It was further recovered as leading order of a WKB approximation of the Ponzano-Regge model with LS spin network boundary states in \cite{PRholo2}. These LS spin networks consist in coherent intertwiners with fixed spins and allowed the study of the Ponzano-Regge amplitude by a saddle point approximation in the large spin regime (i.e. large edge lengths), which gives:
\begin{equation}
Z^{\textrm{1-loop PR}}
\,=\,
e^{i S_{0}} \left[\sum_{n=1}^{N_x-1} A_{n}\,4\sin^2\f{\gamma n}{2}\right]\, \prod_{k=1}^{\f{N_x-1}{2}} \f{1}{2-2 \cos(\gamma k)} \; .
\end{equation}
We refer to this leading order behavior of the Ponzano-Regge amplitude as 1-loop by similarity with quantum field theory calculations.
The twist angle $\gamma$ is defined from the discrete shift as $2\pi N_{\gamma}/N_{x}$.
The integer $n$ labels the saddle point and describes geometrically the winding number of the embedding of the boundary 2-torus (in the spatial direction).
The coefficient $A_{n}$ is a rather complicated pre-factor which depends on the winding number but is independent from the twist parameter.
%
%
There are two main differences with the BMS formula:
\begin{itemize}

\item There is a sum over winding numbers $n$ for the spatial geometry of the torus. This is a non-perturbative effect.

\item The product over modes start at $k=1$ instead of $k=2$. As explained in \cite{Oblak:2015sea}, this corresponds to a massive mode, i.e. working with a massive BMS representation. This is also a non-perturbative effect.

\end{itemize}
It is nevertheless possible to recover the BMS character formula as a truncation of $Z^{\textrm{1-loop PR}}$ in the asymptotic limit $N_{x}\rightarrow\infty$. In fact, the factor $\sin^2{\gamma n}/{2}$ for each winding number $n$ kills a factor in the product determinant, leading to
\begin{equation}
Z^{\textrm{1-loop PR}}
\,=\,
\sum_{n=1}^{N_x-1}A_{n}  \prod_{\substack{k=1 \\ k\neq n}}^{\f{N_x-1}{2}} \f{1}{2-2 \cos(\gamma k)} \; .
\end{equation}
If we truncate the sum to a trivial winding number $n=1$, which is the only allowed embedding of the ``spacetime'' cylinder in Euclidean 3D space, then this leads back as wanted to a finite version of the character for a massless irreducible representation of the BMS group. We can then take the asymptotic limit $N_{x}\rightarrow\infty$  corresponding to a torus whose radius grows to infinity.
The difficulty in making sense of this limit is that, in the Ponzano-Regge context, we necessarily deal with a real twist angle defined by the combinatorics of the boundary lattice and that the infinite product is ill-defined in that case.

\medskip

The present work considerably improves on this previous 1-loop approximation of the Ponzano-Regge partition function. First, we compute the exact partition function for a quantum boundary state, with no saddle point approximation or large spin limit. Second, we naturally get an imaginary shift, which regularizes the amplitude.
Let us look into this exact Ponzano-Regge amplitude in more details.

Starting with the decoupling ansatz, with couplings satisfying $\lambda^{2}+\tau^{2}=1$, it is convenient to write the Ponzano-Regge amplitude in a factorized form distinguishing the amplitude pre-factor and the sum over poles,
\be
Z^{\PR}_{\cT}= \cA\,z^{\PR}_{\cT}
\qquad\textrm{with}\quad
\cA=
\f{2^{N_t(N_x-2)}}{(\lambda \tau)^{N_t N_x}}
\f{1}{\sh^{2N_t}\left(\f{1}{2}N_x X_{\tau^{-1}}\right)}\,,
\ee
and the reduced Ponzano-Regge amplitude:
\be
z^{\PR}_{\cT}
=
\sum_{n=0}^{N_x-1} \f{\sin^2\left(\gamma n + i N_t X_{\lambda^{-1}} \right)}{\sh(N_t X_{\lambda^{-1}})}
\prod_{k=1}^{N_x-1} \f{1}{2(\ch(N_t X_{\lambda^{-1}}) - \cos(k\gamma+i N_t X_{\lambda^{-1}}))}
\ee
The pre-factor $\cA$ plays the role of the exponential of the on-shell action, while the reduced Ponzano-Regge $z^{\PR}_{\cT}$ encodes the whole pole structure of the amplitude. 
It is clear that we would like to take the critical limit $\lambda\rightarrow 1$, and thus $\tau\rightarrow0$, to recover the BMS character formula.
Geometrically, this corresponds to keeping the time span of the cylinder fixed while sending its radius to $\infty$, thus looking at the asymptotic limit in space while remaining at finite time.
In this limit, $X_{\lambda^{-1}}\rightarrow0$, the amplitude pre-factor diverges due to the factor $\tau^{-N_{x}N_{t}}$, while the  only divergent term in the reduced Ponzano-Regge amplitude is the factor $\sh(N_t X_{\lambda^{-1}})$:
\be
\sh(N_t X_{\lambda^{-1}})\,z^{\PR}_{\cT}
\underset{\substack{\lambda\rightarrow1 \\\tau\rightarrow0}}\sim
\left(\sum_{n=1}^{N_x-1} \sin^2\gamma n\right) \,
\prod_{k=1}^{N_x-1} \f{1}{2 - 2\cos (\gamma k )}
=
\sum_{n=1}^{N_x-1}\cos^{2}\f{\gamma n}2 \prod_{\substack{k=1 \\ k\neq n}}^{\f{N_x-1}{2}} \f{1}{2-2 \cos(\gamma k)}
\,,
\ee
where we recognize once again the term  with trivial winding number $n=1$ as a BMS character for a massless representation.
From this perspective, it is intriguing that the non-perturbative sum over winding numbers creates the equivalent of a mode $k=1$,
which would otherwise be absent from the semi-classical calculation, leading to a BMS character for a massive representation.

\medskip

To be more precise, we should take the infinite refinement limit $N_{t}\rightarrow \infty$ and describe the relative scaling of the coupling $\lambda$ to ensure this limit. If $\lambda=1-\eps$ with $\eps\rightarrow 0$, then $X_{\lambda^{-1}}\sim\sqrt{2\eps}$. So if $\lambda$ goes to 1 faster than $N_{t}^{-2}$, i.e. if $1-\lambda\propto 1/N_{t}^{2+\sigma}$ with $\sigma>0$, then we are clearly in the case described above.

However, there is a critical regime of the scaling limit, considering $\lambda=1-\alpha^{2} /2N_{t}^{2}$ for $\sigma=0$, in which case the couplings still converge to their critical value, but then we keep a finite imaginary shift in the poles:
\be
z^{\PR}_{\cT}
\quad\underset{N_{t}\rightarrow\infty}{\overset{\lambda=1-\f{\alpha^{2}}{2N_{t}^{2}}}{\sim}}\quad
\sum_{n=1}^{N_x-1} \f{\sin^2(\gamma n+i\alpha)}{\sh \alpha} \,
\prod_{k=1}^{N_x-1} \f{1}{2(\ch\alpha - \cos (\gamma k +i\alpha))}
\,.
\ee
Ignoring the sum over winding modes, and taking the limit $N_{x}\rightarrow\infty$, we obtain a deformation of the BMS character regularized by this complex shift:
\be
\chi_{\textrm{reg}}[\gamma,\alpha]=
\prod_{k=1}^{N_x-1} \f{1}{2(\ch\alpha - \cos (\gamma k +i\alpha))}
\,.
\ee
Interestingly, comparing to the derivation of the BMS character presented in \cite{Oblak:2015sea}, this amounts to the usual BMS character evaluated on the super-rotation defined by the twist angle $\gamma$ composed with a Bogoliubov transformation\footnotemark~ mixing the negative mode $-k$ with the positive mode $+k$.
\footnotetext{%
The interpretation in terms of Bogoliubov transformations begets the question of unitarity. In fact, the 2D boundary state that we are considering does not correspond to an initial or final canonical state, but describes the geometry of the ``time-like'' boundary and thereby determines the flow of time and the evolution of the bulk geometry. The natural question in that context is which boundary states ensures a unitary evolution for the geometry of the disk.
}
Indeed, the superrotation $R_{\gamma}$ acts on supermomenta $v_{k}$ as $v_{k}\mapsto e^{i\gamma k}$, while we define a Bogoliubov transformation pairing the modes propagating in opposite directions:
\be
B_{\alpha}=\mat{cc}{e^{-\alpha} & 2i\sh \f\alpha 2 \\ 2i\sh \f\alpha 2 & e^{+\alpha}} \qquad \textrm{acting on 2-vectors}\quad \mat{c}{v_{k}\\ v_{-k}}
\,.
\ee
The resulting character $\chi_{BMS}(R_{\gamma}B_{\alpha})$ is the inverse determinant of $\id-R_{\gamma}B_{\alpha}$ acting on the vector space of supermomenta:
\be
\chi_{BMS}(R_{\gamma}B_{\alpha})
=
\prod_{k\ge1}
\f1{\det_{k}(\id_{2}-R_{\gamma}B_{\alpha})}
=
\prod_{k\ge1} \f{1}{2(\ch\alpha - \cos (\gamma k +i\alpha))}
\,.
\ee

\medskip

Now, although the details of the previous limits and calculations do not go through in the general case beyond the decoupled ansatz, when $\lambda^2+\tau^2\ne 1$, the logic of going to critical couplings to recover the BMS character formula still applies. Indeed, in the general case, the amplitude pre-factor does not change at all, but the reduced Ponzano-Regge amplitude acquires a less regular pole structure:
\be
z^{\PR}_{\cT}[\lambda,\tau]
=
\sum_{n=0}^{N_x-1} \f{\sin^2\left(\gamma n + i N_t X_{C_k} \right)}{\sh(N_t X_{C_n})}  \prod_{\substack{k=0 \\ k \neq n}}^{N_x-1} \f{1}{2\big[\ch(N_t X_{C_k}) - \cos(\gamma(n-k)+i N_t X_{C_n}\big]}
\,,
\ee
with the coefficients 
\be
C_k = \lambda^{-1}\left[1 + \f{\tau^2 + \lambda^2 - 1}{2 - 2 \tau \cos \left(\f{2\pi}{N_x}k\right)}\right]
\,.
\nn
\ee
Choosing critical real positive couplings, with $\lambda+\tau=1$, it appears that the coefficients $C_{k}$ go to 1 for very low momenta $k\ll N_{x}$.  Thus the coefficients $X_{C_{k}}$ go to 0, and the general formula above once more reduces to a BMS character.
By periodicity, for very large momenta, for $k$ very close to $N_x$, the coefficient $C_k$ also approaches 1. The situation is however different for large momenta in the intermediate range,  with $k/N_x$ kept fixed in $]0,1[$, for example for $k\sim N_{x}/2$. In this case,  this approximation fails and we have to deal with the more complicated structure of the roots. Since the product over modes involves all the modes from $k=1$ to $k=N_{x}-1$, one can not ignore the effects of large momenta on the partition function, but could consider them as deep quantum gravity effects.

\medskip

Overall, we have discussed how to take the continuum limit of the boundary lattice. For critical couplings,  the exact Ponzano-Regge amplitude reads as a finitely truncated BMS character, which leads back to the BMS character---possibly with a complex shift regularizing it---in the infinite refinement limit $N_{x},N_{t}\rightarrow\infty$. Since the exact formulas and poles can be written explicitly, it is natural to wonder if  the  Ponzano-Regge amplitude can be identified as the character of a modified BMS group for finite boundary. Such a deformed BMS group would be identified as  the symmetry group of the Ponzano-Regge model with the 2D discrete boundary geometry, and would probably be the reason behind the simplification of the Ponzano-Regge partition function. This will be investigated in future work.

\section{Conclusion \& Outlook}

In the present work, we aimed at pursuing the exploration of holographic dualities in non-perturbative 3D quantum gravity  initiated in \cite{PRholo1,PRholo2,Short,Riello:2018anu}. Using the Ponzano-Regge model as the definition of a discrete quantum path integral for 3D Euclidean gravity (for vanishing cosmological constant) as a topological quantum field theory (TQFT), we focus on the case of 3d manifolds with boundaries in order to explore the correspondence between bulk geometry and boundary observables.
We are interested in how the Ponzano-Regge amplitude depends on the quantum state of the boundary geometry, having in mind both the reconstruction of bulk observables from boundary correlations and the study of  the renormalization of the 3D quantum gravity path integral under coarse-graining and refinement of the boundary state.

We focussed on the exact and explicit computation of the Ponzano-Regge path integral for a solid torus with a twist. Its 2D boundary state is defined as  a coherent spin network state living on a square lattice with the twist combinatorially  encoded in the periodicity conditions of the lattice. Such coherent boundary states define superpositions of spins, geometrically interpreted as edge lengths, which are controlled by couplings. We chose homogeneous, yet anisotropic, couplings, $\lambda$ and $\tau$ respectively controlling the lengths in the two directions of the boundary, referred to as the space and time directions despite the Euclidean signature. We showed that critical values of the couplings, satisfying $|\lambda|+|\tau|=1$, are poles of the series defining the the coherent spin network states and describe scale-invariant classes of 2D discrete geometries.

In the case of a 3-ball with a 2-sphere boundary, the Ponzano-Regge amplitude amounts to the evaluation of the boundary state on the trivial connection. In the case of the solid torus, the Ponzano-Regge amplitude still projects the boundary states on flat connections, but it retains the information of the $\SU(2)$ holonomy wrapping around the non-contractible cycle of the solid torus. This is the only remaining information about the bulk geometry. The Ponzano-Regge path integral is then the integral over the class angle $\vphi$ of that holonomy of the evaluation of the spin network state on the corresponding connection. We managed to reduce the whole combinatorial complexity, with $\SU(2)$ coherent intertwiners living on the nodes of the boundary lattice, and write the Ponzano-Regge amplitude as a trigonometric integral over the class angle. This integral appears in a factorized form with a clear structure of the poles in the class angle, allowing to write it as a complex contour integral and to compute it by residues. At the end of the day, the poles---resonances---in $\vphi$  become poles in the twist angle $\gamma$ of the Ponzano-Regge path integral.
The resulting exact formula for the Ponzano-Regge amplitude takes the form of a sum over an integer $n$ of the inverse of a factorized trigonometric polynomial whose roots are poles generically depending on $n$. This integer $n$ is interpreted as the winding number of the torus' spatial geometry.

Focussing on the $n=1$ term of this sum leads to a massless BMS character in the continuum limit (or thermodynamic limit) when the lattice size grows to infinity and the couplings become critical. This shows how to recover the standard QFT calculations of the 3D quantum gravity path integral from the non-perturbative Ponzano-Regge path integral. Moreover, we would like to underline two points on which the Ponzano-Regge amplitude goes beyond the  QFT calculations.
First, we focus on boundary state couplings satisfying $\lambda^2+\tau^2=1$, for which the boundary modes propagating in the space direction decouple from the modes propagating in the time direction, thus simplifying the pole structure and the Ponzano-Regge formula. In this case, the non-perturbative sum over winding number actually leads back to a (sum of) BMS character(s) corresponding to {\it massive} representations. In the general non-decoupled case, the pole structure is more complicated and the fit with the BMS character only works for low momenta. The behavior of the high momentum modes should be investigated further.

Second, the Ponzano-Regge amplitude can be seen as a truncated and regularized BMS character. Indeed, the BMS character is expressed in terms of the Dedekind $\eta$-function, which is ill-defined for real twist angle and actually truly defined on the upper complex plane. The Ponzano-Regge amplitude obviously truncates this function, due to the finite size of the lattice, but also introduces a complex shift of the twist angle. This can be made precise in the asymptotic limit where we identified a critical regime where the couplings approach their critical values with a precise scaling in the lattice size. The Ponzano-Regge amplitude then leads to a BMS character regularized by a imaginary shift, which can be interpreted as the trace of the super-rotation defined by the twist angle composed with a Bogoliubov transformation. The geometrical meaning of this Bogoliubov transformation and its implication for the unitarity of the amplitudes remains to be understood.

An interesting remark is nevertheless that the imaginary shift of the BMS character which we derived is analogous to -- though subtly different from -- the AdS regularization of the BMS character, which also corresponds to taking a  finite-radius regularization of the Minkowski space-time but, this time, accompanied by a corresponding modification of the gravitational dynamics. Indeed, as shown in \cite{Giombi:2008vd,Oblak:2016eij}, working out the 3D quantum gravity path integral on AdS leads to a correction of the form $[2- 2\cos k(\gamma + i \epsilon)]$ where the ratio between the time interval and the AdS curvature radius $\epsilon = T/\ell_{\textrm{AdS}}$ gives the imaginary shift of the poles similarly to the role of the parameter of the Bogoliubov transformation. Understanding whether this is meaningful or just a coincidence deserves further investigation, for instance by extending our calculations to take into account a non-vanishing cosmological constant and computing the Turaev-Viro amplitudes with 2D boundary spin network states.

Beside pushing further the investigation of the toroidal Ponzano-Regge amplitude, its various limits and regimes, possible extension to higher genus topologies or generalization to Lorentzian signature (using $\SU(1,1)$ holonomies and representations) and to a non-vanishing cosmological constant (using a $q$-deformed gauge group $\cU_{q}(\SU(2))$), an interesting direction of future investigation is the  duality of 3D quantum gravity with the 2D Ising on the boundary. Indeed it was proven in \cite{Bonzom:2015ova} that the Ponzano-Regge path integral on the 3-ball with a coherent spin network state on the boundary 2D sphere was related to the inverse of the 2D Ising partition function defined on the boundary. In particular the poles of the Ponzano-Regge amplitude correspond to critical Ising couplings. It would be enlightening to extend this duality formula to the toroidal topology, as suggested in \cite{costantino}, and provide the poles and resonances described in the present work with an interpretation in terms of the 2D Ising model on the twisted torus, as studied for instance in \cite{Matsuura:2015jda}.

Finally, it is intriguing to obtain such an exact and simple formula for the Ponzano-Regge amplitude, which does not reflect the complexity of the boundary graph. This hints towards the possibility of integrable structures, as uncovered for certain boundary  states with fixed spins on the boundary \cite{PRholo1,Riello:2018anu}. And the fact that we get a BMS character in the asymptotic limit begs the question of the symmetry group of the boundary state for a finite lattice: is the Ponzano-Regge model on the solid cylinder  with a discrete boundary invariant under a non-trivial symmetry group, such as a BMS-like symmetry group or a discrete current algebra? Understanding the boundary symmetry would not only shed light on the calculations of the Ponzano-Regge path integral but also on the structure of edge modes for 3D gravity, from the deep quantum regime to the semi-classical continuum regime.

\section*{Acknowledgement}
This work is supported by Perimeter Institute for Theoretical Physics. Research at Perimeter Institute is supported by the Government of Canada through Industry Canada and by the Province of Ontario through the Ministry of Research and Innovation.

\appendix

\section{Boundary linear forms on the 2-torus}
\label{app:torus}

In order to better understand the role of the cut cycle $\cC$ in the boundary amplitudes on the 2-torus, as derived from the Ponzano-Regge partition function in proposition \ref{prop:torus}, it is interesting to consider how to vary and choose the cut on the torus.
Indeed, instead of  the original initial and final boundary circles of the 2D cylinder which we glue onto each other to form the torus, we can choose other closed lines as cuts on the torus. Considering the original circle as the initial time slice $t=0$ and sliding it along the torus to define a periodic Morse function or time coordinate $t$ on the torus, and using the angle $x$ around the circle as space coordinate, this defines a 2D flat coordinate system $(t,x)$ on the torus. One can perform a modular transform $(t,x)\mapsto (u,y)=(at+bx,ct+dx)$ with $(a,b,c,d)\in\Z^{4}$ satisfying $ad-bc=1$ and use the new circle $u=0$ as cut on the torus. This new cut, $at+bx=0$ in the original coordinates, wraps around the torus with winding numbers $a$ and $b$ around the two original cycles of the torus.

Let us now consider the simplest graph drawn on the 2-torus, defined by a single node at the origin and the two lines $t=0$ and $x=0$, corresponding to a 2D cellular decomposition of the torus with a single face, as drawn on figure \ref{app:fig:simple_disc_torus}.
We denote by $\psi$ a quantum  boundary state on this graph. The corresponding wave-functions $\psi(g_1,g_2)$depends on two $\SU(2)$ group elements, $g_{1}$ and $g_{2}$ defining the two holonomies along the links $1$ and $2$ wrapping around the two cycles of the torus.
Following and naturally extending the result of proposition  \ref{prop:torus}, the boundary linear form defining the quantum gravity amplitude with boundary state $\psi$ counts the number of times $p_{1}$ and $p_{2}$ that the graph links wind cross the cut $\cC$ (or equivalently the number of times the cycle $\cC$ winds around the two links), thus yielding:
\begin{equation}
Z_{\cC_{(p_{1},p_{2})}}\big{[}
\psi\big{]}
=
\int \rd g \; \psi(g^{p_{1}},g^{p_{2}}) \; .
\end{equation}
It is clear that these linear forms with the two holonomies given in terms of a single group element, $(g_{1},g_{2})=(g^{p_{1}},g^{p_{2}})$, are integrals over different sectors of the moduli space of flat connections on the 2-torus.
\begin{figure}[!htb]

	\begin{subfigure}[t]{.3\linewidth}	
			\begin{tikzpicture}[scale=.7,>=stealth]
		\coordinate (A0) at (-2,-2);
		\coordinate (A1) at (-2,2);
		\coordinate (A2) at (2,2);
		\coordinate (A3) at (2,-2);
		
		\coordinate (O) at (0,0.4);
		\coordinate (O1) at (0,2);
		\coordinate (O2) at (0,-2);
		\coordinate (O3) at (2,0.4);
		\coordinate (O4) at (-2,0.4);
		
		\draw[dotted] (A0)--(A1)--(A2)--(A3)--cycle;
		
		\draw[->-=0.5,color=red] (O1)-- node[pos=0.5,left]{$1$} (O); \draw [->-=0.5,color=red] (O)-- node[pos=0.7,left]{$1$} (O2); 
		\draw[->-=0.5,color=blue] (O)-- node[pos=0.5,above]{$2$} (O3); \draw [->-=0.5,color=blue] (O4)-- node[pos=0.3,above]{$2$} (O); 
		
		\node at (O) {$\bullet$};
		
		\coordinate (C1) at (-2,-1);
		\coordinate (C2) at (2,-1);
		\coordinate (C11) at (-1,-2);
		\coordinate (C22) at (-1,2);
		\coordinate (C111) at (-2,-1.7);
		\coordinate (C222) at (2,0.4);
		
		\draw[color=orange] (C1)--(C2);
		\end{tikzpicture}
		\caption{The cut $\cC$ is purely spatial.\newline}
	\end{subfigure}
	\hspace{4mm}
	\begin{subfigure}[t]{.3\linewidth}
			\begin{tikzpicture}[scale=.7,>=stealth]
		\coordinate (A0) at (-2,-2);
		\coordinate (A1) at (-2,2);
		\coordinate (A2) at (2,2);
		\coordinate (A3) at (2,-2);
		
		\coordinate (O) at (0,0.4);
		\coordinate (O1) at (0,2);
		\coordinate (O2) at (0,-2);
		\coordinate (O3) at (2,0.4);
		\coordinate (O4) at (-2,0.4);
		
		\draw[dotted] (A0)--(A1)--(A2)--(A3)--cycle;
		
		\draw[->-=0.5,color=red] (O1)-- node[pos=0.5,left]{$1$} (O); \draw [->-=0.5,color=red] (O)-- node[pos=0.7,left]{$1$} (O2); 
		\draw[->-=0.5,color=blue] (O)-- node[pos=0.5,above]{$2$} (O3); \draw [->-=0.5,color=blue] (O4)-- node[pos=0.3,above]{$2$} (O); 
		
		\node at (O) {$\bullet$};
		
		\coordinate (C1) at (-2,-1);
		\coordinate (C2) at (2,-1);
		\coordinate (C11) at (-1,-2);
		\coordinate (C22) at (-1,2);
		\coordinate (C111) at (-2,-1.7);
		\coordinate (C222) at (2,0.4);
		
		\draw[color=orange] (C11)--(C22);
		\end{tikzpicture}
		\caption{The cut $\cC$ is purely temporal.\newline}
	\end{subfigure}
	\hspace{4mm}
	\begin{subfigure}[t]{.3\linewidth}
			\begin{tikzpicture}[scale=.7,>=stealth]
		\coordinate (A0) at (-2,-2);
		\coordinate (A1) at (-2,2);
		\coordinate (A2) at (2,2);
		\coordinate (A3) at (2,-2);
		
		\coordinate (O) at (0,0.4);
		\coordinate (O1) at (0,2);
		\coordinate (O2) at (0,-2);
		\coordinate (O3) at (2,0.4);
		\coordinate (O4) at (-2,0.4);
		
		\draw[dotted] (A0)--(A1)--(A2)--(A3)--cycle;
		
		\draw[->-=0.5,color=red] (O1)-- node[pos=0.5,left]{$1$} (O); \draw [->-=0.5,color=red] (O)-- node[pos=0.7,left]{$1$} (O2); 
		\draw[->-=0.5,color=blue] (O)-- node[pos=0.5,above]{$2$} (O3); \draw [->-=0.5,color=blue] (O4)-- node[pos=0.3,above]{$2$} (O); 
		
		\node at (O) {$\bullet$};

		\draw[color=orange] (A0)--(A2);
		\end{tikzpicture}
		\caption{The cut $\cC$ winds once around each of the two cycles.}
	\end{subfigure}

	
\caption{
The choice of cut cycle $\cC$ entering the boundary amplitude on the torus (opposite dotted lines are identified) on the
simplest boundary graph on the 2-torus, with a single node and two links in red and blue.}
\label{app:fig:simple_disc_torus}
\end{figure}
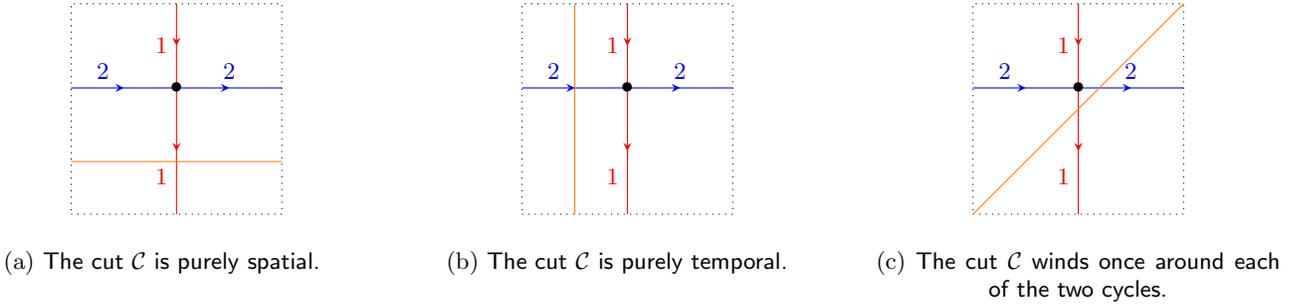

Three examples of cut cycles are depicted in figure \ref{app:fig:simple_disc_torus} and lead to three different, in the sense of non-gauge equivalent, boundary amplitudes:
\begin{align}
	Z_{\cC_{({1,0})}}\big{[}
\psi\big{]} = \int \rd g \; \psi(g,\id) \qquad \text{for the choice (a),} \nn\\
	Z_{\cC_{({0,1})}}\big{[}
\psi\big{]} = \int \rd g \; \psi(\id,g) \qquad \text{for the choice (b),} \nn\\
	Z_{\cC_{({1,1})}}\big{[}
\psi\big{]} = \int \rd g \; \psi(g,g) \qquad \text{for the choice (c),}	 \nn
\end{align}
In this paper, we have always considered the case (a) and derived this boundary amplitude from the Ponzano-Regge path integral for a solid torus obtained by filling the time slices into disks thus turning the line $t=0$ into a contractible loop and leaving the line $x=0$ as a non-contractible cycle.
Similarly, the case (b)  corresponds to a solid torus with $t=0$ as non-contractible cycle.
The last choice  does not have an obvious 3D completion, i.e. an obvious choice of bulk such that the Ponzano-Regge partition function leads back to this boundary amplitude.
\begin{figure}[!htb]
	\begin{subfigure}[t]{.44\linewidth}
	\begin{tikzpicture}[scale=.8,>=stealth]
	\coordinate (A0) at (-2,-2);
	\coordinate (A1) at (-2,2);
	\coordinate (A2) at (2,2);
	\coordinate (A3) at (2,-2);
	
	\coordinate (O) at (0,0.4);
	\coordinate (O1) at (0,2);
	\coordinate (O2) at (0,-2);
	\coordinate (O3) at (2,0.4);
	\coordinate (O4) at (-2,0.4);
	
	\draw[dotted] (A0)--(A1)--(A2)--(A3)--cycle;
	
	\draw[->-=0.5,color=red] (O1)-- node[pos=0.5,left]{$1$} (O); \draw [->-=0.5,color=red] (O)-- node[pos=0.7,left]{$1$} (O2); 
	\draw[->-=0.5,color=blue] (O)-- node[pos=0.5,above]{$2$} (O3); \draw [->-=0.5,color=blue] (O4)-- node[pos=0.3,above]{$2$} (O); 
	
	\node at (O) {$\bullet$};
	
	\coordinate (C1) at (-2,-1);
	\coordinate (C2) at (2,-1);
	\coordinate (C11) at (-1,-2);
	\coordinate (C22) at (-1,2);
	\coordinate (C111) at (-2,-2);
	\coordinate (C222) at (2,0);
	\coordinate (C333) at (-2,0);
	\coordinate (C444) at (2,2);

	\draw[color=orange] (C111)--(C222); \draw[color=orange] (C333)--(C444);	
	\end{tikzpicture}
	\caption{Boundary graph on the cylinder with only one rectangle face. The temporal link (1) is drawn in red while the spatial link (2) is drawn in blue. The cut cycle associated to the winding $(p_{1},p_{2})=(2,1)$ is drawn in orange.}
	\label{app:fig:p2_q1_simple_disc}
	\end{subfigure}
	\hspace{15mm}
	\begin{subfigure}[t]{.42\linewidth}
	\begin{tikzpicture}[scale=.8,>=stealth]
	\coordinate (A0) at (-2,-2);
	\coordinate (A1) at (-2,2);
	\coordinate (A2) at (2,2);
	\coordinate (A3) at (2,-2);
	
	\coordinate (O) at (0,0);
	\coordinate (O4) at (-2,2);
	\coordinate (O3) at (2,-2);
	
	\coordinate (O1) at (-1,2);
	\coordinate (O2) at (1,-2);
	\coordinate (O1a) at (1,2);
	\coordinate (O1b) at (2,0);
	\coordinate (O2a) at (-2,0);
	\coordinate (O2b) at (-1,-2);
	
	\draw[dotted] (A0)--(A1); \draw[dotted, orange,thick] (A1)--(A2); \draw[dotted] (A2)--(A3); \draw[dotted,orange,thick] (A3)--(A0);
	
	\draw[->-=0.5,color=red] (O1)--(O); \draw [->-=0.5,color=red] (O)--(O2); 
	\draw [->-=0.5,color=red] (O1a)--(O1b); \draw [->-=0.5,color=red] (O2a)--(O2b);

	\draw[->-=0.5,color=blue] (O)-- (O3); \draw [->-=0.5,color=blue] (O4)-- (O); 
	
	\node at (O) {$\bullet$};

	\end{tikzpicture}
	\caption{Boundary geometry of the torus associated associated to the Ponzano-Regge amplitude $Z_{\cC_{(2,1)}}[\psi]$.	The winding of the cycle $\cC$ translates into a Dehn twist on the temporal (in red) link.}
	\label{app:fig:resolved_lattice_dehn}
	\end{subfigure}
	\caption{}
\end{figure}

From our perspective on these boundary linear forms as derived from the 3D quantum gravity path integral, it is nevertheless crucial to understand  the geometry of the bulk that ``fills'' the corresponding 2-torus geometry.
To this purpose, we can unwind the cut and re-draw the torus using the cut $\cC$ as baseline. More precisely, we simply describe the torus in the coordinates $(u,y)$ obtained by a modular transformation. Instead of working with a tilted cut $\cC$, the cut now defines the horizontal time slice while the two graph links are now tilted. We illustrate this mapping on figure \ref{app:fig:p2_q1_simple_disc} in a simple case with $(p_{1},p_{2})=(2,1)$. This shows that this choice of cut corresponds to a  Dehn twist  of angle $2\pi$ around the link $1$. In general, the cycle with winding $(p_{1},p_{2})$ corresponds to a Dehn twist of angle $(p-1) 2\pi$ around the temporal cycle and angle $(q-1)2\pi$ around the spatial cycle.

We see that the boundary graph is now tilted and made of parallelograms. Thus, in order to take into account Dehn twists and large diffeomoprhisms, we should consider not only straight square lattices but tilted parallelogram lattices. One should extend these 2D lattices into bulk cellular decompositions in order to check that the Ponzano-Regge partition function effectively leads to the expected boundary amplitudes, and then compute these boundary amplitudes for coherent spin network states in order to study the behavior of the Ponzano-Regge model under large boundary diffeomorphisms.
We postpone such a study to future work.

\section{Stationary spin configurations from Critical couplings}
\label{app:criticalcouplings}

In this appendix, we solve for the spins $L_{x}$ and $T_{t}$ on the square lattice in terms of the couplings $\tau_{t,x}$ and $\lambda_{t,x}$ at criticality.
The stationarity point equations give the couplings,
\begin{align}
|\tau_{t,x}| =
\f1{\sqrt{(1+A)(1+C)}}
\,,\qquad
|\tau_{t+1,x}| =
\f1{\sqrt{(1+B)(1+D)}}
\\
|\lambda_{t,x}| =
\f{\sqrt{AB}}{\sqrt{(1+A)(1+B)}}
\,,\qquad
|\lambda_{t,x+1}| =
\f{\sqrt{CD}}{\sqrt{(1+C)(1+D)}}
\,,
\end{align}
in terms of the spins ratio $A$, $B$ $C$ and $D$ defined by
\begin{equation}
A:=\f{L_{x}}{T_{t}}
\,,\quad
B:=\f{L_{x}}{T_{t+1}}
\,,\quad
C:=\f{L_{x+1}}{T_{t}}
\,,\quad
D:=\f{L_{x+1}}{T_{t+1}}
\,,\quad
\textrm{satisfying the relation}\,\,
AD=BC
\,.
\end{equation}
These relations impose a polynomial condition on the couplings:
\begin{equation}
1+|\lambda_{t,x}|^2|\lambda_{t,x+1}|^2+|\tau_{t,x}|^2|\tau_{t+1,x}|^2
=
2|\lambda_{t,x}||\lambda_{t,x+1}||\tau_{t,x}||\tau_{t+1,x}|
+|\lambda_{t,x}|^2+|\lambda_{t,x+1}|^2+|\tau_{t,x}|^2+|\tau_{t+1,x}|^2
\,,
\label{app:eq:critical}
\end{equation}
ensuring that the couplings and the spin ratios carry the same number of independent degrees of freedom.

Our goal is to invert these relations and express the ratios  $A$, $B$, $C$ and $D$ in terms of  the couplings.
We introduce the more compact notation:
\begin{equation*}
\alpha = |\tau_{t,x}| \; , \quad \beta = |\tau_{t+1,x}| \; \quad x = |\lambda_{t,x}| \; , \quad y=|\lambda_{t,x+1}| \;.
\end{equation*}
We start by expressing the ratios $B$, $C$ and $D$ in terms of $A$ and the couplings.
Using the first two relations for the coouplings, it is immediate to derive the expression for $B$ and $C$,
\begin{equation}
B = \f{x^2(1+A)}{A-(1+A)x^2}
\,,\qquad
C = \f{1-\alpha^2 (1+A)}{\alpha^2 (1+A)}
\end{equation}
The remaining ratio $D$, can similarly be expressed in terms of $B$ or $C$ using the two remaining relations:
\begin{equation}
D = \f{y^2 (1+C)}{C-(1+C)y^2} = \f{1-\beta^2 (1+B)}{\beta^2 (1+B)} \; .
\end{equation}
Inserting the expression of $B$ and $C$ into the previous formulas, we obtain a constraint on $A$:
\begin{equation}
 \f{y^2}{(1-\alpha^2-y^2)-\alpha^2 A} = D =\f{(1-x^2-\beta^2) A - x^2}{\beta^2 A} \,,
\end{equation}
which gives a second order equation in $A$:
\begin{equation}
\alpha^2\left(1-x^2-\beta^2\right) A^2 - 2 \alpha x( \alpha x + \beta y)A + x^2(1-\alpha^2-y^2) = 0\,,
\end{equation}
where we have used the constraint \eqref{app:eq:critical} to simplify the linear term in $A$.
Further using the relation \eqref{app:eq:critical}, it turns out that  the discriminant  vanishes and that this equation is a perfect square giving a unique solution for $A$:
\begin{equation*}
A = x \f{\alpha x + \beta y}{\alpha(1-x^2 - \beta^2)}
\,.
\end{equation*}
Finally inserting this solution for $A$ into  the expressions of $B$, $C$ and $D$, we get:
\begin{equation*}
B = \frac{x \left(-\alpha  \beta ^2+\alpha +\beta  x y\right)}{\beta  \left(x^2 (-y)+\alpha  \beta  x+y\right)} \; .
\end{equation*}
\begin{equation*}
C = \frac{x^2+\alpha  \beta  x y-\left(\alpha ^2-1\right) \left(\beta ^2-1\right)}{\alpha  \left(\alpha  \left(\beta ^2-1\right)-\beta  x y\right)} \; .
\end{equation*}
\begin{equation*}
D = \frac{\left(1-\beta ^2-x^2\right) \left(x (\alpha  x+\beta  y)-\alpha x^2\right)}{\beta ^2 x (\alpha  x+\beta  y)} \; .
\end{equation*}
One can check that these are indeed solutions of the starting equations and naturally satisfy the  condition $AD = BC$.

Furthermore, one could have derived the solutions A,B,C and D without assuming the condition \eqref{app:eq:critical}. The second order equation on A would then return two solutions, $A_{\pm}$, leading to  two branches of solutions $(A,B,C,D)_{\pm}$. Finally imposing the condition $AD=BC$  on the solutions then leads back to the constraint \eqref{app:eq:critical} on the critical couplings.

\section{Ponzano-Regge determinants and bi-variate Chebyshev polynomials}
\label{app:chebychev}

We would like to point out that the inverse determinants $\det\big[Q_{\omega,k}(\varphi)\big]^{-1}$ can be interpreted in terms of  the generating function to a generalization of the Chebyshev polynomial to bivariate polynomials. Indeed, these determinants can be written as polynomials of traces of $\SU(2)$ group elements:
\begin{equation}
\label{eq:det_SU(2)_element_expression}
\det\big[Q_{\omega,k}(\varphi)\big]
= 
1 + \lambda^2 + \tau^2 - \tau \tr(G_{k}^\tau) - \lambda \tr(G_{\omega,k}^\lambda) +\lambda\tau \tr(G_k^\tau G_{\omega,k}^\lambda)
\,,
\end{equation}
in terms of the pair of $\SU(2)$ group elements defined as:
\be
G_{k}^\tau
=
e^{i \f{2\pi}{N_x}k \sigma_z}
\,,\qquad
G_{\omega,k}^\lambda
=
e^{\f i{N_{t}}(\varphi+ {2\pi}\omega - \gamma k) \sigma_x} \; .
\ee
%
%
This expression actually allows to easily generalize the formulas for the Ponzano-Regge amplitude with toroidal boundary coherent spin networks to a tilted square lattice, built of parallelograms instead of rectangles. Indeed, we introduce  the spinor $|\theta \ra$ associated to the vector $\vec{u}_{\theta} = (\cos(\theta),0,\sin(\theta))$. The case $\theta = 0$ corresponds to the rectangular case we work on in this paper. Replacing $|+\ra$ in the coherent intertwiners by $|\theta\ra$ corresponds to replacing $G^{\lambda}_{\omega,k}$ in the determinant $\det\big[Q_{\omega,k}(\varphi)\big]$ of the Hessian matrices by
\begin{equation*}
G^{\lambda}_{\omega,k}(\theta)
=
e^{\f i{N_{t}}(\varphi+ {2\pi}\omega - \gamma k) \vec{\sigma}.\vec{u}_{\theta}}
\; .
\end{equation*}

Here, we would to focus on the function of two couplings $\lambda,\tau$ and two group elements $G,\tG$:
\be
\cF_{2}(\lambda,\tau;G,\tG)
:=
\f1{1 + \lambda^2 + \tau^2 - \tau \tr(G) - \lambda \tr(\tG) +\lambda\tau \tr(G\tG)}
\,.
\ee
When one of the coupling vanishes, say $\lambda=0$, this function depends only on one group element, here $G$, and it reduces to the generating function of the  Chebyshev polynomial of the second kind $U_{n}$:
\be
\cF_{1}(\tau;G)
:=
\cF_{2}(0,\tau;G,\tG)
=
\f1{1 + \tau^2 - \tau \tr(G)}
=
\f1{1 + \tau^2 -2 \tau\cos\theta}
=
\sum_{n\in\N} U_n(\cos\theta) \tau^{n}\,,
\ee
where $\theta$ is the class angle of the group element $G$ and the  Chebyshev polynomial $U_n(\cos\theta)$ turns out to be the character of $G$ in the irreducible representation of spin $j$:
\be
\tr G=2\cos\theta
\,,\qquad
 U_n(\cos\theta)
 =
 \f{\sin (n+1)\theta}{\sin\theta}
 =
 \chi_{\f n2}(G)
 \,.
\ee
The orthogonality of the  Chebyshev polynomial  simply amounts to the orthonormality of the characters with respect to the Haar measure on $\SU(2)$:
\be
\int_{\SU(2)} \rd G\, \chi_{\f n2}(G) \chi_{\f {n'}2}(G)=\delta_{n,n'}\,.
\ee
From this perspective, we define  bi-variate polynomials generated by $\cF_{2}$:
\be
\cF_{2}(\lambda,\tau;G,\tG)
=
\sum_{n,m\in\N} U_{n,m}(G,\tG) \tau^{n}\lambda^{m}
\,.
\ee
It is straightforward to check that the polynomials $U_{n,m}(G,\tG)$ are orthogonal,
\be
\int_{\SU(2)^{\times 2}} \rd G\,\rd\tG\,\,
U_{n,m}(G,\tG)U_{n',m'}(G,\tG)
\propto \delta_{n,n'} \delta_{m,m'}
\,.
\nn
\ee
It would be interesting to understand further the properties of those orthogonal polynomials, which intertwiner they define between $G$ and $\tG$ and so on.
We leave this detailed study to future investigation.


\bibliographystyle{bib-style}
\bibliography{PonzanoRegge}

\end{document}